\documentclass[a4paper,10pt,reqno]{amsart}




\usepackage{hyperref}
\usepackage{amssymb}
\usepackage{eucal}
\usepackage{eufrak}
\usepackage{a4wide}
\usepackage{graphics}
\usepackage{graphicx}
\usepackage{epsfig}
\usepackage{multicol}
\usepackage{xypic}
\usepackage{amsmath}
\usepackage{wasysym}
\usepackage{dsfont}
\usepackage{float}
\usepackage{fancyhdr}
\usepackage{subfigure}

\usepackage{tikz}

\usepackage{algorithmicx}
 \usepackage{algorithm}
 \usepackage{algpseudocode}

\usepackage[T1]{fontenc}
\usepackage[utf8]{inputenc}

\usepackage{eufrak}
\usepackage{graphics}
\usepackage{graphicx}
\usepackage{epsfig}
\usepackage{multicol}
\usepackage{xypic}
\usepackage{amsmath}
\usepackage{color}

\newtheorem{proposition}{Proposition}[section]
\newtheorem{theorem}{Theorem}[section]
\newtheorem{lemma}{Lemma}[section]

\theoremstyle{definition}
\newtheorem{definition}{Definition}[section]
\newtheorem{remark}{Remark}[section]

\newtheorem{corollary}[definition]{Corollary}

\newcommand{\lp}{\left(}
\newcommand{\rp}{\right)}
\newcommand{\lc}{\left\{}
\newcommand{\rc}{\right\}}
\newcommand{\der}{\partial}

\newcommand{\bra}{\langle}
\newcommand{\ket}{\rangle}

\newcommand{\R}{\mathbb{R}}      
\newcommand{\T}{\mathbb{T}}
\newcommand{\N}{\mathbb{N}}      
\newcommand{\F}{\mathbb{F}}

\newcommand{\Flder}{\rightarrow}

\usepackage{amssymb}

\newcommand{\proa}{A^*G \mbox{$\;$}_{\tau^*} \kern-3pt\times_\alpha
G \mbox{$\;$}_\beta \kern-3pt\times_{\tau^*} A^*G}

\begin{document}

\title{Fractional damping through restricted calculus of variations}

\keywords{Continuous/discrete Lagrangian and Hamiltonian modelling, fractional derivatives, fractional dissipative systems, fractional differential equations, variational principles, variational integrators.}
\subjclass[2010]{26A33,37M99,65P10,70H25,70H30.}

\maketitle

\begin{center}
{\Large Fernando Jim\'enez and Sina Ober-Bl\"obaum}
\end{center}

\vspace{0.6cm}

\begin{center}
{\bf Department of Engineering Science, University of Oxford}\\
{\bf Parks Road, Oxford OXI 3PJ, UK}\\
\vspace{0.2cm}
$\,$e-mail: fernando.jimenez@eng.ox.ac.uk\\
$\,\,\,\,$e-mail: sina.ober-blobaum@eng.ox.ac.uk
\end{center}

\begin{abstract}
We deliver a novel approach towards the variational description of Lagrangian mechanical systems subject to fractional damping by establishing  a restricted Hamilton's principle. Fractional damping is a particular instance of non-local (in time) damping, which is ubiquitous in mechanical engineering applications. The restricted Hamilton's  principle relies on including fractional derivatives to the state space, the doubling of curves (which implies an extra mirror system) and the restriction of the class of varied curves. We will obtain the correct dynamics, and will show rigorously that the extra mirror dynamics is nothing but the main one in reversed time; thus, the restricted Hamilton's principle is not adding extra physics to the original system. The price to pay, on the other hand, is that the fractional damped dynamics is only a sufficient condition for the extremals of the action. In addition, we proceed  to discretise the new principle. This discretisation provides a set of numerical integrators for the continuous dynamics that we denote Fractional Variational Integrators (FVIs). The discrete dynamics is obtained upon the same ingredients, say doubling of discrete curves and restriction of the discrete variations.  We display the performance of the FVIs, which have local truncation order 1, in two examples. As other integrators with variational origin, for instance those generated by the discrete Lagrange-d'Alembert principle, they show a superior performance tracking the dissipative energy, in opposition to direct (order 1) discretisations of the dissipative equations, such as explicit and implicit Euler schemes. 
\end{abstract}

\section{Introduction}
The problem of obtaining a variational description of mechanical systems subject to external forces has been present in the literature for long time. Concretely, this article is concerned with the variational nature of the dynamical equations of a Lagrangian system subject to what we call {\it fractional damping}, namely:
\begin{equation}\label{IntEqCont}
\frac{d}{dt}\lp\frac{\der L}{\der\dot x}\rp-\frac{\der L}{\der x}=-\rho D^{\alpha}_{-}D^{\alpha}_{-}x,
\end{equation}
where $x$ is the dynamical variable, $L(x,\dot x)$ is a Lagrangian function, $\rho\in\R_+$ and $D^{\alpha}_{-}$, with $\alpha\in[0,1]$, is the retarded fractional derivative defined below \eqref{FracDer} (as usual, the dot notation represents the time derivative). The right hand side is non-local in time, and therefore the previous equation represents a particular example of non-locally damped mechanical system (we shall focus on Lagrangians given by kinetic minus potential energy), which are ubiquitous in mechanical engineering applications  (see \cite{AdWoo,Rudin,Sri} and references therein). In addition, \eqref{IntEqCont} also involves the linear damping case, since $D^{2\alpha}_{-}x= \dot x$ when $\alpha= 1/2$, where we have used \eqref{Aditive},\eqref{Alpha1}. This case is a pardigmatic example of non-Lagrangian/Hamiltonian system, i.e.~ its dynamics cannot be obtained from the Hamilton's principle \cite{AbMa} given a Lagrangian/Hamiltonian function, as proven in \cite{Bauer}. A remarkable approach towards the variational modelling of external forces, due to its phenomenological versatility, is the Lagrange-d'Alembert principle \cite{Bloch}, where the variation of the action is set equal to the work done by external forces under virtual displacements. As happens with the usual Hamilton's principle, the Lagrange-d'Alembert's can be performed  in Lagrangian and Hamiltonian fashions, both related to each other by the Legendre transformation \cite{AbMa}. It is important to note, however, that Lagrange-d'Alembert principle is not variational in the pure sense on the word, circumstance that we try to avoid in our apprach.

There are previous approaches to generate a purely variational principle by including fractional derivatives into the state space of the considered Lagrangian functions;  more concretely, we base our work on \cite{Cresson, Riewe}. These references delivered promising results, but they present some drawbacks. Say:  in  \cite{Riewe} Riewe  did not take into account the asymmetric integration by parts of the fractional derivatives \eqref{IntegrationByParts}; Cresson and collaborators, in \cite{Cresson,CressonBook}, designed the so-called {\it asymmetric embeddings} to surpass that issue, but such objects are unclear from the point of view of calculus of variations. This last approach implies the doubling of curves in the state space; i.e. it is necessary to add an extra $y$-mirror system, which is natural when treating externally forced system in a variational way, see \cite{Bateman, Galley, DaRo} and references therein.

Taking as well into account a doubled space of curves, in this work we establish a novel approach to surpass the asymmetry issue while obtaining the correct fractional damped dynamics \eqref{IntEqCont}, embodied in a particular restriction of the usual calculus of variations. Namely, we will apply the Hamilton's principle over a properly designed (in a geometric way) Lagrangian function, but we shall restrict the class of varied curves. We denote this principle by {\it restricted Hamilton's principle}. Out of this principle, we shall obtain both \eqref{IntEqCont} and the dynamics of the $y$-mirror system. However, we shall rigorously prove, both in Lagrangian and Hamiltonian settings, that the latter is nothing but the former in reversed time, which implies that the restricted Hamilton's principle is {\it not} adding extra physics to the studied system. As it is made clear below, the price to pay for applying this new principle is that the dynamical equations are not anymore necessary and sufficient conditions for the extremals of the action (as in the classical Hamilton's principle), but only sufficient.  Finally, given that the linear damping case is also included, we elucidate the connection between the restricted Hamilton's and Lagrange-d'Alembert principles.
\medskip

During the last years, the discretisation of variational principles (Hamilton's, Lagrage-d'A\-lem\-bert and others \cite{MaWe,ObJuMa}) has been of high interest in numerical integration theory, since such discretisations generate numerical integrators that approximate the systems' dynamics faithfully both from dynamical and geometrical perspectives, presenting as well a superior behaviour in the integration of the energy in the long-term. Thus, we proceed to discretise the restricted Hamilton's principle, which leads to the discrete counterpart of \eqref{IntEqDisc},  i.e.:
\begin{equation}\label{IntEqDisc}
D_1L_d(x_k,x_{k+1})+D_2L_d(x_{k-1},x_{k})=h\,\rho\,\Delta_{-}^{\alpha}\Delta_{-}^{\alpha}x_k,
\end{equation}
where $x_k$ is the discrete dynamical variable, $L_d(x_k,x_{k+1})$ the discrete Lagrangian and $\Delta_{-}^{\alpha}x_k$ is a given approximation of $D^{\alpha}_{-}x(t_k)$ for a time grid $t_k$ with time step $h$ (other approaches to discretise fractional mechanical problems can be found in \cite{Bastos, Cresson1, CreGre, CressonBook} and references therein). The elements of the discrete principle are analogous to the continuous' ones: the discrete dynamical equations are sufficient conditions for the extremals of the discrete action, and we have an extra $y$-mirror system (which again is proven to be the $x$-system in reversed discrete time). The initialisation of \eqref{IntEqDisc} as a numerical integrator of \eqref{IntEqCont} (which we will call Fractional Variational Integator, FVI) requires an initial condition which is based on the discrete Legendre transform linking the Lagrangian and Hamiltonian approaches. With that aim, we develop the restricted Hamilton's principle also in a Hamiltonian fashion (both in continuous and discrete scenarios), where, as in the usual discrete mechanics \cite{MaWe}, the momentum mathching condition will be a crucial element when constructing numerical integrators approximating the Hamiltonian version of \eqref{IntEqCont}. As in the continuous side, we elucidate the connection between the restricted Hamilton's and Lagrange-d'Alembert principles in the discrete side. Finally, we test the generated integrators with respect to well-known examples. With that aim, we use a particular benchmark approximation of the solution of inhomogeneous fractional-differential equations \cite{Vinagre}, based on a matrix discretisation of the fractional-differential operators and, afterwards, the resolution of the discrete dynamics as a matricial-algebraic equation \cite{Podlubny}.
\medskip

\hspace{-0.6cm} The paper is organised as follows: In \S\ref{Preli} we introduce the basics on fractional derivatives (since they will be used as a tool, we do not put emphasis on the mathematical specifics, and refer the interested reader to the proper literature); moreover we present the variational description of Lagrangian/Hamiltonian systems (Hamilton's principle) both in the continuous and discrete settings, as well as externally forced systems (Lagrange-d'Alembert principle). \S\ref{CRHP} is devoted to develop the continuous restricted Hamilton's principle, which is stated in its Lagrangian version in Theorem \ref{ConsFraELLag}. The relationship between the $x$-system and the $y$-system in reversed time is given in Proposition \ref{InvTime}. The preservation of a particular presymplectic form by \eqref{IntEqCont} is established in Proposition \ref{GeomPreservation}, for which the existence and uniqueness of solutions of \eqref{IntEqCont} is necessary, whose proof is developed in Appendix. The Hamiltonian version of the restricted Hamilton's principle is given in Theorem \ref{ConsFraELHam}. \S\ref{DRHP} accounts for the discrete restricted Hamilton's principle, established in Proposition \ref{FullLagProposition} and Theorem \ref{Theo:DiscEqs}. The relationship between the $x$-system and the $y$-system in reversed discrete time is given in Proposition \ref{InvDiscTime}. The connection between the discrete restricted Hamilton's principle and the discrete Lagrange-d'Alembert's is given in Corollary \ref{Corolario}, for which Lemma \ref{LemmaDisc} is essential. The definition of the discrete Legendre transform, needed for the initialisation of the FVIs, is established in Definition \ref{Def:DLT}, whereas the momentum matching condition and the  Hamiltonian version of the FVIs is given in Proposition \ref{MMPropo}. Finally, in \S\ref{Simu} we display the performance of the FVIs in two examples.

\section{Preliminaries}\label{Preli}

\subsection{Fractional derivatives}\label{FracDerPreli}

Let $\alpha\in[0,1]\subset\R$ and $f:[a,b]\subset\R\Flder\R$, $[a,b]\subset \R$, a smooth function. The fractional derivatives are defined by
\begin{small}
\begin{subequations}\label{FracDer}
\begin{align}
& _{_{RL}}D^{\alpha}_{-}f(t)=\frac{1}{\Gamma(1-\alpha)}\frac{d}{dt}\int_a^t(t-\tau)^{-\alpha}f(\tau)d\tau,\,\,\, _{_{RL}}D^{\alpha}_{+}f(t)=-\frac{1}{\Gamma(1-\alpha)}\frac{d}{dt}\int_t^b(\tau-t)^{-\alpha}f(\tau)d\tau,\label{FracDer:RL}\\
&\,\, _{_{C}}D^{\alpha}_{-}f(t)=\frac{1}{\Gamma(1-\alpha)}\int_a^t(t-\tau)^{-\alpha}\dot f(\tau)d\tau,\,\,\,\quad\,\,\, _{_{C}}D^{\alpha}_{+}f(t)=-\frac{1}{\Gamma(1-\alpha)}\int_t^b(\tau-t)^{-\alpha}\dot f(\tau)d\tau,\label{FracDer:Ca}
\end{align}
\end{subequations}
\end{small}
for $t\in$\,[a,b], and $\Gamma(z)$ the gamma function, in their Riemann-Liouville and Caputo expressions. These two kinds of fractional derivatives are related to each other, indeed it can be shown that:
\begin{equation}\label{Relation}
_{_{RL}}D^{\alpha}_{-}f(t)=\frac{-1}{\Gamma(1-\alpha)}\frac{f(a)}{(t-a)^{\alpha}}+_{_{C}}D^{\alpha}_{-}f(t),\,\,\,\,\,\,\, _{_{RL}}D^{\alpha}_{+}f(t)=\frac{1}{\Gamma(1-\alpha)}\frac{f(b)}{(b-t)^{\alpha}}+ _{_{C}}D^{\alpha}_{+}f(t).
\end{equation}
In this work, the function $f$ will represent the dynamical variable of a mechanical system. Thus, it is interesting to remark that we can always set $f(a)=0$, i.e. the system is at the origin of coordinates at initial time, and consequently both retarded Riemman-Liouville and Caputo versions are equivalent, according to the first equation in \eqref{Relation}\footnote{This is particularly apparent for $\alpha=0$. Namely: $_{_{RL}}D^{0}_{-}f(t)=f(t)$, whereas $_{_{C}}D^{0}_{-}f(t)=f(t)-f(a)$.}, for dynamical purposes. We will assume henceforth that $D_{\pm}^{\alpha}$ is determined by the Riemann-Liouville expression \eqref{FracDer:RL}, unless otherwise stated. Further relevant properties are:

\begin{subequations}
\begin{align}
\int_a^bf(t)D^{\alpha}_{\sigma}g(t)dt&=\int_a^b\lp D^{\alpha}_{-\sigma}f(t)\rp g(t)dt,\,\, \sigma=\lc-,+\rc, \label{IntegrationByParts}\\
D_{\sigma}^{\alpha}D_{\sigma}^{\beta}&=D_{\sigma}^{\alpha+\beta},\quad 0\leq\alpha,\beta\leq1,\label{Aditive}\\
D_{-}^{\alpha}= d/dt,\quad &  D_{+}^{\alpha}= -d/dt,\quad\mbox{when}\quad \alpha= 1. \label{Alpha1}
\end{align}
\end{subequations}
For the proof of these properties and more details on fractional derivatives, we refer to \cite{TheBook}.

\subsection{Continuous Lagrangian and Hamiltonian description of mechanics}\label{ContinuousFramework}

In this subsection we shall consider the configuration space of the studied systems as a finite dimensional smooth manifold $Q$. Moreover, $TQ$ and $T^*Q$ will denote its tangent and cotangent bundles, locally represented by coordinates $(q,\dot q)$ and $(q,p)$, respectively. For more details on the geometric formulation of mechanics we refer to \cite{AbMa}.

\subsubsection{Conservative systems}

Given a Lagrangian function $L:TQ\Flder\R$, the associated action functional in the time interval $t\in[a,b]\subset\R$ for a smooth curve $q:[a,b]\Flder Q$ is defined by $S(q)=\int_a^bL(q(t),\dot q(t))\,dt$.  Through Hamilton's principle, i.e. the true evolution of the system $q(t)$ with fixed endpoints $q(a)$ and $q(b)$ will satisfy

\begin{equation}\label{HPrinciple}
\delta\int_a^bL(q(t),\dot q(t))\,dt=0,
\end{equation}
we obtain the Euler-Lagrange equations via calculus of variations: 

\begin{equation}\label{ELeqs}
\frac{d}{dt}\lp\frac{\der L}{\der\dot q}\rp-\frac{\der L}{\der q}=0.
\end{equation}
Define the Legendre transformation: 
\begin{equation}\label{LegendreTrans}
\F L:TQ\Flder T^*Q; \quad (q,\dot q)\mapsto \left(q,p=\frac{\der L}{\der\dot q}\right).
\end{equation}
If  \eqref{LegendreTrans} is a global diffeomorphism we say that it is {\it hyperregular}, and we call the Lagrangian function {\it hyperregular}. Under the assumption of hyperregularity (which we will take throughout the article), via the Legendre transformation we can define the  Hamiltonian function $H:T^*Q\Flder\R$:
\begin{equation}\label{HamFunc}
H(q,p):=\bra p,\dot q\ket-L(q,\dot q),
\end{equation}
where $\bra\cdot,\cdot\ket:T^*Q\times TQ\Flder\R$ is the natural pairing. From the definition of the Hamiltonian function \eqref{HamFunc} it follows  that $L(q,\dot q)=\bra p,\dot q\ket-H(q,p)$. Furthermore, from \eqref{HPrinciple} we can write the stationary condition of the action functional in a Hamiltonian version, i.e.
\begin{equation}\label{HHPrin}
\delta\int_a^b\,\lc\bra p(t),\dot q(t)\ket-H(q(t),p(t))\rc\,dt=0.
\end{equation}
Again, using calculus of variations we obtain the Hamilton equations:

\begin{equation}\label{Heqs}
\dot q=\frac{\der H}{\der p},\quad\,\,\, \dot p=-\frac{\der H}{\der q}.
\end{equation}

We shall consider that the physical energy of the system is given by the Hamiltonian function \eqref{HamFunc}. It is easy to check that $dH(q,p)/dt=0$ under \eqref{Heqs}, showing that the system is conservative. Equivalently, the Lagrangian energy
\begin{equation}\label{LagEner}
E(q,\dot q):=\Big<\frac{\der L}{\der\dot q},\dot q\Big>-L(q,\dot q),
\end{equation}
is invariant under \eqref{ELeqs}, i.e. $dE(q,\dot q)/dt=0.$

\subsubsection{Forced systems} \label{ContForcedSystems}

First we model the external forces (which might include damping, dragging, etc.) through the mapping:
\begin{equation}\label{ExForces}
f_L:TQ\Flder T^*Q.
\end{equation}
The forced dynamics is provided by the Lagrange-d'Alembert principle \cite{Arn, Bloch}: the true evolution of the system $q(t)$ between fixed points $q(a)$ and $q(b)$ will satisfy
\begin{equation}\label{LdAPrinL}
\delta\int_a^bL(q,\dot q)\,dt+\int_a^b\bra f_L(q,\dot q),\delta q\ket\,dt=0, 
\end{equation}
where $\delta q\in TQ$, which provides the forced Euler-Lagrange equations:
\begin{equation}\label{ForcedEL}
\frac{d}{dt}\lp\frac{\der L}{\der\dot q}\rp-\frac{\der L}{\der q}=f_L(q,\dot q).
\end{equation}
Now, the Lagrangian energy of the system \eqref{LagEner} is not preserved by \eqref{ForcedEL}.  In particular $dE(q,\dot q)/dt=\big< f_L(q,\dot q),\dot q\big>$, showing that this kind of systems is not conservative.

The dual version of the Lagrange-d'Alembert principle \eqref{LdAPrinL} is naturally obtained through the Legendre transformation \eqref{LegendreTrans}. The dual external forces  $f_H:T^*Q\Flder T^*Q$ are defined by $f_H:=f_L\circ \lp\F L\rp^{-1}$ (we recall that we are assuming $L$ hyperregular), while the dynamics is established by
\begin{equation}\label{LdAPrinH}
\delta\int_a^b\lc\bra p,\dot q\ket-H(q,p)\rc\,dt+\int_a^b\bra f_H(q, p),\delta q\ket\,dt=0, 
\end{equation}
yielding the forced Hamilton equations: 
\begin{equation}\label{ForcedHam}
\dot q=\frac{\der H}{\der p},\quad\,\,\, \dot p=-\frac{\der H}{\der q}+f_H(q,p).
\end{equation}
Obviously, the Hamiltonian function \eqref{HamFunc} is not preserved under \eqref{ForcedHam}. In particular $dH(q,p)/dt=\big<f_H(q,p),\frac{\der H}{\der p}\big>$.

\subsection{Discrete Lagrangian and Hamiltonian description of mechanics}\label{DiscreteFramework}

\subsubsection{Conservative systems}

The construction of  the discrete version of mechanics relies on the substitution of $TQ$ by the Cartesian product $Q\times Q$ (note that these two spaces contain the same amount of information at local level) \cite{MaWe,MoVe}. The continuous curves $q(t)$ will be replaced by discrete ones, say $\gamma_d=\lc q_k\rc_{0:N}:=\lc q_0,q_1,...,q_N\rc\in Q^{N+1}$, where $N\in \N$ and the power $N+1$ indicates the Cartesian product of $N+1$ copies of $Q$. Given an increasing sequence of times $\lc t_k=a+hk\,|\,k=0,...,N\rc\subset \R$, with $h=(b-a)/N$, the points in $\gamma_d$ will be considered as an approximation of the continuous curve at time $t_k$, i.e. $q_k\simeq q(t_k).$ Defining the discrete Lagrangian $L_d:Q\times Q\Flder\R$  as an approximation of the action integral in one time step, say $L_d(q_k,q_{k+1},h)\simeq \int_{t_k}^{t_k+h}L(q(t),\dot q(t))\,dt$ (we shall omit the $h$ dependence of the discrete Lagrangian unless needed), we can establish the so called discrete action sum:
\begin{equation}\label{AcSum}
S_d(\gamma_d)=\sum_{k=0}^{N-1}L_d(q_k,q_{k+1}).
\end{equation}
Applying the Hamilton's principle over \eqref{AcSum}, i.e. considering variations of $\gamma_d$ with fixed endpoints $q_0=q(a),\, q_{N}=q(b)$ and extremizing $S_d$, we obtain the discrete Euler-Lagrange equations
\begin{equation}\label{DEL}
D_1L_d(q_k,q_{k+1})+D_2L_d(q_{k-1},q_{k})=0,\quad k=1,...,N-1,
\end{equation}
where $D_1$ and $D_2$ denote the partial derivative with respect to the first and second variables, respectively. If $L_d$ is regular, i.e.~the matrix $\big[D_{12}L_d\big]$ is invertible, the equations \eqref{DEL} define a discrete Lagrangian flow $F_{L_d}:Q\times Q\Flder Q\times Q$; $(q_{k},q_{k+1})\mapsto (q_{k+1},q_{k+2})$, which  is normally called {\it variational integrator} of the continuous dynamics provided by the Euler-Lagrange equations \eqref{ELeqs} (indistinctly, we shall call the equations \eqref{DEL} also variational integrator).  Moreover,  \eqref{DEL} are a discretisation in finite differences of \eqref{ELeqs}.

In order to establish the Hamiltonian picture we need to introduce the discrete Legendre transforms.
From $L_d$, two of them can be defined:
\[
\F^{-}L_d,\F^{+}L_d:Q\times Q\Flder T^*Q,
\]
in particular
\begin{subequations}\label{DLT}
\begin{align}
\F^{-}L_d(q_k,q_{k+1})&=(\,\,\,q_k\,\,\,,p_k^{-}=-D_1L_d(q_k,q_{k+1})),\label{DLT:a}\\
\F^{+}L_d(q_k,q_{k+1})&=(q_{k+1},p_{k+1}^{+}=D_2L_d(q_k,q_{k+1})).\label{DLT:b}
\end{align}
\end{subequations}
We observe that the {\it momentum matching} condition, i.e.
\begin{equation}\label{MomMat}
p_k^{-}=p_{k}^{+},
\end{equation}
provides the discrete Euler-Lagrange equations \eqref{DEL} according to \eqref{DLT} (based on this, we shall refer indistinctly to the discrete Legendre transform as momentum matching). Under the regularity of $L_d$, both discrete Legendre transforms are invertible and the discrete Hamiltonian flow $\tilde F_{L_d}:T^*Q\Flder T^*Q$; $(q_k,p_{k})\mapsto (q_{k+1},p_{k+1})$ can be defined by any of the following identities:
\begin{equation}\label{DiscHamFlow}
\tilde F_{L_d}=\F^{+}L_d\circ (\F^{-}L_d)^{-1}=\F^{+}L_d\circ F_{L_d}\circ (\F^{+}L_d)^{-1}=\F^{-}L_d\circ F_{L_d}\circ (\F^{-}L_d)^{-1};
\end{equation}
see \cite{MaWe} for the proof. At the Hamiltonian level, the map $\tilde F_{L_d}$ is called {\it variational integrator} of the continuous dynamics provided by  the Hamilton equations \eqref{Heqs}. Moreover, the discrete equations provided by \eqref{DLT} are a discretisation in finite differences of \eqref{Heqs}.

\begin{remark}\label{InitialFree}
Another advantage of the Hamiltonian version of variational integrators is that it provides a natural way of initialisating the numerical scheme. As showed in the previous discussion, two points $q_0,q_1$ are necessary in order to start \eqref{DEL} and establish the discrete flow $F_{L_d}$. On the other hand, a mechanical problem involves as initial data $q(a)=q_0$, $\dot q(a)=v_0$ and $p(a)=p_0$; thus an extra step, providing $q_1$, is in order. This step is naturally determined by \eqref{DLT:a}, leading to the algorithm (where the regularity of $L_d$ is assumed):

\begin{algorithm}{\rm Variational Integrator Scheme}
\label{Algorithm}
\begin{algorithmic}[1]
\State {\bf Initial data}: $N,\, h,\, q_0,\, p_0.$
 \State {\bf solve for} $q_1$ {\bf from} $p_0=-D_1L_d(q_0,q_1).$
\State {\bf Initial points:} $q_0,\,q_1.$
    \For {$k= 1: N-1$} 
    
{\bf solve for} $q_{k+1}$ {\bf from} $D_1L_d(q_k,q_{k+1})+D_2L_d(q_{k-1},q_{k})=0$
    \EndFor
    \State  {\bf Output:} $(q_2,...,q_N).$
\end{algorithmic}
  \end{algorithm}
Other definitions of $q_1$, different from Step 2,  can lead to the instability of the discrete flow $F_{L_d}$. \hfill $\diamond$
\end{remark}

A crucial feature of variational integrators is their {\it symplecticity}. If $\Omega_{T^*Q}$ is the canonical symplectic form on $T^*Q$ (which, according to Darboux theorem, can be locally written as $\Omega_{T^*Q}=dq\wedge dp$), define $\Omega_{Q\times Q}:=(\F^{-}L_d)^*\Omega_{T^*Q}=(\F^{+}L_d)^*\Omega_{T^*Q}$. Thus, the symplecticity of the variational integrators imply $F_{L_d}^*\Omega_{Q\times Q}=\Omega_{Q\times Q}$ \cite{MaWe, MoVe}, which furthermore imply that the energy cannot be conserved at the same time \cite{GeMa}. However, symplectic integrators have proven to present a stable energy behaviour even in long-term simulations \cite{SS}, behaviour that can be explained in terms of Backward Error Analysis \cite{BEA,HLW}.

\subsubsection{Forced systems} \label{DiscLdAlem}

As discrete version of the external forces \eqref{ExForces} we consider the maps:
\[
f_{L_d}^{-}, f_{L_d}^{+}:Q\times Q\Flder T^*Q
\]
such that
\[
\bra f_{L_d}^{-}(q_k,q_{k+1}),\delta q_k\ket+\bra f_{L_d}^{+}(q_k,q_{k+1}),\delta q_{k+1}\ket\simeq \int_{t_k}^{t_k+h}\bra f_L(q,\dot q),\delta q\ket\,dt.
\]
Note that the previous equation implies that $f_{L_d}^{-}(q_k,q_{k+1})\in T^*_{q_k}Q$ and $f_{L_d}^{+}(q_k,q_{k+1})\in T^*_{q_{k+1}}Q$. The discrete Lagrange-d'Alembert principle \cite{MaWe,ObJuMa} provides discrete curves between fixed $q_0,q_{N}$ satisfying the critical condition
\[
\delta\sum_{k=0}^{N-1}L_d(q_k,q_{k+1})+\sum_{k=0}^{N-1}\Big[\bra f_{L_d}^{-}(q_k,q_{k+1}),\delta q_k\ket+\bra f_{L_d}^{+}(q_k,q_{k+1}),\delta q_{k+1}\ket\Big]=0.
\]
These curves are given by the forced discrete Euler-Lagrange equations
\begin{equation}\label{ForcedDEL}
D_1L_d(q_k,q_{k+1})+D_2L_d(q_{k-1},q_{k})+f_{L_d}^{-}(q_k,q_{k+1})+f_{L_d}^{+}(q_{k-1},q_{k})=0,\,\, k=1,...,N-1;
\end{equation}
they are a discretisation in finite differences of \eqref{ForcedEL} and, under the regularity of the matrix $\big[D_{12}L_d(q_k,q_{k+1})+D_2f_{L_d}^{-}(q_k,q_{k+1})\big]$, provide a forced discrete Lagrangian map $F_{L_d}^{\tiny \mbox{f}}:Q\times Q\Flder Q\times Q$ approximating their continuous solution. In the forced case, the discrete Legendre transformation is defined by
\begin{subequations}\label{DForcedLT}
\begin{align}
\F^{-}L_d^{\tiny \mbox{f}}(q_k,q_{k+1})&=(\,\,\,q_k\,\,\,,\,\,p_k^{-}=-D_1L_d(q_k,q_{k+1})-f_{L_d}^{-}(q_k,q_{k+1})),\label{DForcedLT:a}\\
\F^{+}L_d^{\tiny \mbox{f}}(q_k,q_{k+1})&=(q_{k+1},\,\,p_{k+1}^{+}=D_2L_d(q_k,q_{k+1})+f_{L_d}^{+}(q_k,q_{k+1})).\label{DForcedLT:b}
\end{align}
\end{subequations}
The momentum matching condition \eqref{MomMat} reproduces the forced discrete Euler-Lagrange equations \eqref{ForcedDEL}. Moreover, \eqref{DForcedLT} provide a discretisation in finite differences of \eqref{ForcedHam}, whereas \eqref{DiscHamFlow} (for $F_{L_d}^{\tiny  \mbox{f}}$ and $\F^{\pm}L_d^{\tiny  \mbox{f}}$) yields an approximation $\tilde F_{L_d}^{\tiny  \mbox{f}}:T^*Q\Flder T^*Q$ of their continuous flow.

\begin{remark}
Establishing the forced variational flow $F_{L_d}^{\tiny  \mbox{f}}:Q\times Q\Flder Q\times Q$ requires as well two initial points $q_0,q_1$ as in the free case discussed in Remark \ref{InitialFree}. According to \eqref{DForcedLT:a} the algorithm for the forced variational integrator is given by:
\begin{algorithm}{\rm Forced Variational Integrator Scheme}
\label{ForcedAlgorithm}
\begin{algorithmic}[1]
\State {\bf Initial data}: $N,\, h,\, q_0,\, p_0.$
 \State {\bf solve for} $q_1$ {\bf from} $p_0=-D_1L_d(q_0,q_1)-f^{-}_{L_d}(q_0,q_1).$
\State {\bf Initial points:} $q_0,\,q_1.$
    \For {$k= 1: N-1$} 
    
\hspace{-0.6cm} {\bf solve for} $q_{k+1}$ {\bf from} $D_1L_d(q_k,q_{k+1})+D_2L_d(q_{k-1},q_{k})+f_{L_d}^{-}(q_k,q_{k+1})+f_{L_d}^{+}(q_{k-1},q_{k})=0$
    \EndFor
    \State  {\bf Output:} $(q_2,...,q_N).$
\end{algorithmic}
  \end{algorithm} \hfill $\diamond$
\end{remark}

\section{continuous restricted Hamilton's principle}\label{CRHP}

\subsection{Fractional state space}\label{FracPhSpa}

Consider a smooth curve $\gamma:[a,b]\subset\R\Flder\R^d$ for $d\in\N$. The local representation of the curve is given by $\gamma(t)=(x^1(t),...,x^d(t))$, $t\in[a,b]$. For the set of all smooth curves $C^{\infty}([a,b],\R^d)$, let us define the {\it fractional tangent vector} of the curve $\gamma$ by means of the following mapping
\[
\begin{split}
&X_{\sigma}^{\alpha}: C^{\infty}([a,b],\R^d) \Flder\quad\R^d,\\
&\quad\quad\quad\quad\quad\,\,\,\gamma\quad\quad\,\mapsto\,\,\,\, X_{\sigma}^{\alpha}\gamma,
\end{split}
\]
where the fractional tangent vector is defined by
\[
X_{\sigma}^{\alpha}\gamma:=D^{\alpha}_{\sigma}\gamma(t)=(D^{\alpha}_{\sigma}x^1(t),...,D^{\alpha}_{\sigma}x^d(t)),\, t\in[a,b],
\]
and $D^{\alpha}_{\sigma}$ represents the $\alpha$-fractional derivatives \eqref{FracDer}, $\sigma=\lc-,+\rc$. In the sequel we shall omit the $t$-dependence in curves and fractional tangent vectors; furthermore, we will omit as well the coordinate superindex $i$, refering to the local expression of $X_{\sigma}^{\alpha}\gamma$ just as $D^{\alpha}_{\sigma}x$.

\begin{remark}
Observe that we are choosing $\R^d$ as configuration space instead of a $d$-dimensional smooth manifold $Q$, as in \S\ref{ContinuousFramework}. We do so because the definition of $D^{\alpha}_{\sigma}\gamma(t)$ for $\gamma(t)\subset Q$ is not unique and depends on the particular set of charts employed to cover the manifold. \hfill $\diamond$
\end{remark}

\begin{definition}\label{FracVectorSpace}
{\it We define the {\rm fractional tangent space} of the curve $\gamma$ as}
{\rm
\[
V^{\alpha}_{\sigma}\R^d=\lc X_{\sigma}^{\alpha}\gamma\,|\, \mbox{for} \,\,\gamma\in C^{\infty}([a,b],\R^d)\,,\,t\in [a,b]\rc.
\]
}
\end{definition}

\begin{proposition}\label{FracVectorSpace}
{\it $V^{\alpha}_{\sigma}\R^d$ is a vector space with {\rm dim} $=d$.}
\end{proposition}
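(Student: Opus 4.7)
\medskip
\noindent\textbf{Proof plan.}
The plan is to identify $V^{\alpha}_{\sigma}\R^d$ with $\R^d$ as a set; since $\R^d$ is already a vector space of dimension $d$, the conclusion follows once the identification is established. By the very definition of the fractional tangent vector, $X^{\alpha}_{\sigma}\gamma=D^{\alpha}_{\sigma}\gamma(t)\in\R^d$ for every smooth curve $\gamma$ and every $t\in[a,b]$; hence the containment $V^{\alpha}_{\sigma}\R^d\subseteq \R^d$ is built in. The non-trivial direction is the reverse inclusion, namely surjectivity of the map $\gamma\mapsto D^{\alpha}_{\sigma}\gamma(t)$ onto $\R^d$ when $\gamma$ ranges over $C^{\infty}([a,b],\R^d)$ and $t$ ranges over $[a,b]$.

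To establish surjectivity, I would exploit the $\R$-linearity of $D^{\alpha}_{\sigma}$, evident from its integral representation \eqref{FracDer:RL}. It suffices to find a single scalar smooth curve $\phi\in C^{\infty}([a,b],\R)$ and a point $t_0\in[a,b]$ such that $D^{\alpha}_{\sigma}\phi(t_0)\neq 0$: for then, given any $v\in\R^d$, the curve $\gamma_v(t):=\bigl(v/D^{\alpha}_{\sigma}\phi(t_0)\bigr)\,\phi(t)$ (scalar multiplication in each component) is smooth and satisfies $D^{\alpha}_{\sigma}\gamma_v(t_0)=v$ by linearity. For $\sigma=-$ I would take $\phi(t)=t-a$ and compute directly from \eqref{FracDer:RL} that
\[
D^{\alpha}_{-}\phi(t)=\frac{(t-a)^{1-\alpha}}{\Gamma(2-\alpha)},
\]
which is non-zero for any $t\in(a,b]$. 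The case $\sigma=+$ is handled symmetrically with $\phi(t)=b-t$, yielding a non-zero value on $[a,b)$.

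Having shown $V^{\alpha}_{\sigma}\R^d=\R^d$, the natural vector space structure of $\R^d$ transfers to $V^{\alpha}_{\sigma}\R^d$: closure under sums and scalar multiples is automatic, since if $X_1=D^{\alpha}_{\sigma}\gamma_1(t_1)$ and $X_2=D^{\alpha}_{\sigma}\gamma_2(t_2)$ lie in $\R^d$, so does $\lambda_1 X_1+\lambda_2 X_2$, and by the surjectivity argument just above this linear combination is again realised as $D^{\alpha}_{\sigma}\gamma(t_0)$ for a suitable smooth $\gamma$ and $t_0$. The dimension count is then the one inherited from $\R^d$.

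I do not anticipate any genuine obstacle. The only subtlety worth noting is that the set $V^{\alpha}_{\sigma}\R^d$ as written pools evaluations at possibly different times $t$, so closure under linear combinations is not a priori obvious from the definition; resolving this is precisely the content of the surjectivity step via the explicit curve $\phi(t)=t-a$ (resp.\ $b-t$), which guarantees that the evaluation map at any fixed $t_0$ in the open interval already hits all of $\R^d$.
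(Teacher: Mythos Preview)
Your proof is correct and in some respects more explicit than the paper's own argument. The paper does not identify $V^{\alpha}_{\sigma}\R^d$ with $\R^d$ set-theoretically; instead it \emph{defines} the vector operations on $V^{\alpha}_{\sigma}\R^d$ through the underlying curves, setting $X^{\alpha}_{\sigma}\gamma+X^{\alpha}_{\sigma}\delta:=X^{\alpha}_{\sigma}(\gamma+\delta)$ and $\lambda\cdot X^{\alpha}_{\sigma}\gamma:=X^{\alpha}_{\sigma}(\lambda\gamma)$ (which is consistent by linearity of $D^{\alpha}_{\sigma}$), and then simply asserts that the canonical basis of $\R^d$ is a linearly independent generating system for $X^{\alpha}_{\sigma}\gamma$ with coordinates $D^{\alpha}_{\sigma}x^i$. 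Your route instead establishes surjectivity of the evaluation map by exhibiting the explicit curve $\phi(t)=t-a$ (resp.\ $b-t$) and computing $D^{\alpha}_{\sigma}\phi$ directly, so that $V^{\alpha}_{\sigma}\R^d=\R^d$ as sets and the vector-space structure and dimension are inherited for free. What your approach buys is that the spanning step is fully justified by a concrete calculation, and the subtlety you flag---that the definition pools evaluations at possibly different times~$t$---is resolved explicitly; the paper's version leaves both the spanning claim and the well-definedness of the curve-based operations implicit.
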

\begin{proof} The vector structure for curves is defined pointwise, i.e. for $\gamma, \delta\in C^{\infty}([a,b],\R^d)$ and  $\lambda\in\R$
\[
(\gamma+\delta)(t)=\gamma(t)+\delta(t), \quad (\lambda\gamma)(t)=\lambda\gamma(t).
\]
Noting that $D^{\alpha}_{\sigma}$ is a linear operator according to \eqref{FracDer}, the vector structure for $V^{\alpha}_{\sigma}\R^d$ is defined by 
\[
\begin{split}
X^{\alpha}_{\sigma}\gamma+X^{\alpha}_{\sigma}\delta:=X^{\alpha}_{\sigma}(\gamma+\delta),\quad
\lambda\cdot X^{\alpha}_{\sigma}\gamma:=X^{\alpha}_{\sigma}(\lambda\gamma).
\end{split}
\]
Considering that for any curve $\gamma$ the canonical basis of $\R^d$ is a linearly independent generating system with $d$ elements for $X^{\alpha}_{\sigma}\gamma$ with coordinates $D^{\alpha}_{\sigma}x^i$, we conclude that $V^{\alpha}_{\sigma}\R^d$ is a vector space with dim $=d$.  
\end{proof}

Now, we enlarge our space of cuves in the following way: given $\gamma_x,\gamma_y\in C^{\infty}([a,b],\R^d)$ it is straightforward to define the curve $\tilde\gamma\in C^{\infty}([a,b],\R^d\times\R^d)$ by $\tilde\gamma(t):=(\gamma_x(t),\gamma_y(t))$ for $t\in[a,b]$, which locally we shall denote $\tilde\gamma=(x,y)$ (which we will use indistinctly). Furthermore, for the space of curves $C^{\infty}([a,b],\R^d\times\R^d)$ we establish the vector bundle $\T^{\alpha}\R^d$, with $V^{\alpha}_{-}\R^d\times V^{\alpha}_{+}\R^d$ (whose vector structure is straightforward after Definiton \ref{FracVectorSpace} and the Cartesian product; see \cite{AbMa} for further details) as the fiber over the base space $\R^d\times\R^d$. More particuarly, for a given curve $\tilde\gamma$, $\Omega_{\tilde\gamma}^{\alpha}\in\T^{\alpha}\R^d$ is defined by $\Omega_{\tilde\gamma}^{\alpha}:=((\gamma_x,\gamma_y),(X^{\alpha}_{-}\gamma_x,X^{\alpha}_{+}\gamma_y))$, $t\in[a,b],$ and has natural coordinates $(x,y,D^{\alpha}_{-}x,D^{\alpha}_{+}y)$ (henceforth, we shall use indistinctly $\Omega_{\tilde\gamma}^{\alpha}$ and its coordinates). The bundle projection $\tau^{\alpha}:\T^{\alpha}\R^d\Flder\R^d\times\R^d$ is given by $\tau^{\alpha}(\Omega_{\tilde\gamma}^{\alpha})=(x,y)$; furthermore, it is apparent that $\T^{\alpha}\R^d\cong \R^d\times\R^d\times\R^d\times\R^d$. On the other hand, $\T\R^d$ is defined as the vector bundle with $V^{1}_{-}\R^d\times V^{1}_{-}\R^d$ as fiber over $\R^d\times\R^d$, with elements $\Xi_{\tilde\gamma}:=((\gamma_x,\gamma_y),(X^{1}_{-}\gamma_x,X^{1}_{-}\gamma_y))=((\gamma_x,\gamma_y),(D^1_{-}\gamma_x,D^1_{-}\gamma_y))=((\gamma_x,\gamma_y),(\dot\gamma_x,\dot\gamma_y))$, $t\in[a,b]$, and local coordinates $((x,y),(\dot x,\dot y)).$ The bundle projection is given by $\tau:\T\R^d\Flder\R^d\times\R^d$, $\tau(\Xi_{\tilde\gamma})=(x,y).$ With these elements we define the {\it fractional state space}.

\begin{definition}\label{FracPhSp}
{\it Consider the vector bundles $\T^{\alpha}\R^d$ and $\T\R^d$. We define the {\rm fractional state space} as the bundle product of them over $\R^d\times\R^d$, i.e.
\[
\mathfrak{T}\R^d:=\T\R^d\otimes_{\tiny{\R^d\times\R^d}}\T^{\alpha}\R^d.
\]
Thus, $\Sigma_{(x,y)}:=(\gamma_{x},\gamma_{y},X_{-}\gamma_{x}, X_{-}\gamma_{y},X_{-}^{\alpha}\gamma_{x}, X_{+}^{\alpha}\gamma_{y})\in\mathfrak{T}\R^d$, $t\in[a,b]$, is locally described by
\begin{equation}\label{TotalCoord}
\Sigma_{(x,y)}=(x,y,\dot x,\dot y,D^{\alpha}_{-}x,D^{\alpha}_{+}y).
\end{equation}
The bundle projection $\mathcal{T}:\mathfrak{T}\R^d\Flder\R^d\times\R^d$ is defined by $\mathcal{T}(\Sigma_{(x,y)})=(x,y)$. }
\end{definition}
For more details on bundle products we refer to \cite{AbMa}. The construction of the dual bundle 
\[
\mathfrak{T}^*\R^d:=\T^*\R^d\otimes_{\R^d\times\R^d}\T^{\alpha *}\R^d,
\]
which we will denote {\it fractional phase space}, follows straightforwardly from the dual bundles $\T^{\alpha *}\R^d$ and $\T^*\R^d$. 
For $P_{(x,y)}\in \mathfrak{T}^*\R^d$, we fix local coordinates
\begin{equation}\label{CoorHam}
P_{(x,y)}=(x,y,p_x,p_y,p^{\alpha}_x,p^{\alpha}_y).
\end{equation}
The bundle projection $\mathcal{P}:\mathfrak{T}^*\R^d\Flder\R^d\times\R^d$ is locally given by $\mathcal{P}(P_{(x,y)})=(x,y)$.  It is easy to see that both $\mathfrak{T}\R^d$ and $\mathfrak{T}^*\R^d$ are locally equivalent to the Cartesian product of 6 copies of $\R^d$.

\subsection{Fractional dynamics}\label{FrDy}
In order to establish the fractional dynamics we are going to consider a subclass of curves $C^{\infty}(\tilde\gamma^{_{(a,b)}};\R^d\times\R^d)\subset C^{\infty}([a,b],\R^d\times\R^d)$, in particular those  $\tilde\gamma=(\gamma_x,\gamma_y)$ such that $\gamma_x(a)=x_a,\,\gamma_x(b)=x_b$  and  $\gamma_y(a)=y_a,\,\gamma_y(b)=y_b$ for $x_a,\,x_b,\,y_a,\,y_b\in\R^d$; i.e. those curves with fixed endpoints. Let us define de action sum $S:C^{\infty}(\tilde\gamma^{_{(a,b)}};\R^d\times\R^d)\Flder\R$ for a Lagrangian function $\mathcal{L}:\mathfrak{T}\R^d\Flder\R$ (which henceforth we shall consider $C^2$) by

\begin{equation}\label{ContAction}
S(\tilde\gamma)=\int_a^b\mathcal{L}\lp x(t),y(t),\dot x(t),\dot y(t),D^{\alpha}_{-}x(t),D^{\alpha}_{+}y(t)\rp\,dt,
\end{equation}
and a particular set of varied curves and variations, namely:

\begin{definition}\label{ConsVariations}
{\it Define the set of {\rm restricted varied curves} as $\Gamma_{(\eta,\epsilon)}(t):=\tilde\gamma(t)+\epsilon\eta(t)$, $\epsilon\in\R$, where  $\Gamma_{(\eta,\epsilon)}(t)\in C^{\infty}(\tilde\gamma^{_{(a,b)}};\R^d\times\R^d)$ and $\eta(t):=(\delta\gamma_x(t),\delta\gamma_x(t))$, $t\in[a,b]$, is defined such that $\delta\gamma_x\in C^{\infty}([a,b],\R^d)$ with $\delta\gamma_x(a)=\delta\gamma_x(b)=0$. As it is easy to see, an unrestricted variation would be defined by $(\delta\gamma_x(t),\delta\gamma_y(t))$, with $\delta\gamma_x\neq\delta\gamma_y$. We impose $\delta\gamma_x(t)=\delta\gamma_y(t)$ to the variations, which locally is expressed by $\delta x=\delta y,$ vanishing at the endpoints.}
\end{definition}
Note that the introduction of $D^{\alpha}_{+}y(t)$ in the action \eqref{ContAction} breaks its causality, since it depends on future times $[t,b]$. Thus, apparently it is not suitable for the physical description of the $x$ and $y$ systems. We will see in the following discussion, however, that their dynamics can be decoupled thanks to the restriction of the variations presented in Definition \ref{ConsVariations}. Moreover, the choice of mechanical Lagrangians and the definition of the fractional derivatives \eqref{FracDer:RL} help to surpass the causality issue. 

We have already all the ingredients to establish the restricted Hamilton's principle:

\begin{theorem}\label{ConsFraELLag}
{\it A sufficient conditions for a curve $\tilde\gamma\in C^{\infty}([a,b],\R^d\times\R^d)$, subject to the restricted variations in Definition \ref{ConsVariations}, to be an extremal of the action $S:C^{\infty}(\tilde\gamma^{_{(a,b)}};\R^d\times\R^d)\Flder\R$  \eqref{ContAction} is the so-called {\rm restricted fractional  Euler-Lagrange equations}}:
\begin{subequations}\label{ResFracEL}
\begin{align}
\frac{d}{dt}\lp\frac{\der \mathcal{L}}{\der \dot x}\rp-\frac{\der \mathcal{L}}{\der x}-D^{\alpha}_{-}\lp\frac{\der \mathcal{L}}{\der D^{\alpha}_{+} y}\rp=0,\label{FracEL:-}\\
\frac{d}{dt}\lp\frac{\der \mathcal{L}}{\der \dot y}\rp-\frac{\der \mathcal{L}}{\der y}-D^{\alpha}_{+}\lp\frac{\der \mathcal{L}}{\der D^{\alpha}_{-} x}\rp=0.\label{FracEL:+}
\end{align}
\end{subequations}
\end{theorem}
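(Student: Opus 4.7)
The plan is to compute the first variation of the action \eqref{ContAction} along the one-parameter family $\Gamma_{(\eta,\epsilon)}$ of Definition \ref{ConsVariations}, and identify the resulting stationarity condition. First, differentiating $S(\Gamma_{(\eta,\epsilon)})$ in $\epsilon$ at $\epsilon=0$ and applying the chain rule to $\mathcal{L}$ yields six integrated terms, coupling the six partial derivatives of $\mathcal{L}$ to $\delta x$, $\delta y$, $\delta \dot x$, $\delta \dot y$, $D^{\alpha}_{-}\delta x$ and $D^{\alpha}_{+}\delta y$ respectively. I would then shift the derivatives off the variations: the two $\dot{\phantom{x}}$-terms are handled by standard integration by parts with the boundary contributions vanishing because $\delta x$ (and, by the restriction, $\delta y$) vanish at $t=a,b$; the two fractional terms are handled by the asymmetric identity \eqref{IntegrationByParts}, which converts $D^{\alpha}_{-}\delta x$ into $\delta x$ multiplied by $D^{\alpha}_{+}\lp\frac{\der\mathcal{L}}{\der D^{\alpha}_{-}x}\rp$, and analogously converts $D^{\alpha}_{+}\delta y$ into $\delta y$ multiplied by $D^{\alpha}_{-}\lp\frac{\der\mathcal{L}}{\der D^{\alpha}_{+}y}\rp$.

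Second, I would impose the restriction $\delta y=\delta x$ and collect the two pointwise coefficients into a single integrand of the form $[\mathrm{Eq}_x(\tilde\gamma)+\mathrm{Eq}_y(\tilde\gamma)]\,\delta x$. The fundamental lemma of the calculus of variations (applied componentwise in $\R^d$) then collapses the extremality condition to the scalar identity $\mathrm{Eq}_x(\tilde\gamma)+\mathrm{Eq}_y(\tilde\gamma)=0$. Regrouping the six surviving terms so that the $y$-fractional piece is attached to the block carrying $\dot x,x$, and the $x$-fractional piece to the block carrying $\dot y,y$, one recognises precisely the left-hand sides of \eqref{FracEL:-} and \eqref{FracEL:+}. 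Since the individual vanishing of each block forces their sum to vanish, the pair \eqref{ResFracEL} is therefore a \emph{sufficient} stationarity condition, which is the content of the theorem.

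The argument is essentially a direct computation; the only conceptual point to highlight is why sufficiency does not here upgrade to necessity, as it would in the classical Hamilton's principle. The class of admissible variations is rank-deficient: by probing only the diagonal direction $\delta x=\delta y$ instead of the full $(\delta x,\delta y)$, the fundamental lemma only delivers one scalar identity per coordinate, not two. The split into \eqref{FracEL:-}--\eqref{FracEL:+} is therefore a \emph{choice}, and---as the subsequent sections will confirm---precisely the choice that produces the fractionally damped equation \eqref{IntEqCont} together with its time-reversed mirror copy. No analytic obstacle arises beyond this, provided $\mathcal{L}\in C^2$ and the regularity implicit in \eqref{IntegrationByParts} is in force.
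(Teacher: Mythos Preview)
Your proposal is correct and follows essentially the same route as the paper: compute $\delta S$ via the chain rule, shift derivatives off the variation using ordinary integration by parts and the fractional identity \eqref{IntegrationByParts}, drop boundary terms by the endpoint conditions, and then observe that the two blocks \eqref{FracEL:-}--\eqref{FracEL:+} vanishing separately forces their sum (the actual stationarity condition under $\delta x=\delta y$) to vanish. The only cosmetic difference is that the paper imposes the restriction $\delta y=\delta x$ \emph{before} writing down the six chain-rule terms, whereas you impose it after integrating by parts; this changes nothing, and your closing remark on why sufficiency does not upgrade to necessity is a welcome clarification.
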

\begin{proof} 
To find the extremals of $S$ for restricted varied curves $\Gamma_{(\eta,\epsilon)}(t)$ we impose the usual critical condition, i.e. $\delta S:=\frac{d}{d\epsilon}S(\Gamma_{(\eta,\epsilon)})\big|_{\epsilon=0}=0$. Using that
\[
\frac{d}{d\epsilon}\mathcal{L}(\Sigma_{_{\Gamma_{(\eta,\epsilon)}}})\Big|_{\epsilon=0}=\frac{\der \mathcal{L}}{\der x}\delta x+\frac{\der \mathcal{L}}{\der\dot x}\delta\dot x+\frac{\der \mathcal{L}}{\der D_{-}^{\alpha}x} D_{-}^{\alpha}\delta x+\frac{\der \mathcal{L}}{\der y}\delta x+\frac{\der \mathcal{L}}{\der\dot y}\delta\dot x+\frac{\der \mathcal{L}}{\der D_{+}^{\alpha}y}D_{+}^{\alpha}\delta x,
\]
where 
\[
\Sigma_{_{\Gamma_{(\eta,\epsilon)}}}=(x+\epsilon\,\delta x,y+\epsilon\,\delta x,\dot x+\epsilon\, d(\delta x)/dt,\dot y+\epsilon\, d(\delta x)/dt, D_{-}^{\alpha}x+\epsilon\,D_{-}^{\alpha}\delta x,D_{+}^{\alpha}y+\epsilon\,D_{+}^{\alpha}\delta x),
\]
according to $\Gamma_{(\eta,\epsilon)}$ in Definition \ref{ConsVariations}, we obtain:
\begin{eqnarray}
\delta S&&=\int_a^b\Big\{\frac{\der \mathcal{L}}{\der x}\delta x+\frac{\der \mathcal{L}}{\der\dot x}\delta\dot x+\frac{\der \mathcal{L}}{\der D_{-}^{\alpha}x} D_{-}^{\alpha}\delta x+\frac{\der \mathcal{L}}{\der y}\delta x+\frac{\der \mathcal{L}}{\der\dot y}\delta\dot x+\frac{\der \mathcal{L}}{\der D_{+}^{\alpha}y}D_{+}^{\alpha}\delta x\Big\}\,dt\label{Variation}\\
&&=\int_{a}^b\Big\{\frac{\der \mathcal{L}}{\der x}-\frac{d}{dt}\lp\frac{\der \mathcal{L}}{\der \dot x}\rp+D^{\alpha}_{+}\lp\frac{\der \mathcal{L}}{\der D^{\alpha}_{-} x}\rp+\frac{\der \mathcal{L}}{\der y}-\frac{d}{dt}\lp\frac{\der \mathcal{L}}{\der \dot y}\rp+D^{\alpha}_{-}\lp\frac{\der \mathcal{L}}{\der D^{\alpha}_{+} y}\rp\Big\}\delta x\,dt\nonumber\\
&&\hspace{11cm} +\frac{\der \mathcal{L}}{\der \dot x}\delta x\Big|_a^b+\frac{\der \mathcal{L}}{\der \dot y}\delta x\Big|_a^b,\nonumber
\end{eqnarray}
where in the second equality we have integration by parts with respect to the total and fractional derivatives \eqref{IntegrationByParts}. According to the vanishing endpoint conditions, all the border terms are equal to zero, leading to
\begin{eqnarray}\label{RestVar}
\delta S=-\int_{a}^b\Big\{\Big[\frac{d}{dt}\lp\frac{\der \mathcal{L}}{\der \dot x}\rp-\frac{\der \mathcal{L}}{\der x}-D^{\alpha}_{-}\lp\frac{\der \mathcal{L}}{\der D^{\alpha}_{+}y}\rp\Big]\,\delta x+\Big[\frac{d}{dt}\lp\frac{\der \mathcal{L}}{\der \dot y}\rp-\frac{\der \mathcal{L}}{\der y}-D^{\alpha}_{+}\lp\frac{\der \mathcal{L}}{\der D^{\alpha}_{-}x}\rp\Big]\delta x\Big\}\,dt.\nonumber
\end{eqnarray}
Finally, from this last expression and considering arbitrary $\delta x$, it is easy to see that the restricted fractional Euler-Lagrange equations \eqref{FracEL} are a sufficient condition for $\delta S=0$; and the claim holds. 
\end{proof}

\begin{remark}
For unrestricted variations $\delta x\neq \delta y$, both arbitrary, and fixed endpoint conditions, the necessary and sufficient conditions for the extremals of \eqref{ContAction} are the (unrestricted) fractional Euler-Lagrange equations: 
\begin{equation}\label{FracEL}
\frac{d}{dt}\lp\frac{\der \mathcal{L}}{\der \dot x}\rp-\frac{\der \mathcal{L}}{\der x}-D^{\alpha}_{+}\lp\frac{\der \mathcal{L}}{\der D^{\alpha}_{-} x}\rp=0, \quad \frac{d}{dt}\lp\frac{\der \mathcal{L}}{\der \dot y}\rp-\frac{\der \mathcal{L}}{\der y}-D^{\alpha}_{-}\lp\frac{\der \mathcal{L}}{\der D^{\alpha}_{+} y}\rp=0.
\end{equation}
Other derivations of fractional Euler-Lagrange equations can be found in \cite{Agra}. Moreover, from the linearity of the fractional derivatives \eqref{FracDer}, it is easy to prove that the restricted Euler-Lagrange equations \eqref{ResFracEL} are invariant under linear constant change of variables, i.e. $x=\Lambda\,z$,  $y=\Lambda\,\bar z$,  with $\Lambda\in \R^{d\times d}$ a regular constant matrix. If furthermore we pick the Caputo definition of the fractional derivatives \eqref{FracDer:Ca}, the equations are invariant under affine change of variables  $x=\Lambda\,z +\zeta$,  $y=\Lambda\,\bar z+\bar\zeta$; for constant $\zeta,\bar\zeta\in\R^d$ (see \cite{JiOb} for more details). \hfill $\diamond$
\end{remark}

Define now the Lagrangian $\mathcal{L}: \mathfrak{T}\R^d\Flder\R$ by
\begin{equation}\label{PartLagrangian}
\mathcal{L}(x,y,\dot x,\dot y,D^{\alpha}_{-}x,D^{\alpha}_{+}y):=L(x,\dot x)+L(y,\dot y)- D^{\alpha}_{-}x\,\rho\, D^{\alpha}_{+}y,
\end{equation}
where $L:T\R^d\Flder\R$ is a $C^2$ function, $\rho=\,$diag$(\rho_1,\cdots,\rho_d)\in \R^{d\times d}$ and we represent the matrix product $\mu^T\,M\,\nu$, for $\mu,\nu\in\R^d$ and $M\in\R^{d\times d}$, by $\mu\,M\,\nu$. In this case, the restricted fractional Euler-Lagrange equations \eqref{ResFracEL} read:
\begin{subequations}\label{ResFracELDamped}
\begin{align}
\frac{d}{dt}\lp\frac{\der L(x,\dot x)}{\der \dot x}\rp-\frac{\der L(x,\dot x)}{\der x}+\rho\,D^{\alpha}_{-} D^{\alpha}_{-} x&=0,\label{ResFracELDamped:a}\\
\frac{d}{dt}\lp\frac{\der L(y,\dot y)}{\der \dot y}\rp-\frac{\der L(y,\dot y)}{\der y}+\rho D^{\alpha}_{+} D^{\alpha}_{+} y&=0,\label{ResFracELDamped:b}
\end{align}
\end{subequations}
which are the usual Euler-Lagrange equations for a Lagrangian system plus a fractional damping term \eqref{IntEqCont}. The following proposition provides an interpretation of the $y$-mirror system.
\begin{proposition}\label{InvTime}
If we set $y(t):=x(a+b-t)$, which implies that $\tilde\gamma\in C^{\infty}(\tilde\gamma^{_{(a,b)}};\R^d\times\R^d)$ with $\gamma_y(a)=x_b,\,\gamma_y(b)=x_a$,  and we consider an even Lagrangian function $L:T\R^d\Flder\R$ in the second variable, i.e. $L(z,-\dot z)=L(z,\dot z)$, then \eqref{ResFracELDamped:b} is \eqref{ResFracELDamped:a} in reversed time.
\end{proposition}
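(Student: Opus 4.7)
The plan is to pass to the reversed-time variable $s=a+b-t$ and verify that each term of the $y$-equation \eqref{ResFracELDamped:b} evaluated at $t$ coincides with the corresponding term of the $x$-equation \eqref{ResFracELDamped:a} evaluated at $s$. From $y(t)=x(a+b-t)$ one has $\dot y(t)=-\dot x(s)$ and $d/dt=-d/ds$. The evenness hypothesis $L(z,-v)=L(z,v)$ yields $\partial_v L(z,-v)=-\partial_v L(z,v)$ and $\partial_z L(z,-v)=\partial_z L(z,v)$, so the classical part of \eqref{ResFracELDamped:b} reduces directly to the classical Euler--Lagrange operator evaluated at $(x(s),\dot x(s))$; the two sign flips (one from $\partial_v L$, one from $d/dt$) cancel.

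The main step is showing that the advanced fractional double-derivative transforms into the retarded one, i.e.
\[
D^{\alpha}_{+}D^{\alpha}_{+}y(t)=D^{\alpha}_{-}D^{\alpha}_{-}x(s),\qquad s=a+b-t.
\]
I would first establish the single-derivative identity $D^{\alpha}_{+}y(t)=D^{\alpha}_{-}x(s)$ by performing the change of variables $u=a+b-\tau$ in the Riemann--Liouville integral defining $D^{\alpha}_{+}y(t)$ from \eqref{FracDer:RL}. The identity $\tau-t=s-u$ converts the kernel $(\tau-t)^{-\alpha}$ into $(s-u)^{-\alpha}$; reversing the limits of integration produces one sign, and replacing $d/dt$ by $-d/ds$ produces another, and the two signs conspire with the minus in front of $D^{\alpha}_{+}$ to yield exactly $D^{\alpha}_{-}x(s)$. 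Applying the very same change of variables to the composed function $t\mapsto D^{\alpha}_{-}x(a+b-t)$ and invoking the single-derivative identity once more gives the double identity.

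Combining these ingredients, \eqref{ResFracELDamped:b} at time $t$ becomes the classical Euler--Lagrange operator applied to $(x(s),\dot x(s))$ plus $\rho D^{\alpha}_{-}D^{\alpha}_{-}x(s)$, which is precisely \eqref{ResFracELDamped:a} at time $s$, yielding the claim. The only delicate point is the change-of-variables bookkeeping for the fractional integral: one must carefully track the sign arising from reversing the integration limits together with the sign coming from $ds/dt=-1$, and verify that these combine consistently in both the single and the iterated fractional derivative. Everything else is either a direct application of the chain rule or a transcription of the evenness assumption on $L$.
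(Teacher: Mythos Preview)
Your proposal is correct and follows essentially the same route as the paper: reversed-time variable, parity of $L$ to handle the classical Euler--Lagrange part (with the two sign flips cancelling), and a change of variables $u=a+b-\tau$ in the Riemann--Liouville integral for the fractional part. The only cosmetic difference is that the paper invokes the semigroup property \eqref{Aditive} to collapse $D^{\alpha}_{+}D^{\alpha}_{+}$ into a single $D^{2\alpha}_{+}$ and performs the change of variables once, whereas you establish the single-derivative identity $D^{\alpha}_{+}y(t)=D^{\alpha}_{-}x(s)$ and then iterate; both arguments are equivalent here.
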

\begin{proof}
Defining the reverse time by $\tilde t:=a+b-t$ for $t\in[a,b]$, it is easy to see that $\dot y(t)=-x^{\prime}(\tilde t)$ by applying the chain rule, where we consider $x^{\prime}(\tilde t)=dx(\tilde t)/d\tilde t$. Then
\[
\begin{split}
\frac{\der L(y(t),\dot y(t))}{\der y}&=\frac{\der L(x(\tilde t),-x^{\prime}(\tilde t))}{\der x}=^{_1}\frac{\der L(x(\tilde t),x^{\prime}(\tilde t))}{\der x},\\
\frac{d}{dt}\lp\frac{\der L(y(t),\dot y(t))}{\der \dot y}\rp&=\frac{d}{dt}\lp\frac{\der L(x(\tilde t),-x^{\prime}(\tilde t))}{\der \dot y}\rp=^{_2}-\frac{d}{dt}\lp\frac{\der L(x(\tilde t),-x^{\prime}(\tilde t))}{\der x^{\prime}}\rp=^{_3}\frac{d}{d\tilde t}\lp\frac{\der L(x(\tilde t),x^{\prime}(\tilde t))}{\der x^{\prime}}\rp,
\end{split}
\]
where in $=^{_1}$ we have used the parity $L(x,-x^{\prime})=L(x,x^{\prime})$,  and in $=^{_2}$, $=^{_3}$, the chain rule and the redefinition of the time $\tilde t=a+b-t$ and parity again, respectively. On the other hand, using \eqref{Aditive} and according to \eqref{FracDer:RL} and $y(t)=x(a+b-t)$:
\[
\begin{split}
D^{2\alpha}_{+}y(t)=&-\frac{1}{\Gamma(1-2\alpha)}\frac{d}{dt}\int_t^b(\tau-t)^{-2\alpha}x(a+b-\tau)d\tau\\
&\,\,\quad=^{_1}\frac{1}{\Gamma(1-2\alpha)}\frac{d}{dt}\int_{a+b-t}^a((a+b-t)-\tilde\tau)^{-2\alpha}x(\tilde\tau)d\tilde\tau\\
&\quad\quad\quad\quad=^{_2}\frac{1}{\Gamma(1-2\alpha)}\frac{d}{d\tilde t}\int_{a}^{\tilde t}(\tilde t-\tilde\tau)^{-2\alpha}x(\tilde\tau)d\tilde\tau\,\,=^{_3}\,\,D^{2\alpha}_{-}x(\tilde t),
\end{split}
\]
where in $=^{_1}$ we have used the change of variables $\tilde\tau=a+b-\tau$, in $=^{_2}$ the redefinition of time and, in $=^{_3}$, the definition of $D^{\alpha}_{-}$ in \eqref{FracDer:RL}. Plugging all these elements in \eqref{ResFracELDamped:b} we obtain
\[
\frac{d}{d\tilde t}\lp\frac{\der L(x(\tilde t), x^{\prime}(\tilde t))}{\der \dot x}\rp-\frac{\der L(x(\tilde t),x^{\prime}(\tilde t))}{\der x}+\rho\,D^{\alpha}_{-} D^{\alpha}_{-} x(\tilde t)=0
\]
for $\tilde t:b\Flder a$, and the claim holds.
\end{proof}
Thus, our conclusion is that, for the (mechanical) Lagrangians of interest in this work, i.e.
\begin{equation}\label{MechLag}
L(z,\dot z):=\frac{1}{2}\dot z\,m\,\dot z-U(z),
\end{equation}
with $m=\,$diag\,$(m_1,\cdots,m_d)\in\R^{d\times d}_{+}$ and $U:\R^d\Flder\R$ a smooth function, we can consider the $y$-mirror system as the $x$-system in reversed time, as long as  $\gamma_y(t)$ is bounded by $y(a)=x_b$ and $y(b)=x_a$. Thus, doubling the space of curves as exposed in \S\ref{FracPhSpa}   allows us to surpass the issue raised by the asymetric integration by parts \eqref{IntegrationByParts} without adding extra unnecessary (physical) information to the dynamics. In addition, the non-causal terms in the action  \eqref{ContAction} due to the presence of the advanced fractional derivative become causal in reversed time since $D^{\alpha}_{+}y(t)=D^{\alpha}_{-}x(\tilde t)$, as we just proved. Finally, we observe that the restricted varied curves introduced in Definition \ref{ConsVariations} acquire a particular shape when we set $y(t)=x(a+b-t)$, as we show in Figure \ref{Curves}.

\begin{figure}[!htb]
\includegraphics[scale=0.37]{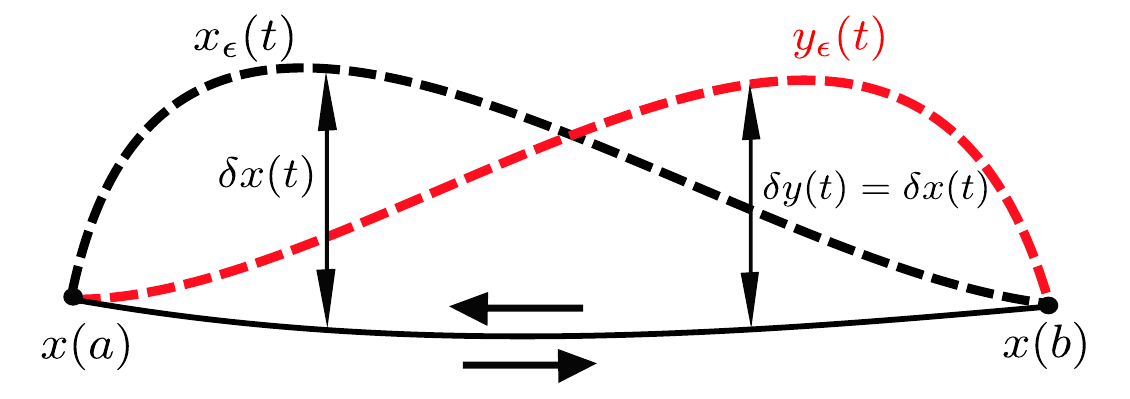}
\caption{We show the varied curves $\Gamma_{(\eta,\epsilon)}(t)=(x_{\epsilon}(t),y_{\epsilon}(t))=(x(t),y(t))+\epsilon(\delta x(t),\delta x(t))$, according to Definition \ref{ConsVariations}, when we set $y(t)=x(a+b-t)$. The horizontal arrows represent the two directions of time.}
\label{Curves}
\end{figure}

\begin{remark}
Note that the (unrestricted) fractional Euler-Lagrange equations \eqref{FracEL} for the Lagrangians \eqref{PartLagrangian} read
\[
\frac{d}{dt}\lp\frac{\der L(x,\dot x)}{\der \dot x}\rp-\frac{\der L(x,\dot x)}{\der x}+\rho\,D^{\alpha}_{+} D^{\alpha}_{+} y=0,\,\,\,\,\frac{d}{dt}\lp\frac{\der L(y,\dot y)}{\der \dot y}\rp-\frac{\der L(y,\dot y)}{\der y}+\rho\,D^{\alpha}_{-} D^{\alpha}_{-} x=0,
\]
which are a coupled system of fractional differential equations with meaningless dynamics. \hfill $\diamond$
\end{remark}

In addition,  $D^{\alpha}_{-}D^{\alpha}_{-}=D^{2\alpha}_{-}$ \eqref{Aditive}; therefore  $\alpha= 1/2$ implies $D^{2\alpha}_{-}= d/dt$ \eqref{Alpha1}. According to this, it is easy to see that, given $\alpha=1/2$, then equation \eqref{ResFracELDamped:a} is equivalent to the forced Euler-Lagrange equations \eqref{ForcedEL} when $f_L(x,\dot x)=-\rho\,\dot x$. This establishes the relationship between the restricted Hamilton's principle and the Lagrange-d'Alembert principle, expressed in Figure \ref{Diagram}.

\begin{figure}[!htb]
\includegraphics[scale=0.52]{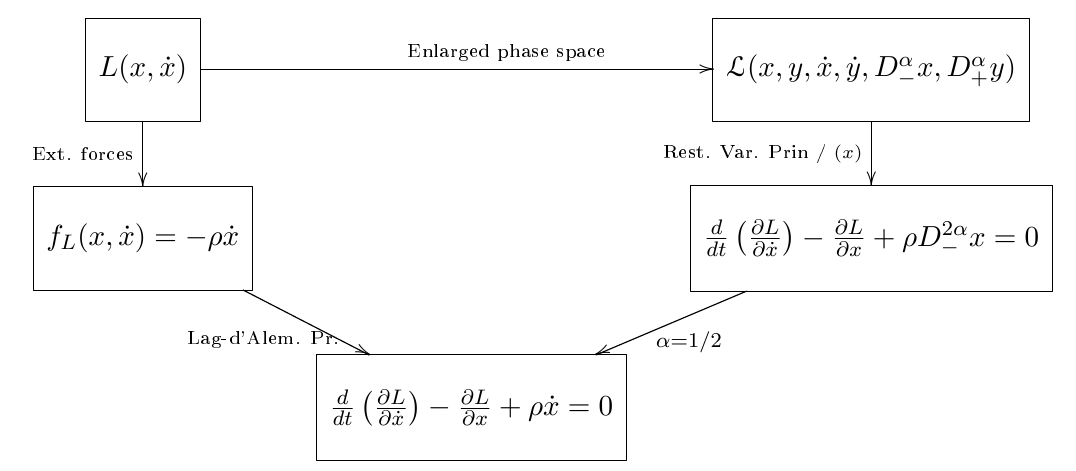}
\caption{In the diagram it is reflected that the Lagrange-d'Alembert principle for a general  Lagrangian and a particular set of external forces produce the same  forced equations as enlarging the phase space, then applying the  restricted Hamilton's principle for the action $S=\int_a^b\mathcal{L}\,dt$ \eqref{ContAction}, and then setting $\alpha=1/2$.}
\label{Diagram}
\end{figure}

Picking the mechanical Lagrangian \eqref{MechLag}, equations \eqref{ResFracELDamped} read
\begin{subequations}\label{MechDynFracDamp}
\begin{align}
m\,\ddot x+\nabla U(x)+\rho\,D^{\alpha}_{-} D^{\alpha}_{-} x&=0,\label{MechDynFracDamp:a}\\
m\,\ddot y+\nabla U(y)+\rho\,D^{\alpha}_{+} D^{\alpha}_{+} y&=0,\label{MechDynFracDamp:b}
\end{align}
\end{subequations}
which are the dynamical equations of a Lagrangian system with $2\alpha$ fractional damping, and will be considered in the examples, \S\ref{Simu}.

\subsection{Preservation of presymplectic form}

Before proving the preservation of a given presymplectic form in $\mathcal{T}\R^d$ we point out that the equations  \eqref{ResFracELDamped} define a continuous flow $F_t:\mathcal{T}\R^d\Flder \mathcal{T}\R^d$, with $F_0=$Id. The existence of this flow is ensured by  the proof of existence and uniqueness of the cited equations in Appendix. In particular, it is proven that given initial conditions $(x_0,\dot x_0)$, the existence and uniqueness of smooth $(x(t),\dot x(t))$ is ensured, ensuring furthermore the existence and uniqueness of $D_{-}^{\alpha}x(t)$ according to definition \eqref{FracDer} (equivalently for the mirror $y$-system).  We shall
ignore issues related to global versus local flows, which are easily dealt with
by restricting the domains of the flows.  Given initial conditions $X_0^{\alpha}:=(x_0,y_0,\dot x_0, \dot y_0,0,y^{\alpha}_0)\footnote{Observe that, according to \eqref{FracDer}, the only available value of $D^{\alpha}_{-}x(t_0)$ is 0.}$, the flow provides $F_t(X_0^{\alpha})=X^{\alpha}(t)$, with $X^{\alpha}(t):=(x(t),y(t),\dot x(t),\dot y(t),D^{\alpha}_{-}x(t),D^{\alpha}_{+}y(t)).$ 

Now, define the one-form and two-form on $\mathcal{T}\R^d$, with $L:T\R^d\Flder\R$ a $C^2$ function:
\begin{subequations}
\begin{align}
\Theta:=&\lp\frac{\der L(x,\dot x)}{\der\dot x^i}+\frac{\der L(y,\dot y)}{\der\dot y^i}\rp\,dx^i, \label{OneForm}\\
\Omega:=&\mbox{d}\Theta=\frac{\der^2L}{\der x^j\der\dot x^i}dx^i\wedge dx^j+\frac{\der^2L}{\der\dot x^j\der\dot x^i}dx^i\wedge d\dot x^j+\frac{\der^2L}{\der y^j\der\dot y^i}dx^i\wedge dy^j+\frac{\der^2L}{\der \dot y^j\der\dot y^i}dx^i\wedge d\dot y^j. \label{TwoForm}
\end{align}
\end{subequations}
It is straightforward to see that $\Omega$ is presymplectic on $\mathcal{T}\R^d$, since the fractional variables are absent, therefore it is degenerate, and $\mbox{d}\Omega=\mbox{d}\mbox{d}\Theta=0.$ We can establish the following preservation result.
\begin{proposition}\label{GeomPreservation}
The flow generated by the restricted fractional Euler-Lagrange equations when we set the Lagrangian function \eqref{PartLagrangian}, i.e.~ equations \eqref{ResFracELDamped}, preserves the presymplectic form $\Omega$ \eqref{TwoForm}. In other words, $F_t^*\Omega=\Omega.$
\end{proposition}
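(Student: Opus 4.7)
The plan is to adapt the classical variational proof of symplecticity due to Marsden and West to the restricted setting established in Theorem \ref{ConsFraELLag}. First, I would parametrise a smooth two-parameter family of solutions of \eqref{ResFracELDamped} by $\tilde\gamma_{(\lambda_1,\lambda_2)}(t)$, with $\lambda_1,\lambda_2\in\R$ small, and define the action along solutions $\hat S(\lambda_1,\lambda_2):=\int_a^b\mathcal{L}(\Sigma_{\tilde\gamma_{(\lambda_1,\lambda_2)}})\,dt$. The existence and uniqueness result from the Appendix is precisely what guarantees that this parametrisation is well-defined and smooth, so that the infinitesimal variations $\delta_i\tilde\gamma:=\partial\tilde\gamma/\partial\lambda_i|_{0}$ are bona fide tangent vectors to the solution manifold at the base curve, and so that the flow $F_t:\mathcal{T}\R^d\to\mathcal{T}\R^d$ is well-defined.

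Next I would recompute the first variation of $\hat S$ by repeating verbatim the integration-by-parts manipulation carried out in the proof of Theorem \ref{ConsFraELLag}, but this time for restricted variations $\delta_i\tilde\gamma=(\delta_i x,\delta_i x)$ which do \emph{not} vanish at the endpoints. Because the base curve now satisfies \eqref{ResFracELDamped}, the two Euler-Lagrange brackets appearing in \eqref{Variation} integrate to zero and only the boundary contributions survive, yielding
\begin{equation*}
\delta_i \hat S \;=\; \Theta(\delta_i\tilde\gamma)\Big|_b - \Theta(\delta_i\tilde\gamma)\Big|_a,
\end{equation*}
with $\Theta$ the one-form \eqref{OneForm}.

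Third, I would apply the second variation $\delta_j$ to this identity and invoke the symmetry of mixed partials $\delta_1\delta_2\hat S=\delta_2\delta_1\hat S$. Using the coordinate-free identity $d\Theta(V,W)=V(\Theta(W))-W(\Theta(V))$, valid since the parameter-induced vector fields commute, the resulting equation rearranges as
\begin{equation*}
\Omega(\delta_1\tilde\gamma,\delta_2\tilde\gamma)\Big|_b \;=\; \Omega(\delta_1\tilde\gamma,\delta_2\tilde\gamma)\Big|_a,
\end{equation*}
which through the correspondence between the variations at time $b$ and the pushforwards $F_{t*}\delta_i$ of the variations at time $a$ is precisely the pointwise pullback identity $F_t^*\Omega=\Omega$ evaluated on the chosen pair of tangent vectors.

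The main obstacle will be extending this identity from the subspace of restricted variations $\delta x=\delta y$ used in Definition \ref{ConsVariations} to an arbitrary pair of tangent vectors on $\mathcal{T}\R^d$, since the restricted family spans only a proper subspace of the tangent space. I would exploit the particular shape of $\Omega=dA_i\wedge dx^i$ with $A_i=\partial L(x,\dot x)/\partial\dot x^i+\partial L(y,\dot y)/\partial\dot y^i$: the wedge produces only $dx^i$ in its second slot, while $A_i$ depends on all four variables $(x,y,\dot x,\dot y)$. Thus a general tangent vector can be decomposed into its diagonal part $(\delta x,\delta x,\delta\dot x,\delta\dot x)$, to which the argument above applies, and a purely $y$-off-diagonal complement, for which a separate boundary identity is obtained by running the same calculation using only the $y$-equation \eqref{ResFracELDamped:b} and the fact that the $x$- and $y$-systems are decoupled in \eqref{ResFracELDamped}. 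Combining the two pieces and polarising gives preservation of $\Omega$ on the full tangent space.
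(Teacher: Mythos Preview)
Your approach is essentially the paper's: the symmetry-of-mixed-partials argument on a two-parameter family is precisely the pointwise unpacking of the paper's one-line identity $dS = F_t^*\Theta - \Theta$ followed by $d(dS)=0$. The paper's route is simply more compact: it views the action as a function $S:\mathcal{T}\R^d\to\R$ of initial data, reads off the boundary one-form identity directly from \eqref{Variation}, and applies the exterior derivative to both sides, so there is no need to set up the two-parameter family or invoke the Cartan formula for $d\Theta$ explicitly.

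The one substantive difference is your treatment of what you call the ``main obstacle''. The paper simply declares $\delta X_0^\alpha\in T\mathcal{T}\R^d$ arbitrary and invokes \eqref{Variation} without further comment, even though that computation was carried out for restricted variations $\delta x=\delta y$; it does not discuss how the boundary identity extends to a full one-form equality on $\mathcal{T}\R^d$. You are right to flag this, and your proposed decomposition into a diagonal part plus a $y$-off-diagonal complement, exploiting the decoupling of \eqref{ResFracELDamped:a} and \eqref{ResFracELDamped:b}, is a reasonable strategy---though as written it is a sketch and would need the off-diagonal boundary identity spelled out. This extra care goes beyond what the paper's own proof supplies.
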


\begin{proof}
The action \eqref{ContAction}, in our fractional context, can be seen as a function $S:\mathcal{T}\R^d\Flder\R$, and its variation $\delta S=\bra \mbox{d}S\,,\,\delta X^{\alpha}_0\ket$, where we set $\delta X_0^{\alpha}:=(\delta x_0,\delta y_0, \delta\dot x_0, \delta\dot y_0,\delta x^{\alpha}_0,\delta y^{\alpha}_0)$
$\in T\mathcal{T}\R^d$ arbitrary.  According to \eqref{Variation}, when we pick the Lagrangian \eqref{PartLagrangian} and along the solutions of \eqref{ResFracELDamped}, we have that
\[
\begin{split}
\bra \mbox{d}S\,,\,\delta X^{\alpha}_0\ket=&\,\,\,\lp\frac{\der L(x,\dot x)}{\der\dot x^i}+\frac{\der L(y,\dot y)}{\der\dot y^i}\rp\,\delta x^i\Big|_{t_0}^t=^{_1}\bra\Theta(X^{\alpha}(t)),\delta X^{\alpha}(t)\ket-\bra\Theta(X^{\alpha}_0),\delta X^{\alpha}_{0}\ket\\
=^{_2}&\,\,\,\,\bra\Theta(X^{\alpha}(t)),(F_t)_{*}\delta X^{\alpha}_0\ket-\bra\Theta(X^{\alpha}_0),\delta X^{\alpha}_{0}\ket=\bra(F_t)^{*}\Theta(X^{\alpha}_0),\delta X^{\alpha}_0\ket-\bra\Theta(X^{\alpha}_0),\delta X^{\alpha}_{0}\ket\\
=&\,\,\,\bra (F_t)^{*}\Theta(X^{\alpha}_0)-\Theta(X^{\alpha}_0)\,,\,\delta X^{\alpha}_0\ket,
\end{split}
\]
where we have used arbitrary initial and final times $t_0, t$, respectively. In $=^{_1}$ we have employed the definition of $\Theta$ \eqref{OneForm}; in $=^{_2}$ we have used the flow $F_t:\mathcal{T}\R^d\Flder \mathcal{T}\R^d$ introduced above. It follows straightforwardly that
\[
\mbox{d}S=(F_t)^{*}\Theta-\Theta.
\]
Now, taking the differential in both sides, considering that the differential and the pull-back commute and the definition of $\Omega$ \eqref{TwoForm}, we obtain $(F_t)^{*}\mbox{d}\Theta-\mbox{d}\Theta=0$ and then $F_t^*\Omega=\Omega,$ from which the result follows.
\end{proof}

\subsection{The Legendre transformation}\label{LgTr}

Let us define the {\it fractional Legendre transform} $\mathcal{F}\mathcal{L}:\mathfrak{T}\R^d\Flder \mathfrak{T}^*\R^d$ as the fiber derivative for  a Lagrangian function $\mathcal{L}:\mathfrak{T}\R^d\Flder\R$, i.e.
\begin{equation}  \label{LegTrans}
\begin{split}
\mathcal{F}\mathcal{L}:\mathfrak{T}_{(x,y)}\R^d&\longrightarrow \mathfrak{T}^*_{(x,y)}\R^d\\
\Sigma_{(x,y)}&\mapsto D_{(x,y)}\mathcal{L}(\Sigma_{(x,y)}),
\end{split}
\end{equation}
where $D_{(x,y)}$ denotes the partial derivative in the fiber $\mathcal{T}^{-1}((x,y))$. Locally we have
\begin{equation}\label{FracLeg}
\mathcal{F}\mathcal{L}(\Sigma_{(x,y)})=\lp\frac{\der \mathcal{L}}{\der\dot x},\frac{\der \mathcal{L}}{\der\dot y},\frac{\der \mathcal{L}}{\der D^{\alpha}_{-}x},\frac{\der \mathcal{L}}{\der D^{\alpha}_{+}y}\rp.
\end{equation}
It is easy to check that $\mathcal{F}\mathcal{L}$ is fiber preserving \cite{AbMa}. Moreover, we will say that $\mathcal{F}\mathcal{L}$ is {\it regular} if it is a diffeomorphism, and furthermore we will call $\mathcal{L}$ {\it regular} if that is the case (which we will assume throughout the article). Hence, we define the Hamiltonian function $\mathcal{H}:\mathfrak{T}^*\R^d\Flder\R$ by
\begin{equation}\label{HamDef}
\mathcal{H}(P_{(x,y)}):=\bra \mathcal{F}\mathcal{L}(\Sigma_{(x,y)}),\Sigma_{(x,y)}\ket-\mathcal{L}(\Sigma_{(x,y)}),
\end{equation}
where the coordinates of $P_{(x,y)}:=\mathcal{F}\mathcal{L}(\Sigma_{(x,y)})$ are given in \eqref{CoorHam} and $\bra\cdot,\cdot\ket:\mathfrak{T}_{(x,y)}^*\R^d\times \mathfrak{T}_{(x,y)}\R^d\Flder\R$ denotes the natural pairing. Employing these elements, we can establish the Hamiltonian version of the restricted Hamilton's principle.

\begin{theorem}\label{ConsFraELHam}
{\it A sufficient condition for a curve $\tilde\gamma\in C^{\infty}([a,b],\R^d\times\R^d)$, subject to the restricted variations in Definition \ref{ConsVariations}, to be an extremal of the action $S:C^{\infty}(\tilde\gamma^{_{(a,b)}};\R^d\times\R^d)\Flder\R$  \eqref{ContAction} is the so-called {\rm restricted fractional  Hamilton equations}}:
\begin{subequations}\label{FracHam}
\begin{align}
\dot x&=\frac{\der \mathcal{H}}{\der p_x},\quad\quad D^{\alpha}_{-}x=\frac{\der \mathcal{H}}{\der p^{\alpha}_x}, \quad\quad \dot p_x=-\frac{\der \mathcal{H}}{\der x}+D^{\alpha}_{-}p^{\alpha}_y,\\
\dot y&=\frac{\der \mathcal{H}}{\der p_y}, \quad\quad
 D^{\alpha}_{+}y=\frac{\der \mathcal{H}}{\der p^{\alpha}_y}, \quad\quad \dot p_y=-\frac{\der \mathcal{H}}{\der y}+D^{\alpha}_{+}p^{\alpha}_x.
\end{align}
\end{subequations}
\end{theorem}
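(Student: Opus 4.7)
The plan is to mirror the Lagrangian proof of Theorem \ref{ConsFraELLag} in the Hamiltonian setting. First, using the fractional Legendre transform \eqref{LegTrans}--\eqref{FracLeg} together with the definition of the Hamiltonian \eqref{HamDef}, I would rewrite the Lagrangian in phase-space form
\[
\mathcal{L}(\Sigma_{(x,y)}) = p_x\,\dot x + p_y\,\dot y + p_x^{\alpha}\,D_{-}^{\alpha}x + p_y^{\alpha}\,D_{+}^{\alpha}y - \mathcal{H}(P_{(x,y)}),
\]
so that the action \eqref{ContAction} becomes
\[
S(\tilde\gamma) = \int_a^b \lc p_x\,\dot x + p_y\,\dot y + p_x^{\alpha}\,D_{-}^{\alpha}x + p_y^{\alpha}\,D_{+}^{\alpha}y - \mathcal{H}(x,y,p_x,p_y,p_x^{\alpha},p_y^{\alpha}) \rc dt.
\]
Variations of $\tilde\gamma$ remain those of Definition \ref{ConsVariations}, so that $\delta x = \delta y$ with $\delta x(a)=\delta x(b)=0$, while $\delta p_x,\delta p_y,\delta p_x^{\alpha},\delta p_y^{\alpha}$ are treated as independent and arbitrary, as is customary when passing to the Hamiltonian picture.

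Next I would compute $\delta S$ by the chain rule, immediately reading off the four ``kinematic'' identities on the left-hand side of \eqref{FracHam} from the coefficients of $\delta p_x,\delta p_y,\delta p_x^{\alpha},\delta p_y^{\alpha}$. The configuration part requires two kinds of integration by parts: the classical derivative terms $p_x\,\delta\dot x$ and $p_y\,\delta\dot x$ produce boundary contributions that vanish at $t=a,b$, together with $-\dot p_x\,\delta x$ and $-\dot p_y\,\delta x$; the fractional terms $p_x^{\alpha}\,D_{-}^{\alpha}\delta x$ and $p_y^{\alpha}\,D_{+}^{\alpha}\delta x$ are handled by \eqref{IntegrationByParts}, yielding $(D_{+}^{\alpha}p_x^{\alpha})\,\delta x$ and $(D_{-}^{\alpha}p_y^{\alpha})\,\delta x$ respectively. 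Using $\delta x=\delta y$ throughout and collecting, I arrive at
\begin{align*}
\delta S = \int_a^b \Big\{ & \lp\dot x-\tfrac{\der\mathcal{H}}{\der p_x}\rp\delta p_x + \lp\dot y-\tfrac{\der\mathcal{H}}{\der p_y}\rp\delta p_y + \lp D_{-}^{\alpha}x-\tfrac{\der\mathcal{H}}{\der p_x^{\alpha}}\rp\delta p_x^{\alpha} \\
& + \lp D_{+}^{\alpha}y-\tfrac{\der\mathcal{H}}{\der p_y^{\alpha}}\rp\delta p_y^{\alpha} \\
& + \lp -\dot p_x-\dot p_y+D_{+}^{\alpha}p_x^{\alpha}+D_{-}^{\alpha}p_y^{\alpha}-\tfrac{\der\mathcal{H}}{\der x}-\tfrac{\der\mathcal{H}}{\der y}\rp\delta x \Big\}\,dt.
\end{align*}

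Finally, arbitrariness of $\delta p_x,\delta p_y,\delta p_x^{\alpha},\delta p_y^{\alpha}$ enforces the four kinematic identities of \eqref{FracHam}, while the bracket multiplying $\delta x$ yields only the \emph{single combined} identity $\dot p_x+\dot p_y = D_{+}^{\alpha}p_x^{\alpha}+D_{-}^{\alpha}p_y^{\alpha}-\der\mathcal{H}/\der x-\der\mathcal{H}/\der y$. The restricted fractional Hamilton equations \eqref{FracHam} split this identity into two independent pieces (one for $\dot p_x$ and one for $\dot p_y$), which clearly imply the combined relation but are not implied by it. Hence \eqref{FracHam} is sufficient but not necessary for $\delta S=0$, exactly paralleling the Lagrangian case \eqref{ResFracEL}. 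Recognising and justifying this splitting is the main conceptual obstacle; the rest is careful bookkeeping of the ordinary and fractional integrations by parts.
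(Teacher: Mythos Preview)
Your proposal is correct and follows essentially the same route as the paper's own proof: write the action in phase-space form via the fractional Legendre transform, vary with $\delta x=\delta y$ vanishing at the endpoints and with the momenta varied freely, integrate by parts (ordinary and fractional via \eqref{IntegrationByParts}), and then observe that the single $\delta x$-coefficient can be split into the two $\dot p_x$- and $\dot p_y$-equations, which makes \eqref{FracHam} sufficient but not necessary. Your explicit remark on why the splitting yields only sufficiency is, if anything, clearer than the paper's terse ``from this last expression it is straightforward to see.''
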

\begin{proof}
Setting the action \eqref{ContAction} in its Hamiltonian form, i.e.
\[
S(\tilde\gamma)=\int_a^b\big\{ p_x\dot x+p_y\dot y+p_x^{\alpha}D^{\alpha}_{-}x+p_y^{\alpha}D^{\alpha}_{+}y-\mathcal{H}(x,y,p_x,p_y,p^{\alpha}_x,p^{\alpha}_y)\big\}\,dt,
\]
and imposing the critical condition with restricted variations, i.e. $\delta S=\frac{d}{d\epsilon}S(\Gamma_{(\eta,\epsilon)})\big|_{\epsilon=0}=0$, after applying fractional and total integration by parts we arrive to
\[
\begin{split}
\delta S=\int_a^b&\Big\{\delta p_x\lp\dot x-\frac{\der \mathcal{H}}{\der p_x}\rp+\delta p_y\lp\dot y-\frac{\der \mathcal{H}}{\der y}\rp+\delta p_x^{\alpha}\lp D^{\alpha}_{-}x-\frac{\der \mathcal{H}}{\der p_x^{\alpha}}\rp+\delta p_y^{\alpha}\lp D^{\alpha}_{+}y-\frac{\der \mathcal{H}}{\der p_y^{\alpha}}\rp\\
&\quad\quad\quad\,\,\,+\lp-\dot p_x-\frac{\der \mathcal{H}}{\der x}+D^{\alpha}_{+}p^{\alpha}_x-\dot p_y-\frac{\der \mathcal{H}}{\der y}+D^{\alpha}_{-}p^{\alpha}_y\rp\,\delta x\Big\}dt+p_x\delta x\Big|_a^b+p_y\delta x\Big|_a^b,
\end{split}
\]
which leads to
\[
\begin{split}
\delta S=\int_a^b\Big\{&\delta p_x\lp\dot x-\frac{\der \mathcal{H}}{\der p_x}\rp+\delta p_x^{\alpha}\lp D^{\alpha}_{-}x-\frac{\der \mathcal{H}}{\der p_x^{\alpha}}\rp+\lp-\dot p_x-\frac{\der \mathcal{H}}{\der x}+D^{\alpha}_{-}p^{\alpha}_y\rp\delta x\\
+&\delta p_y\lp\dot y-\frac{\der \mathcal{H}}{\der y}\rp+\delta p_y^{\alpha}\lp D^{\alpha}_{+}y-\frac{\der \mathcal{H}}{\der p_y^{\alpha}}\rp+\lp-\dot p_y-\frac{\der \mathcal{H}}{\der y}+D^{\alpha}_{+}p^{\alpha}_x\rp\,\delta x\Big\}dt,
\end{split}
\]
since the boundary terms vanish. From this last expression is straightforward to see that \eqref{FracHam} is a sufficient condition for the extremals given arbitrary variations $\delta x,\, \delta p_x,\,\delta p_y,\, \delta p^{\alpha}_x,\,\delta p^{\alpha}_y$.
\end{proof}

\begin{remark}
For unrestricted variations $\delta x\neq \delta y$, both arbitrary, and fixed endpoint conditions, the necessary and sufficient conditions for the extremals of \eqref{ContAction} are the (unrestricted) fractional Hamilton equations: 
\[
\begin{split}
\dot x&=\frac{\der \mathcal{H}}{\der p_x},\quad\quad D^{\alpha}_{-}x=\frac{\der \mathcal{H}}{\der p^{\alpha}_x}, \quad\quad \dot p_x=-\frac{\der \mathcal{H}}{\der x}+D^{\alpha}_{+}p^{\alpha}_x,\\
\dot y&=\frac{\der \mathcal{H}}{\der p_y}, \quad\quad
 D^{\alpha}_{+}y=\frac{\der \mathcal{H}}{\der p^{\alpha}_y}, \quad\quad \dot p_y=-\frac{\der \mathcal{H}}{\der y}+D^{\alpha}_{-}p^{\alpha}_y.
\end{split}
\]
\hfill $\diamond$
\end{remark}
From \eqref{PartLagrangian} and \eqref{HamDef} we get the Hamiltonian
\begin{equation}\label{PartHamiltonian}
\mathcal{H}(x,y,p_x,p_y,p_x^{\alpha},p_y^{\alpha})=H(x,p_x)+H(y,p_y)-p_x^{\alpha}\,\rho^{-1}\,p^{\alpha}_y,
\end{equation}
where $H(z,p_z):=\bra p_z,\,\dot z\ket-L(z,\dot z)$. For the particular Hamiltonian \eqref{PartHamiltonian}, the restricted Hamilton equations \eqref{FracHam} read:

\begin{subequations}\label{FracPartiHam}
\begin{align}
\dot x&=\frac{\der H}{\der p_x},\quad\quad D^{\alpha}_{-}x=-\rho^{-1}\,p_y^{\alpha}, \quad\quad \dot p_x=-\frac{\der H}{\der x}+D^{\alpha}_{-}\,p^{\alpha}_y,\label{FracPartiHam:a}\\
\dot y&=\frac{\der H}{\der p_y}, \quad\quad
 D^{\alpha}_{+}y=-\rho^{-1}\,p_x^{\alpha}, \quad\quad \dot p_y=-\frac{\der H}{\der y}+D^{\alpha}_{+}\,p^{\alpha}_x.\label{FracPartiHam:b}
\end{align}
\end{subequations}
We see that the $x$ and $y$ dynamics are also coupled in principle. On the other hand, in the next result we show that the relationship between $x$ and $y$ systems is consistent with the Lagrangian case under inversion of time.
\begin{proposition}
If we set $y(t):=x(a+b-t)$, $p_y(t):=-p_x(a+b-t)$ and $p_y^{\alpha}(t):=p_x(a+b-t)$,  and we consider an even Hamiltonian function in the second variable, i.e. $H(z,-p_z)=H(z,p_z)$, then \eqref{FracPartiHam:b} is \eqref{FracPartiHam:a} in reversed time.
\end{proposition}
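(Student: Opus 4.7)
The plan is to mimic the strategy used in Proposition \ref{InvTime}, substituting the proposed identifications into each of the three equations of \eqref{FracPartiHam:b} and showing, via the chain rule, the parity of $H$, and the time-reversal identity for fractional derivatives already established in Proposition \ref{InvTime}, that they become the three equations of \eqref{FracPartiHam:a} evaluated at $\tilde t := a+b-t$.

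First I would gather the building blocks. With $\tilde t = a+b-t$, the chain rule gives
\[
\dot y(t) = -x^{\prime}(\tilde t), \qquad \dot p_y(t) = p_x^{\prime}(\tilde t), \qquad \dot p_y^{\alpha}(t) = -(p_x^{\alpha})^{\prime}(\tilde t),
\]
using the substitutions $y(t)=x(\tilde t)$, $p_y(t)=-p_x(\tilde t)$ and $p_y^{\alpha}(t)=p_x^{\alpha}(\tilde t)$ (the last being the natural reading of the statement; the displayed $p_y^{\alpha}(t):=p_x(a+b-t)$ is clearly a typo for $p_x^{\alpha}(a+b-t)$, and that identification is precisely the one imposed by combining the kinematic relations $p_y^{\alpha}=-\rho D^{\alpha}_{-}x$ and $p_x^{\alpha}=-\rho D^{\alpha}_{+}y$ from \eqref{FracPartiHam}). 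Differentiating the parity relation $H(z,-p)=H(z,p)$ yields the two crucial identities
\[
\frac{\partial H}{\partial z}(z,-p)=\frac{\partial H}{\partial z}(z,p), \qquad \frac{\partial H}{\partial p}(z,-p)=-\frac{\partial H}{\partial p}(z,p).
\]
Finally, exactly the change of variables $\tilde\tau=a+b-\tau$ carried out in the proof of Proposition \ref{InvTime} (applied now with exponent $\alpha$ rather than $2\alpha$) produces the time-reversal identity $D^{\alpha}_{+}f(t)=D^{\alpha}_{-}\bar f(\tilde t)$ whenever $f(t)=\bar f(\tilde t)$.

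Armed with these tools I would verify the three equations in \eqref{FracPartiHam:b} one by one. The first, $\dot y=\partial H/\partial p_y$, becomes $-x^{\prime}(\tilde t) = -\partial H/\partial p_x(x(\tilde t),p_x(\tilde t))$ after applying oddness of $\partial H/\partial p$ in $p$: this is the first equation of \eqref{FracPartiHam:a} at time $\tilde t$. The second, $D^{\alpha}_{+}y=-\rho^{-1}p_x^{\alpha}$, becomes $D^{\alpha}_{-}x(\tilde t)=-\rho^{-1}p_y^{\alpha}(\tilde t)$ after using the time-reversal identity on the left side and the identification $p_x^{\alpha}(t)=p_y^{\alpha}(\tilde t)$ on the right; this is the second equation of \eqref{FracPartiHam:a} at time $\tilde t$. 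The third, $\dot p_y=-\partial H/\partial y+D^{\alpha}_{+}p_x^{\alpha}$, becomes $p_x^{\prime}(\tilde t) = -\partial H/\partial x(x(\tilde t),p_x(\tilde t))+D^{\alpha}_{-}p_y^{\alpha}(\tilde t)$ once we combine the chain rule on the left, the evenness of $\partial H/\partial z$ in $p$, the identification on $p_x^{\alpha}$, and the time-reversal identity applied to the fractional derivative on the right; this is the third equation of \eqref{FracPartiHam:a} at time $\tilde t$.

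The only real obstacle is book-keeping: keeping straight which substitution carries a sign (only $p_y$), which partial of $H$ is even versus odd in the momentum, and the fact that the time-reversal identity exchanges $D^{\alpha}_{+}$ with $D^{\alpha}_{-}$ (so it maps the ``$+$'' equations of the $y$-block term-by-term onto the ``$-$'' equations of the $x$-block, and vice versa). Once these rules are applied uniformly, the six Hamilton equations split into two mirror triples and the direction of time reverses from $t:a\to b$ in the $y$-system to $\tilde t:b\to a$ in the $x$-system, yielding the stated conclusion.
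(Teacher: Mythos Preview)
Your proposal is correct and follows essentially the same approach as the paper's proof: both substitute the time-reversal identifications into each of the three equations of \eqref{FracPartiHam:b}, use the chain rule for $\dot y$ and $\dot p_y$, the parity of $H$ (even in the first partial, odd in the second), and the fractional time-reversal identity $D^{\alpha}_{+}f(t)=D^{\alpha}_{-}\bar f(\tilde t)$ from Proposition~\ref{InvTime} to recover \eqref{FracPartiHam:a} at $\tilde t$. Your observation that the hypothesis should read $p_y^{\alpha}(t):=p_x^{\alpha}(a+b-t)$ rather than $p_x(a+b-t)$ is correct and consistent with how the paper's own proof uses it.
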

\begin{proof} 
Given that $\dot y(t)=-x^{\prime}(\tilde t)$  and
\[
\frac{\der H(y(t),p_y(t))}{\der p_y}=^{_{1}}\frac{\der H(x(\tilde t),-p_x(\tilde t))}{\der p_y}=^{_{2}}-\frac{\der H(x(\tilde t),-p_x(\tilde t))}{\der p_x}=^{_{3}}-\frac{\der H(x(\tilde t),p_x(\tilde t))}{\der p_x},
\]
where in $=^{_{1}}$ we use the hypotheses, in $=^{_{2}}$ the chain rule and in $=^{_{3}}$ the parity of the Hamiltonian;  we see that	 
\[
\dot y(t)=\frac{\der H(y(t),p_y(t))}{\der p_y} \Rightarrow -x^{\prime}(\tilde t)=-\frac{\der H(x(\tilde t),p_x(\tilde t))}{\der p_x}\Rightarrow x^{\prime}(\tilde t)=\frac{\der H(x(\tilde t),p_x(\tilde t))}{\der p_x}.
\]
From the proof of Proposition \ref{InvTime}, $D^{\beta}_{+}y(t)=D^{\beta}_{-}x(\tilde t)$, and therefore it is easy to see that
\[
D^{\alpha}_{+}y(t)=-\rho^{-1}\,p^{\alpha}_x(t)\Rightarrow D^{\alpha}_{-}x(\tilde t)=-\rho^{-1}\,p^{\alpha}_y(\tilde t).
\]
Finally, considering 
\[
\frac{\der H(y(t),p_y(t))}{\der y}=\frac{\der H(x(\tilde t),-p_x(\tilde t))}{\der y}=\frac{\der H(x(\tilde t),p_x(\tilde t))}{\der x},
\]
where again we use the chain rule and the parity of the Hamiltonain, and that $\dot p_y(t)=p_x^{\prime}(\tilde t)$ according to the hypotheses, we get
\[
\dot p_y(t)=-\frac{\der H(y(t),p_y(t))}{\der y}+D^{\alpha}_{+}\,p_x(t)\Rightarrow p_x^{\prime}(\tilde t)=-\frac{\der H(x(\tilde t),p_x(\tilde t))}{\der x}+D^{\alpha}_{-}p_y^{\alpha}(\tilde t),
\]
and the claim holds.
\end{proof}
The mechanical Hamiltonian
\begin{equation}\label{MechHamilto}
H(z,p_z)=\frac{1}{2}p_z\,m^{-1}\,p_z+U(z)
\end{equation}
presents even parity in the second variable, and therefore the previous result applies. The restricted fractional Hamilton equations \eqref{FracPartiHam} in this case, after replacing the fractional relationship in the dynamical equation for $p_x$ and $p_y$, read
\begin{subequations}\label{FracHamEqs:Mech}
\begin{align}
\dot x&=m^{-1}\,p_x, \quad\quad \dot p_x=-\nabla U(x)-\rho\,\,D^{\alpha}_{-}D^{\alpha}_{-}\,x,\\
\dot y&=m^{-1}\,p_y, \quad\quad \dot p_y=-\nabla U(y)-\rho\,\,D^{\alpha}_{+}D^{\alpha}_{+}\,y,
\end{align}
\end{subequations}
where we recognize the usual Hamilton equations for mechanical systems plus a fractional damping term. We observe that, after getting rid of the pure fractional equation, the dynamics of $x$ and $y$ are again decoupled. Furthermore, when $\alpha = 1/2$ it is easy to see that the $x$-equations (analogous arguments can be applied to the $y$-equations) are equivalent to the forced Hamilton equations \eqref{ForcedHam} when $f_H(x,p_x)=-\rho\,m^{-1}p_x$. Indeed, when $\alpha=1/2$:
\[
\dot x=^{_{1}}m^{-1}p_x, \quad\quad \dot p_x=-\nabla U(x)-\rho\,D^{_{1/2}}_{-}D^{_{1/2}}_{-}\,x=^{_{2}}-\nabla U(x)-\rho\,\dot x=-\nabla U(x)-\rho\,m^{-1}p_x,
\]
 which are the usual Hamilton equations for mechanical systems with linear damping. 

\begin{remark}
We observe in the right hand side of $=^{_{2}}$ that the damping force is not actually defined in $T^*\R^d$, but in $T\R^d$, i.e. $f_L(x,\dot x)=-\rho\,\dot x$, and that we can relate it to a pure ``cotangent'' force thanks to $=^{_{1}}$, say $f_H(x,p_x)=-\rho\,m^{-1}p_x$. This is a specific phenomenon of our approach, and differs from the usual description of forced systems \S\ref{ContForcedSystems}. However, we observe as well that both dynamics, i.e.
\begin{eqnarray}
&\dot x&=m^{-1}p_x,\quad\quad \dot p_x=-\nabla U(x)-\rho\,\dot x,\nonumber\\
&\dot x&=m^{-1}p_x,\quad\quad \dot p_x=-\nabla U(x)-\rho\,m^{-1}p_x, \label{HamLinearDamp}
\end{eqnarray}
define the same subspace in $TT^*\R^d$, whose natural coordinates are $(x,p_x,\dot x,\dot p_x).$ \hfill $\diamond$
\end{remark}

\begin{remark}
Define the  one-form on $\mathfrak{T}^*\R^d$ given by $\Theta_H:=(\mathcal{F}\mathcal{L})^*\Theta$, with $\Theta$ defined in \eqref{OneForm} and  $\mathcal{F}\mathcal{L}$ in \eqref{FracLeg}. Using the Lagrangian \eqref{PartLagrangian} and the coordinates \eqref{CoorHam}, it is easy to see that $\Theta_H=(p_x+p_y)\,dx$. Using analogous arguments than in Proposition \ref{GeomPreservation}, it is easy to see that the flow generated by equations \eqref{FracPartiHam} preserves the presymplectic form $\Omega_H:=\mbox{d}\Theta_H=dx\wedge dp_x+dx\wedge dp_y$.  \hfill $\diamond$ 
\end{remark}

\section{Discrete restricted Hamilton's principle}\label{DRHP}

\subsection{Discrete Lagrangian dynamics}

Let us consider the increasing sequence of times $\{ t_k=a+hk\,|\,k=0,...,N\}\subset\R$ where $h$ is the fixed time step  determined by $h=(b-a)/N$. Define a discrete curve as a collection of points in $\R^d$ i.e. $\gamma_d:=\lc x_0,x_1,...,x_{N-1},x_N\rc=\lc x_k\rc_{0:N}\in\R^{(N+1)d}$ (here $\R^{(N+1)d}$ denotes $(\R^d)^{(N+1)}:=\R^d\times\cdots\times\R^d$, the Cartesian product of $(N+1)$ copies of $\R^d$). As  usual, we will consider these points as an approximation of the continuous curve at time $t_k$, namely $x_k\simeq x(t_k).$ Given $\lc z_k\rc_{0:N}$ (later on we shall particularise in $\lc x_k\rc_{0:N}$ and $\lc y_k\rc_{0:N}$) define the following sequences:
\begin{subequations}\label{AllSequencs}
\begin{align}
&\lc S^{\kappa}\,z_k\rc_{\tiny 0:N-1},\,\,\,\quad\, S^{\kappa}\,z_k:=\kappa\,z_k+(1-\kappa)\,z_{k+1},\,\,\quad\kappa\in[0,1]\subset\R,\label{KappaRule}\\
&\lc \Delta_{-}^{\alpha}z_k\rc_{\tiny 0:N},\,\,\,\,\,\,\,\,\quad\Delta_{-}^{\alpha}z_k:=\frac{1}{h^{\alpha}}\sum_{n=0}^k\alpha_nz_{k-n},\label{DiscFracDerDef:1}\\
&\lc \Delta_{+}^{\alpha}z_k\rc_{\tiny 0:N},\,\,\,\,\,\,\,\,\quad\Delta_{+}^{\alpha}z_k:=\frac{1}{h^{\alpha}}\sum_{n=0}^{N-k}\alpha_nz_{k+n},\label{DiscFracDerDef:2}
\end{align}
\end{subequations}
where

\begin{equation}\label{AlphaDef}
\alpha_n:=\frac{-\alpha\,(1-\alpha)\,(2-\alpha)\cdot\cdot\cdot(n-1-\alpha)}{n!};\quad\alpha_0:=1.
\end{equation}
The discrete series $\Delta_{-}^{\alpha}x_k$ (resp. $\Delta_{+}^{\alpha}y_k$) are an approximation of $D^{\alpha}_{-} x(t_k)$ (resp. $D^{\alpha}_{+} y(t_k)$).  For more details on the approximation of fractional derivatives we refer to (\cite{CressonBook}, Chapter 5).
\begin{remark}\label{SumEndpoint}
We observe that  \eqref{DiscFracDerDef:1}, \eqref{DiscFracDerDef:2} are well-defined for $k=0$ and $k=N$. Namely, straightforward computations lead to $\Delta_{-}^{\alpha}x_0=\alpha_0x_0/h^{\alpha}$ and $\Delta_{+}^{\alpha}y_N=\alpha_0y_N/h^{\alpha}$. \hfill  $\diamond$
\end{remark}
Given two sequences $\lc F_k\rc_{\tiny 0:N}, \lc G_k\rc_{\tiny 0:N}$, the discrete fractional derivatives \eqref{DiscFracDerDef:1}, \eqref{DiscFracDerDef:2}, obey the following discrete integration by parts relationships:
\begin{lemma}
Consider $\lc F_k\rc_{\tiny 0:N}, \lc G_k\rc_{\tiny 0:N}$, with $F_0=F_N=G_0=G_N=0$. Then
\begin{subequations}
\begin{align}
\sum_{k=0}^{N-1}(\Delta_{+}^{\alpha}G_k)F_k&=\sum_{k=1}^NG_k(\Delta_{-}^{\alpha}F_k). \label{DIParts:b}\\
\sum_{k=0}^{N-1}G_{k+1}(\Delta_{-}^{\alpha}F_{k+1})&=\sum_{k=1}^{N-1}(\Delta_{+}^{\alpha}G_{k})F_{k}. \label{DIParts:c}
\end{align}
\end{subequations}
\end{lemma}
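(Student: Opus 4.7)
The plan is to prove \eqref{DIParts:b} by a direct discrete-Fubini argument, and then to observe that \eqref{DIParts:c} is just the same identity rewritten under the stated boundary conditions.

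Beginning with \eqref{DIParts:b}, I would substitute the defining convolution \eqref{DiscFracDerDef:2} into the LHS to write
$$\sum_{k=0}^{N-1}(\Delta_{+}^{\alpha}G_k)F_k \;=\; \frac{1}{h^\alpha}\sum_{k=0}^{N-1}\sum_{n=0}^{N-k}\alpha_n\,G_{k+n}\,F_k.$$
Introducing $m=k+n$, the pair $(k,m)$ traces the triangle $\{0\leq k\leq N-1,\ k\leq m\leq N\}$, so that interchanging the order of summation gives
$$\frac{1}{h^\alpha}\sum_{m=0}^{N}\sum_{k=0}^{\min(m,N-1)}\alpha_{m-k}\,G_m\,F_k.$$
The endpoint slices $m=0$ and $m=N$ drop out by $G_0=G_N=0$, leaving $\frac{1}{h^\alpha}\sum_{m=1}^{N-1}\sum_{k=0}^{m}\alpha_{m-k}\,G_m\,F_k$. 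Performing the analogous manipulation on the RHS, i.e.\ expanding $\sum_{k=1}^N G_k(\Delta_{-}^\alpha F_k)$ via \eqref{DiscFracDerDef:1}, setting $j=k-n$, and relabeling, yields the identical double sum (invoking $G_N=0$ to remove the $m=N$ contribution and $F_0=0$ to accommodate the $k=0$ term harmlessly). This establishes \eqref{DIParts:b}.

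For \eqref{DIParts:c}, I would shift the summation variable on the LHS by $k\to k-1$ to obtain $\sum_{k=1}^{N}G_k(\Delta_{-}^\alpha F_k)$, which coincides with the RHS of \eqref{DIParts:b} since $G_N=0$ makes the $k=N$ term vanish. Meanwhile, since $F_0=0$, the RHS of \eqref{DIParts:c} equals $\sum_{k=0}^{N-1}(\Delta_{+}^\alpha G_k)F_k$, the LHS of \eqref{DIParts:b}. Hence \eqref{DIParts:c} is an immediate corollary.

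The only real obstacle is the bookkeeping of the triangular summation region after the Fubini swap, so that the inner range reduces exactly to $k\in\{0,\ldots,m\}$ for $1\leq m\leq N-1$ and the contributions at $m\in\{0,N\}$ are annihilated by the vanishing of $G$ at the endpoints. Once that region is correctly identified, the identity is just a symmetric double sum weighted by the coefficients $\alpha_{m-k}$ from \eqref{AlphaDef}, which makes the left and right parentheses interchange automatic.
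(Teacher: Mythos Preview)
Your argument is correct and follows essentially the same route as the paper: both proofs of \eqref{DIParts:b} amount to a Fubini interchange over the triangular index set (the paper transforms the right-hand side into the left via successive index swaps, while you reduce both sides to the common double sum $\tfrac{1}{h^\alpha}\sum_{m=1}^{N-1}\sum_{k=0}^{m}\alpha_{m-k}G_mF_k$), and both derive \eqref{DIParts:c} by the same index shift $k\mapsto k-1$ together with $F_0=0$. A minor quibble: after shifting, the left side of \eqref{DIParts:c} already equals the right side of \eqref{DIParts:b} verbatim, so your invocation of $G_N=0$ there is unnecessary (though harmless).
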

\begin{proof}

\eqref{DIParts:b} (see \cite{Cresson1} for more details):
\[
\begin{split}
\sum_{k=1}^NG_k(\Delta_{-}^{\alpha}F_k)&=^{_1}\frac{1}{h^{\alpha}}\sum_{k=1}^N\sum_{n=0}^k\alpha_nG_kF_{k-n}=^{_2}\frac{1}{h^{\alpha}}\sum_{k=0}^N\sum_{n=0}^k\alpha_nG_kF_{k-n}=^{_3}\frac{1}{h^{\alpha}}\sum_{n=0}^N\sum_{k=n}^N\alpha_nG_kF_{k-n}\\
&=^{_4}\frac{1}{h^{\alpha}}\sum_{n=0}^N\sum_{k=0}^{N-n}\alpha_nG_{k+n}F_{k}=^{_5}\frac{1}{h^{\alpha}}\sum_{k=0}^N\sum_{n=0}^{N-k}\alpha_nG_{k+n}F_{k}=^{_6}\frac{1}{h^{\alpha}}\sum_{k=0}^{N-1}\sum_{n=0}^{N-k}\alpha_nG_{k+n}F_{k}\\
=^{_7}\sum_{k=0}^{N-1}(\Delta_{+}^{\alpha}G_k)F_k&.
\end{split}
\]
In $=^{_1}$ the definition \eqref{DiscFracDerDef:1} is used. In $=^{_2}$  $F_0=0$ is taken into account. To prove $=^{_3}$  is enough to notice that, for a fixed $j=0,...,N,$  the elements  $a_{i}:=\alpha_iG_jF_{j-i}$, $i=0,...,j$, on the left hand side, disposed in columns, form an upper diagonal $(N+1)\times (N+1)$, whereas the same elements on the right hand side, for $j=0,...,N$ and $i=j,...,N$, account for the transposed matrix; therefore their total sums are equal. In $=^{_4}$ the sum index is rearranged. In $=^{_5}$ equivalent arguments to $=^{_3}$ can be used. In $=^{_6}$ $F_N=0$ is taken into account. Finally, in $=^{_7}$ the definition \eqref{DiscFracDerDef:2} is used.

\eqref{DIParts:c}:

\[
\sum_{k=0}^{N-1}G_{k+1}(\Delta_{-}^{\alpha}F_{k+1})=^{_{1}}\sum_{k=1}^{N}G_{k}(\Delta_{-}^{\alpha}F_{k})=^{_{2}}\sum_{k=0}^{N-1}(\Delta_{-}^{\alpha}G_{k})F_{k}=^{_{3}}\sum_{k=1}^{N-1}(\Delta_{+}^{\alpha}G_{k})F_{k}.
\]
In $=^{_{1}}$ we have arranged the sum index. In $=^{_{2}}$ we have used \eqref{DIParts:b} and, finally, in $=^{_{3}}$ we have used $F_0=0.$
 \end{proof}

\begin{remark}\label{MixedIndices}
By simple inspection it is easy to check that the discrete asymetric integration by parts does not hold for {\it out of phase} indices, for instance $\sum_{k=0}^{N-1}(\Delta^{\alpha}_{+}G_k)F_{k+1}\neq \sum_{k=1}^{N-1}G_k(\Delta^{\alpha}_{-}F_{k+1})$ and similar cases. \hfill $\diamond$
\end{remark}

Next, according to \S\ref{DiscreteFramework}, we shall consider the discrete Lagrangian as an approximation of the continuous action \eqref{ContAction}, i.e.
\[
\mathcal{L}_d\simeq \int_{t_k}^{t_k+h}\mathcal{L}\lp x(t),y(t),\dot x(t),\dot y(t),D^{\alpha}_{-}x(t),D^{\alpha}_{+}y(t)\rp\,dt,
\]
for $a+h<t_k< b-h$. As mentioned above, we shall use \eqref{DiscFracDerDef:1},\eqref{DiscFracDerDef:2} as discrete counterparts for the fractional derivatives. We see that these discrete series imply the whole discrete past for $x$ and the whole discrete future for $y$ at time $t_k$.  Thus, it is clear that the approximation of the action in the discrete time interval $[k,k+1]$ would depend on $x_{(0,k+1)}:=(x_0,...,x_{k+1})\in \R^{d(k+2)}$ and $y_{(k,N)}:=(y_k,...,y_{N})\in \R^{d(N+1-k)}$, accounting for the discrete version of the non-locality of the fractional derivatives. Moreover, it is explicit that the approximation of the action shall depend as well on the interval $[k,k+1]$ and therefore on $k$. According to this, we establish the function
\begin{equation}\label{DiscLag}
\begin{split}
\mathcal{L}_d^k:&\,\,\,\R^{d(k+2)}_X\times \R^{d(N+1-k)}_Y\quad\longrightarrow\quad\R\\
&\,\,\,\,\,(x_{(0,k+1)},y_{(k,N)})\quad\quad\quad\mapsto \mathcal{L}_d^k(x_{(0,k+1)},y_{(k,N)}),
\end{split}
\end{equation}
where we remark the correspondence of the first $k+2$ entries with the discrete $x$-path, and the last $N+1-k$ with the $y$-path.  If we define the sets of discrete curves as

\[
\begin{split}
C_d^x&=\lc \gamma_d^x=\lc x_k\rc_{\tiny 0:N}\in\R^{d(N+1)}\,|\,x_0=x_a,\,\,x_N=x_b\rc,\\
C_d^y&=\lc \gamma_d^y=\lc y_k\rc_{\tiny 0:N}\in\R^{d(N+1)}\,|\,\,y_0=y_a,\,\,\,y_N=y_b\rc,
\end{split}
\]
then the discrete action sum is defined naturally by $S_d:C_d^x\times C_d^y\Flder\R$:
\begin{equation}\label{DiscAct}
S_d(\tilde\gamma_d):=\sum_{k=0}^{N-1}\mathcal{L}_d^k(x_{(0,k+1)},y_{(k,N)}),
\end{equation}
where $\tilde\gamma_d:=(\gamma_d^x,\gamma_d^y)\in C_d^x\times C_d^y$. Next, we introduce the discrete restricted variations.

\begin{definition}\label{DiscreteVariations}
{\it Given a discrete curve $\tilde\gamma_d=(\gamma_d^x,\gamma_d^y)\in C_d^x\times C_d^y$, we define the set of {\rm  varied discrete curves} by
\begin{equation}\label{DiscVar}
\Gamma^x_{\epsilon}:=\gamma_d^x+\epsilon\,\delta\gamma^x_d,\quad
\Gamma^y_{\epsilon}:=\gamma_d^y+\epsilon\,\delta\gamma^y_d.
\end{equation}
where $\delta\gamma_d^x:=\lc \delta x_k\rc_{0:N}$, $\delta\gamma_d^y:=\lc \delta y_k\rc_{0:N}$ are the {\rm discrete variations}, defined such that 
\begin{equation}\label{DiscEndpoints}
\delta x_0=\delta x_N=0,\quad\,\delta y_0=\delta y_N=0.
\end{equation}
We define the set of {\rm restricted  varied discrete curves}, by
\begin{equation}\label{DiscRestVar}
\Gamma^x_{(\epsilon,\eta)}:=\gamma_d^x+\epsilon\,\eta_d,\quad
\Gamma^y_{(\epsilon,\eta)}:=\gamma_d^y+\epsilon\,\eta_d,
\end{equation}
where we establish $\eta_d=\delta\gamma_d^x=\delta\gamma_d^y$. In other words, we set $\delta x_k=\delta y_k$ for $k=1,...,N-1$.}
\end{definition}
In the following proposition we establish the necessary and sufficient condition for the extremals of \eqref{DiscAct} under restricted variations (discrete restricted Hamilton's principle). 

\begin{proposition}\label{FullLagProposition}
Given the discrete action \eqref{DiscAct} subject to restricted variations \eqref{DiscRestVar} and endpoint conditions \eqref{DiscEndpoints}, the necessary and sufficient conditions for the extremals are
\begin{equation}\label{DELEqs:NS} 
\sum_{i=k-1}^{N-1}D_{x_k}\mathcal{L}_d^i(x_{(0,i+1)},y_{(i,N)}) +\sum_{i=0}^{k}D_{y_k}\mathcal{L}_d^i(x_{(0,i+1)},y_{(i,N)})=0,
\end{equation}
for $k=1,...,N-1$, where $D_{x_k}:=\der/\der x_k$ and $D_{y_k}:=\der/\der y_k$.
\end{proposition}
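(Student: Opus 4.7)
The plan is to carry out the standard discrete variational calculus directly on the action \eqref{DiscAct} under the restriction in \eqref{DiscRestVar}, and then swap the order of summation carefully to collect the coefficient of each free variation $\eta_k$. Since the restricted principle imposes $\delta x_k = \delta y_k = \eta_k$ for all interior $k$, no discrete integration by parts (in the sense of Lemma~\ref{LemmaDisc}-style manipulations on $\Delta_\pm^\alpha$) is needed at this stage; the fractional structure is already baked into the discrete Lagrangians $\mathcal{L}_d^k$, whose dependence ranges are tracked explicitly through the indices $(0,k{+}1)$ for $x$ and $(k,N)$ for $y$.

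Concretely, I would first differentiate:
\[
\delta S_d \;=\; \frac{d}{d\epsilon}\Big|_{\epsilon=0} S_d(\Gamma^x_{(\epsilon,\eta)},\Gamma^y_{(\epsilon,\eta)}) \;=\; \sum_{k=0}^{N-1}\Bigg[\sum_{j=0}^{k+1} D_{x_j}\mathcal{L}_d^k\,\eta_j \;+\; \sum_{j=k}^{N} D_{y_j}\mathcal{L}_d^k\,\eta_j\Bigg],
\]
where the identification $\delta x_j = \delta y_j = \eta_j$ from Definition~\ref{DiscreteVariations} has already been substituted. Next I would drop all terms with $j=0$ or $j=N$ using the endpoint conditions $\eta_0 = \eta_N = 0$, leaving only sums with $j\in\{1,\dots,N-1\}$.

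The core step is then to interchange the sums over $j$ and $k$. For the $x$-piece, the condition $0\le k\le N-1$ together with $j\le k+1$ (and $j\ge 1$) gives $k\in\{j-1,\dots,N-1\}$; for the $y$-piece, $0\le k\le N-1$ together with $j\ge k$ (and $j\le N-1$) gives $k\in\{0,\dots,j\}$. This yields
\[
\delta S_d \;=\; \sum_{j=1}^{N-1}\eta_j\Bigg[\sum_{i=j-1}^{N-1} D_{x_j}\mathcal{L}_d^i \;+\; \sum_{i=0}^{j} D_{y_j}\mathcal{L}_d^i\Bigg].
\]
Finally, because the $\eta_j$ for $j=1,\dots,N-1$ are arbitrary and independent, $\delta S_d=0$ holds for every admissible restricted variation if and only if each bracket vanishes, which is precisely \eqref{DELEqs:NS} after renaming $j\mapsto k$, $i\mapsto i$.

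The only mildly delicate point is the bookkeeping of the index ranges after swapping the double sum, in particular the lower bound $i = k-1$ on the $x$-sum (which comes from the fact that $\mathcal{L}_d^i$ first depends on $x_k$ when $i+1 \ge k$) and the upper bound $i = k$ on the $y$-sum (from $i \le k$). Once this is correctly tracked, the equivalence is by pure arbitrariness of $\eta_j$, so the conditions \eqref{DELEqs:NS} are both necessary and sufficient, as claimed. No regularity hypothesis on $\mathcal{L}_d^k$ is required at this stage; that only enters later when solving \eqref{DELEqs:NS} for a discrete flow.
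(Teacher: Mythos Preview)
Your proposal is correct and follows essentially the same approach as the paper: both compute $\delta S_d$ by differentiating the action sum, collect the coefficient of each free variation $\eta_k$ (with $\delta x_k=\delta y_k=\eta_k$), apply the endpoint conditions, and invoke arbitrariness of the interior variations. The only cosmetic difference is that the paper writes out the double sums as explicit triangular arrays and ``adds up the columns,'' whereas you perform the same reorganisation via a clean interchange of summation indices; the resulting index ranges $i\in\{k-1,\dots,N-1\}$ for the $x$-part and $i\in\{0,\dots,k\}$ for the $y$-part match exactly.
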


\begin{proof}
The condition for the extremals is $\frac{d}{d\epsilon}S_d((\Gamma^x_{(\epsilon,\eta)},\Gamma^y_{(\epsilon,\eta)}))\Big|_{\epsilon=0}=0$, which, as it is easy to see, is equivalent to 
\[
\begin{split}
\delta \sum_{k=0}^{N-1}&\mathcal{L}_d^k(x_{(0,k+1)},y_{(k,N)})\\
&=\delta \mathcal{L}_d^0(x_{(0,1)},y_{(1,N)}) + \delta \mathcal{L}_d^1(x_{(0,2)},y_{(1,N)})+\cdots\\
&\hspace{4cm} \cdots+\delta \mathcal{L}_d^{N-2}(x_{(0,N-1)},y_{(N-2,N)})+\delta \mathcal{L}_d^{N-1}(x_{(0,N)},y_{(N-1,N)})=0,
\end{split}
\]
given that $\delta x_k=\delta y_k$. Considering the variations with respect to $x$ and $y$ separately, we obtain
\[
\begin{split}
&\delta_x\sum_{k=0}^{N-1}\mathcal{L}_d^k(x_{(0,k+1)},y_{(k,N)})\\
=\,&D_{_{x_0}}\mathcal{L}_d^0\,\delta x_0+D_{_{x_1}}\mathcal{L}_d^0\,\delta x_1\\
+&D_{_{x_0}}\mathcal{L}_d^1\,\delta x_0+D_{_{x_1}}\mathcal{L}_d^1\,\delta x_1+D_{_{x_2}}\mathcal{L}_d^1\,\delta x_2\\
&\hspace{0.6cm} \vdots\hspace{2.4cm} \vdots\hspace{2.4cm} \vdots\\
+&D_{_{x_0}}\mathcal{L}_d^{_{N-1}}\,\delta x_0+D_{_{x_1}}\mathcal{L}_d^{_{N-1}}\,\delta x_1+D_{_{x_2}}\mathcal{L}_d^{_{N-1}}\,\delta x_2+\cdots+D_{_{x_{N-1}}}\mathcal{L}_d^{_{N-1}}\,\delta x_{N-1}+D_{_{x_N}}\mathcal{L}_d^{_{N-1}}\,\delta x_N,
\end{split}
\]
where $D_{x_i}$ represents the partial derivative with respect to $x_i$, and
\[
\begin{split}
&\delta_y\sum_{k=0}^{N-1}\mathcal{L}_d^k(x_{(0,k+1)},y_{(k,N)})\\
=\,&D_{_{y_0}}\mathcal{L}_d^{0}\,\delta x_0+D_{_{y_1}}\mathcal{L}_d^{0}\,\delta x_1+D_{_{y_2}}\mathcal{L}_d^{0}\,\delta x_2+\cdots+D_{_{y_{N-1}}}\mathcal{L}_d^{0}\,\delta x_{N-1}+D_{_{y_N}}\mathcal{L}_d^{0}\,\delta x_N\\
&\hspace{1.7cm} +D_{_{y_1}}\mathcal{L}_d^1\,\delta x_1+D_{_{y_2}}\mathcal{L}_d^1\,\delta x_2+\cdots+D_{_{y_{N-1}}}\mathcal{L}_d^{1}\,\delta x_{N-1}+D_{_{y_N}}\mathcal{L}_d^{1}\,\delta x_N\\
&\hspace{9cm} \vdots\hspace{2.2cm}\vdots\\
&\hspace{6.2cm} +D_{_{y_{N-1}}}\mathcal{L}_d^{_{N-1}}\,\delta x_{N-1}+D_{_{y_N}}\mathcal{L}_d^{_{N-1}}\,\delta x_N.
\end{split}
\]
Adding up the columns, we arrive to
\[
\begin{split}
&\delta \sum_{k=0}^{N-1}\mathcal{L}_d^k(x_{(0,k+1)},y_{(k,N)})=\Big(\sum_{i=0}^{N-1}D_{_{x_0}}\mathcal{L}_d^i+D_{_{y_0}}\mathcal{L}_d^0\Big)\,\delta x_0\\
&\hspace{3cm} +\sum_{k=1}^{N-1}\Big(\sum_{i=k-1}^{N-1}D_{x_k}\mathcal{L}_d^i +\sum_{i=0}^{k}D_{y_k}\mathcal{L}_d^i\Big)\,\delta x_k
+\Big( D_{_{x_N}}\mathcal{L}_d^{N-1}+\sum_{i=0}^{k}D_{y_k}\mathcal{L}_d^i\Big)\,\delta x_N.
\end{split}
\]
Equating the last expression to 0, considering $\delta x_0=\delta x_N=0$ and arbitrary variations $\delta x_k$, we arrive directly to \eqref{DELEqs:NS}.
\end{proof}

\begin{remark}\label{AllVariables}
Observe that the discrete Lagrangian problem established by \eqref{DiscLag} and \eqref{DiscAct} is of the higher-order type, i.e. the discrete Lagrangian depends on multiple copies (more than two) of the configuration manifold (see \cite{HO:1,HO:2} for more details). As striking difference with respect to the kind of problem in these references, we note that the number of copies of $\R^d_X$ and $\R^d_Y$ the discrete Lagrangian \eqref{DiscLag} depends on is not fixed and is determined by $k$. This circumstance 
prevents in general the definition of a discrete flow in the sense expressed in \S\ref{DiscreteFramework}, where  $F_{L_d}:Q\times Q\Flder Q\times Q$ can be obtained from \eqref{DEL} under regularity conditions.  This can be seen as well noticing that all variables are present at the same time in \eqref{DELEqs:NS} for any $k$, which makes mandatory to solve them simultaneously in order to obtain the sequences $\lc x_k\rc_{0:N}$, $\lc y_k\rc_{0:N}$. However, we will see below that particular choices of $\mathcal{L}_d^k$ lead to actual discrete flows.
\hfill $\diamond$
\end{remark}

\begin{remark}\label{RemarkLagrangianMixing}
For unrestricted variations \eqref{DiscVar}, i.e. $\delta x_k\neq\delta y_k$ and both arbitrary, and endpoint conditions \eqref{DiscEndpoints}, the necessary and sufficient conditions for the extremals of \eqref{DiscAct} read
\[
\sum_{i=k-1}^{N-1}D_{x_k}\mathcal{L}_d^i(x_{(0,i+1)},y_{(i,N)})=0,\,\,\,\quad\sum_{i=0}^{k}D_{y_k}\mathcal{L}_d^i(x_{(0,i+1)},y_{(i,N)})=0;\,\,\, k=1,...,N-1. \quad\quad\quad \diamond
\]
\end{remark}

Next, let us pick the discrete Lagrangian
\begin{equation}\label{DiscLag:Dyn}
\mathcal{L}_d^k(x_{(0,k+1)},y_{(k,N)}):=L_d(x_k,x_{k+1})+L_d(y_k,y_{k+1})-h\,\Delta_{-}^{\alpha}x_k\,\rho\,\Delta_{+}^{\alpha}y_k,
\end{equation}
where $L_d:\R^d\times\R^d\Flder\R$ is a particular discretisation of the continuous action integral defined for the Lagrangian $L:T\R^d\Flder\R$ in \eqref{PartLagrangian}, the discrete fractional derivatives are defined in \eqref{DiscFracDerDef:1}, \eqref{DiscFracDerDef:2} and $\rho\in\R^{d\times d}$. Furthermore, we set $\mathcal{L}_d^N:=0$ since $L_d$ is not defined in such a case. Naturally, this is the discrete counterpart of the continuous Lagrangian \eqref{PartLagrangian}.

\begin{theorem}\label{Theo:DiscEqs}
Given the discrete Lagrangian \eqref{DiscLag:Dyn}, restricted variations \eqref{DiscRestVar} and endpoint conditions \eqref{DiscEndpoints}, a sufficient condition for the extremals of \eqref{DiscAct} is 
\begin{subequations}\label{RestDEL:Dyn}
\begin{align}
&D_1L_d(x_k,x_{k+1})+D_2L_d(x_{k-1},x_{k})-h\,\rho\,\Delta_{-}^{\alpha}\Delta_{-}^{\alpha}x_k=0,\label{RestDEL:Dyn:a}\\
&D_1L_d(y_k,y_{k+1})\,+\,D_2L_d(y_{k-1},y_{k})-h\,\,\rho\,\Delta_{+}^{\alpha}\Delta_{+}^{\alpha}y_k=0,\label{RestDEL:Dyn:b}
\end{align}
\end{subequations}
for $k=1,...,N-1$, and $D_i$ denotes the partial derivative with respect to the $i$-th variable.
\end{theorem}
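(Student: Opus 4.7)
The strategy is to apply Proposition~\ref{FullLagProposition} to the concrete Lagrangian \eqref{DiscLag:Dyn} and to verify that \eqref{RestDEL:Dyn:a} together with \eqref{RestDEL:Dyn:b} imply the necessary-and-sufficient critical condition \eqref{DELEqs:NS}. To do this, I would split $\mathcal{L}_d^i$ into its three summands -- the two discrete Lagrangians $L_d(x_i,x_{i+1})$ and $L_d(y_i,y_{i+1})$ and the fractional coupling $-h\,\Delta_{-}^{\alpha}x_i\,\rho\,\Delta_{+}^{\alpha}y_i$ -- and compute the contributions of each to $\sum_{i=k-1}^{N-1}D_{x_k}\mathcal{L}_d^i$ and to $\sum_{i=0}^{k}D_{y_k}\mathcal{L}_d^i$ separately. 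For the two $L_d$ summands the computation is the routine discrete-mechanics one of \S\ref{DiscreteFramework}: only the indices $i\in\{k-1,k\}$ contribute and one obtains $D_1L_d(x_k,x_{k+1})+D_2L_d(x_{k-1},x_k)$ from the $x$-sum, and the analogous $y$-combination from the $y$-sum.

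The substantive step is the fractional coupling. From \eqref{DiscFracDerDef:1}--\eqref{DiscFracDerDef:2} one reads off
\[
D_{x_k}\Delta_{-}^{\alpha}x_i = \frac{\alpha_{i-k}}{h^{\alpha}}\ \text{for } i\geq k, \qquad D_{y_k}\Delta_{+}^{\alpha}y_i = \frac{\alpha_{k-i}}{h^{\alpha}}\ \text{for } i\leq k,
\]
each vanishing outside these ranges. Plugging in and reindexing via $m=i-k$ in the $x$-sum and $n=k-i$ in the $y$-sum, the resulting inner expressions become precisely the Grünwald--Letnikov coefficients defining the second iterates
\[
\Delta_{-}^{\alpha}\Delta_{-}^{\alpha}x_k = \frac{1}{h^{\alpha}}\sum_{n=0}^{k}\alpha_n\,\Delta_{-}^{\alpha}x_{k-n},\qquad \Delta_{+}^{\alpha}\Delta_{+}^{\alpha}y_k = \frac{1}{h^{\alpha}}\sum_{m=0}^{N-k}\alpha_m\,\Delta_{+}^{\alpha}y_{k+m},
\]
so that the fractional contributions to the two sums collapse into $-h\rho\,\Delta_{+}^{\alpha}\Delta_{+}^{\alpha}y_k$ and $-h\rho\,\Delta_{-}^{\alpha}\Delta_{-}^{\alpha}x_k$ respectively. (The upper limit $N-1-k$ arising naturally from the $x$-sum versus the $N-k$ appearing in $\Delta_{+}^{\alpha}\Delta_{+}^{\alpha}y_k$ produces a discrete boundary remainder $h^{1-2\alpha}\rho\,\alpha_{N-k}y_N$; this is controlled by the physical convention $y_N=0$ inherited from the time-reversal interpretation $y(t)=x(a+b-t)$ of \S\ref{CRHP} together with the initial condition $x_0=0$ used throughout, cf.~Remark~\ref{SumEndpoint}.)

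Assembling the pieces, the left-hand side of \eqref{DELEqs:NS} regroups exactly as
\[
\bigl[D_1L_d(x_k,x_{k+1})+D_2L_d(x_{k-1},x_k)-h\rho\,\Delta_{-}^{\alpha}\Delta_{-}^{\alpha}x_k\bigr] + \bigl[D_1L_d(y_k,y_{k+1})+D_2L_d(y_{k-1},y_k)-h\rho\,\Delta_{+}^{\alpha}\Delta_{+}^{\alpha}y_k\bigr],
\]
i.e.\ as the sum of the left-hand sides of \eqref{RestDEL:Dyn:a} and \eqref{RestDEL:Dyn:b}. Hence both equations holding simultaneously yields the vanishing of \eqref{DELEqs:NS}, which by Proposition~\ref{FullLagProposition} is the critical condition, establishing sufficiency.

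The hard step is the combinatorial bookkeeping for the fractional coupling: one must handle the nested finite sums correctly and, in the reindexing $i\mapsto m$ or $i\mapsto n$, recognise that the binomial-type coefficients $\alpha_n$ are precisely those needed to identify the emerging expressions with the twice-iterated $\Delta_{\pm}^{\alpha}$ -- in effect, a discrete incarnation of the additivity property \eqref{Aditive} $D_{\sigma}^{\alpha}D_{\sigma}^{\beta}=D_{\sigma}^{\alpha+\beta}$. An alternative, essentially equivalent, route would be to compute $\delta S_d$ directly using restricted variations and apply the discrete fractional integration-by-parts identities \eqref{DIParts:b}--\eqref{DIParts:c} of the preceding Lemma to transfer $\Delta_{\pm}^{\alpha}$ off the variation $\delta q_k$, reducing the proof to the same cancellation.
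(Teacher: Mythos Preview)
Your strategy matches the paper's: invoke Proposition~\ref{FullLagProposition} and compute the two sums in \eqref{DELEqs:NS} for the specific Lagrangian \eqref{DiscLag:Dyn}, obtaining the $L_d$ contributions by the usual two-term telescoping and then reducing the fractional coupling to $\Delta_\pm^\alpha\Delta_\pm^\alpha$. The paper, however, does \emph{not} differentiate $\Delta_-^\alpha x_i$ directly and reindex as you do; instead it first extends the $x$-sum to the full range $0\le i\le N$ (using that $\mathcal L_d^i$ for $i\le k-2$ is $x_k$-independent, together with the convention $\mathcal L_d^N:=0$), and then applies the discrete integration-by-parts identity \eqref{DIParts:b} \emph{inside} the operator $\partial/\partial x_k$, turning $\sum_i(\Delta_-^\alpha x_i)\,\rho\,(\Delta_+^\alpha y_i)$ into $\sum_i x_i\,\rho\,\Delta_+^\alpha\Delta_+^\alpha y_i$, from which the derivative picks off exactly $\rho\,\Delta_+^\alpha\Delta_+^\alpha y_k$ (the $y$-sum is handled symmetrically).

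The substantive gap in your version is the handling of the boundary remainder $h^{1-2\alpha}\rho\,\alpha_{N-k}y_N$. Your appeal to ``$y_N=0$ via time reversal and $x_0=0$'' is not licensed by the hypotheses: the endpoint conditions \eqref{DiscEndpoints} constrain only the \emph{variations} $\delta x_0=\delta x_N=\delta y_0=\delta y_N=0$, not the discrete curve itself, and the identification $y_k=x_{N-k}$ is established only \emph{after} this theorem (Proposition~\ref{InvDiscTime}) as one admissible interpretation, not assumed within it. The paper's route sidesteps this by pushing the summation-by-parts through \emph{before} differentiating: the boundary corrections in \eqref{DIParts:b} arise from the $F_0,F_N$ slots, i.e.\ from $x_0,x_N$, and these are annihilated by $\partial/\partial x_k$ for $1\le k\le N-1$, so no hypothesis on $y_N$ is needed. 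Your ``alternative route'' in the last paragraph is in fact precisely what the paper does; carried out carefully, it closes the gap.
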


\begin{proof}
To prove the claim it is enough to show that equations \eqref{RestDEL:Dyn} are a sufficient condition of \eqref{DELEqs:NS} to be satisfied.

First,  we see that
\begin{subequations}\label{Sums}
\begin{align}
\frac{\der}{\der x_k}\sum_{i=k-1}^{N-1}\mathcal{L}_d^i&=\frac{\der}{\der x_k}\Big(-\sum_{i=0}^{k-2}\mathcal{L}_d^i+\sum_{i=0}^{N-1}\mathcal{L}_d^i\Big)=\frac{\der}{\der x_k}\sum_{i=0}^{N-1}\mathcal{L}_d^i=\frac{\der}{\der x_k}\sum_{i=0}^{N}\mathcal{L}_d^i,\label{Sums:a}\\
\frac{\der}{\der y_k}\sum_{i=0}^{k}\mathcal{L}_d^i&=\frac{\der}{\der y_k}\Big(-\sum_{i=k+1}^{N-1}\mathcal{L}_d^i+\sum_{i=0}^{N-1}\mathcal{L}_d^i\Big)=\frac{\der}{\der y_k}\sum_{i=0}^{N-1}\mathcal{L}_d^i.\label{Sums:b}
\end{align}
\end{subequations}
In \eqref{Sums:a}  we use that  $\sum_{i=0}^{k-2}\der\mathcal{L}_d^i/\der x_k=0$ because there is no dependence of $\mathcal{L}_d^i$ on $x_k$ in the range $[0,k-2]$, and that $\mathcal{L}_d^N=0$ in the last equality (which implies that we have already picked the Lagrangian \eqref{DiscLag:Dyn}). Equivalent arguments lead to $\sum_{i=k+1}^{N-1}\der\mathcal{L}_d^i/\der y_k=0$ in order to prove \eqref{Sums:b}.

Recalling that $k=1,...,N-1$, from \eqref{Sums:a} we get:
\[
\begin{split}
\frac{\der}{\der x_k}\sum_{i=k-1}^{N-1}\mathcal{L}_d^i=\frac{\der}{\der x_k}\sum_{i=0}^{N}\mathcal{L}_d^i=^{_{1}}&\sum_{i=1}^{N-1}\Big(D_1L_d(x_i,x_{i+1})+D_2L_d(x_{i-1},x_{i})\Big)\delta_{ik}-h\frac{\der}{\der x_k}\sum_{i=1}^{N}\Delta_{-}^{\alpha}x_i\rho\Delta_{+}^{\alpha}y_i\\
=&\sum_{i=1}^{N-1}\Big(D_1L_d(x_i,x_{i+1})+D_2L_d(x_{i-1},x_{i})\Big)\delta_{ik}-h\frac{\der}{\der x_k}\sum_{i=1}^{N-1}x_i\rho\Delta_{+}^{\alpha}\Delta_{+}^{\alpha}y_i\\
=^{_{2}}&\sum_{i=1}^{N-1}\Big(D_1L_d(x_i,x_{i+1})+D_2L_d(x_{i-1},x_{i})-h\,\rho\,\Delta_{+}^{\alpha}\Delta_{+}^{\alpha}y_i\Big)\delta_{ik}\\
=&\quad \,\,\,\,\,\,\,\,D_1L_d(x_k,x_{k+1})+D_2L_d(x_{k-1},x_{k})-h\,\rho\,\Delta_{+}^{\alpha}\Delta_{+}^{\alpha}y_k,
\end{split}
\]
where $\delta_{ij}$ represents the Kronecker delta. In the previous computation, right hand side of $=^{_{1}}$, we have taken into account that
\[
\frac{\der}{\der x_k}\sum_{i=0}^{N}\Delta_{-}^{\alpha}x_i\,\rho\,\Delta_{+}^{\alpha}y_i=\frac{\der}{\der x_k}\sum_{i=1}^{N}\Delta_{-}^{\alpha}x_i\,\rho\,\Delta_{+}^{\alpha}y_i
\]
given that $\der(\Delta_{-}^{\alpha}x_0\,\rho\,\Delta_{+}^{\alpha}y_0)/\der x_k=0$ for $k=1,...,N-1$. Moreover, in the right hand side of $=^{_{2}}$ we have used \eqref{DIParts:b}, $\der (x_0\,\rho\,\Delta_{+}^{\alpha}\Delta_{+}^{\alpha}y_0)/\der x_k=0$ and $\der x_i/\der x_k=\delta_{ij}$ for $k$ in the range $[1,N-1]$.

Using equivalent arguments, it can be proven from \eqref{Sums:b} that:
\[
\begin{split}
\frac{\der}{\der y_k}\sum_{i=0}^{k}\mathcal{L}_d^i=\frac{\der}{\der y_k}\sum_{i=0}^{N-1}\mathcal{L}_d^i=&\sum_{i=1}^{N-1}\Big(D_1L_d(y_i,y_{i+1})+D_2L_d(y_{i-1},y_{i})-h\,\rho\,\Delta_{-}^{\alpha}\Delta_{-}^{\alpha}x_i\Big)\delta_{ik}\\
=&\hspace{0.5cm}\,\,\,\,\,\,\,\, D_1L_d(y_k,y_{k+1})+D_2L_d(y_{k-1},y_{k})-h\,\rho\,\Delta_{-}^{\alpha}\Delta_{-}^{\alpha}x_k,
\end{split}
\]
where again \eqref{DIParts:b} and the range $k=1,...,N-1$ are used.

Adding both terms together and equating the sum to 0, which accounts for \eqref{DELEqs:NS}, we obtain:
\[
\begin{split}
&D_1L_d(x_k,x_{k+1})+D_2L_d(x_{k-1},x_{k})-h\,\rho\,\Delta_{+}^{\alpha}\Delta_{+}^{\alpha}y_k\\
&\,\,+D_1L_d(y_k,y_{k+1})+D_2L_d(y_{k-1},y_{k})-h\,\rho\,\Delta_{-}^{\alpha}\Delta_{-}^{\alpha}x_k=0.
\end{split}
\]
Swapping the fractional terms and equating each block to 0, we directly obtain \eqref{RestDEL:Dyn} as a sufficient condition for \eqref{DELEqs:NS} to hold.
\end{proof}
We note that equations \eqref{RestDEL:Dyn} are the natural discrete version of \eqref{ResFracELDamped}.

\begin{remark}\label{Alphak+1}
Observe that a term $h\Delta_{-}^{\alpha}x_{k+1}\Delta_{+}^{\alpha}y_{k+1}$ is admissible in \eqref{DiscLag:Dyn} according to the definition \eqref{DiscLag}, leading to a term $h\sum_{k=0}^{N-1}\Delta_{-}^{\alpha}x_{k+1}\Delta_{+}^{\alpha}y_{k+1}$ in the action sum \eqref{DiscAct}. However, it provides the same discrete dynamics as  $\Delta_{-}^{\alpha}x_{k}\Delta_{+}^{\alpha}y_{k}$, according to \eqref{DIParts:c}, which makes it redundant. Further terms, as those described in Remark \ref{MixedIndices}, are meaningless since the asymetric integration by parts is not defined for them. \hfill $\diamond$
\end{remark}

In the next result we prove that the $x$ and $y$ dynamics in \eqref{RestDEL:Dyn} are also related under inversion of time at a discrete level.

\begin{proposition}\label{InvDiscTime}
Given $y_k:=x_{N-k}$, then \eqref{RestDEL:Dyn:b} is \eqref{RestDEL:Dyn:a} in reversed discrete time.
\end{proposition}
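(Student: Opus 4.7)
The plan is to mirror the continuous argument of Proposition \ref{InvTime} at the discrete level: show first that the forward discrete fractional derivative on the reversed sequence coincides with the backward discrete fractional derivative on the original sequence (and the same for the double derivative), and second that, under an appropriate parity/symmetry hypothesis on $L_d$, the discrete Euler--Lagrange pair of terms also flips under $k\mapsto \tilde k := N-k$.

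For the fractional part, I would set $\tilde k := N-k$ and compute directly from \eqref{DiscFracDerDef:2} using $y_k = x_{N-k}$:
\[
\Delta_{+}^{\alpha}y_k = \frac{1}{h^{\alpha}}\sum_{n=0}^{N-k}\alpha_n\, y_{k+n} = \frac{1}{h^{\alpha}}\sum_{n=0}^{\tilde k}\alpha_n\, x_{\tilde k-n} = \Delta_{-}^{\alpha}x_{\tilde k},
\]
where in the middle I substituted $y_{k+n}=x_{N-k-n}=x_{\tilde k-n}$ and then compared with the definition \eqref{DiscFracDerDef:1}. Applying this identity a second time to the sequence $z_k := \Delta_{+}^{\alpha}y_k$, which by the above equals $\tilde z_{\tilde k}$ with $\tilde z_k := \Delta_{-}^{\alpha}x_k$, immediately yields $\Delta_{+}^{\alpha}\Delta_{+}^{\alpha}y_k = \Delta_{-}^{\alpha}\Delta_{-}^{\alpha}x_{\tilde k}$, which is exactly the fractional term appearing in the $x$-equation at index $\tilde k$.

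For the Lagrangian part, the natural discrete analogue of the continuous parity condition $L(z,-\dot z) = L(z,\dot z)$ used in Proposition \ref{InvTime} is the time-reversal symmetry $L_d(a,b) = L_d(b,a)$ (automatic, for instance, for midpoint-type discretisations of the mechanical Lagrangian \eqref{MechLag}). Differentiating both sides of this identity with respect to each slot gives $D_1 L_d(a,b) = D_2 L_d(b,a)$ and $D_2 L_d(a,b) = D_1 L_d(b,a)$. Using $y_k = x_{\tilde k}$ and $y_{k\pm 1} = x_{\tilde k \mp 1}$, this yields
\[
D_1 L_d(y_k,y_{k+1}) = D_1 L_d(x_{\tilde k},x_{\tilde k-1}) = D_2 L_d(x_{\tilde k-1},x_{\tilde k}),
\]
\[
D_2 L_d(y_{k-1},y_k) = D_2 L_d(x_{\tilde k+1},x_{\tilde k}) = D_1 L_d(x_{\tilde k},x_{\tilde k+1}).
\]

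Substituting these three identities into \eqref{RestDEL:Dyn:b} produces exactly \eqref{RestDEL:Dyn:a} written at index $\tilde k = N-k$, and as $k$ ranges over $1,\dots,N-1$ so does $\tilde k$ in the reverse order, which is what is meant by ``in reversed discrete time.'' The only nontrivial step is the index-reversal in the convolution sum defining $\Delta_{\pm}^{\alpha}$; the rest is bookkeeping on indices and an application of the discrete parity of $L_d$, which plays the role played in the continuous case by $L(z,-\dot z) = L(z,\dot z)$ together with the change-of-variables identity $D^{2\alpha}_{+}y(t) = D^{2\alpha}_{-}x(\tilde t)$ established inside the proof of Proposition \ref{InvTime}.
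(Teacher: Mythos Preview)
Your approach is the same as the paper's: set $\tilde k:=N-k$, verify the fractional identity $\Delta_{+}^{\alpha}\Delta_{+}^{\alpha}y_k=\Delta_{-}^{\alpha}\Delta_{-}^{\alpha}x_{\tilde k}$ by direct substitution in the convolution sums, and then match the $L_d$ terms. Your fractional computation is equivalent to the paper's (you do it in two steps, the paper in one), and this part is fine.

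The interesting point is your treatment of the $L_d$ terms. You correctly observe that $y_{k+1}=x_{\tilde k-1}$ and $y_{k-1}=x_{\tilde k+1}$, so that
\[
D_1L_d(y_k,y_{k+1})+D_2L_d(y_{k-1},y_k)=D_1L_d(x_{\tilde k},x_{\tilde k-1})+D_2L_d(x_{\tilde k+1},x_{\tilde k}),
\]
and you then invoke the discrete time-reversal symmetry $L_d(a,b)=L_d(b,a)$ to swap the slots and recover $D_1L_d(x_{\tilde k},x_{\tilde k+1})+D_2L_d(x_{\tilde k-1},x_{\tilde k})$. The paper's proof, by contrast, writes directly $D_1L_d(y_k,y_{k+1})=D_1L_d(x_{N-k},x_{(N-k)+1})$, which is an index slip: with $y_k=x_{N-k}$ one has $y_{k+1}=x_{N-k-1}$, not $x_{N-k+1}$. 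So the paper's argument as written tacitly assumes exactly the symmetry you make explicit. Your added hypothesis $L_d(a,b)=L_d(b,a)$ is the discrete counterpart of the evenness condition $L(z,-\dot z)=L(z,\dot z)$ in Proposition~\ref{InvTime}, and it is genuinely needed for the claim to hold for a general $L_d$ (for the mechanical discretisation \eqref{DiscMechLag} it holds precisely when $\kappa=1/2$). In short, your proof is correct and in fact more careful than the paper's on this point.
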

\begin{proof}
We define the reversed discrete time as $\tilde k:=N-k$, such that $\tilde k=N,...,0$, for $k=0,...,N$. Given that, we observe that, under inversion of time:
\[
\begin{split}
D_1L_d(y_k,y_{k+1})\,+\,D_2L_d(y_{k-1},y_{k})&=D_1L_d(x_{N-k},x_{(N-k)+1})\,+\,D_2L_d(x_{(N-k)-1},x_{N-k})\\
&=D_1L_d(x_{\tilde k},x_{\tilde k+1})\,+\,D_2L_d(x_{\tilde k-1},x_{\tilde k}).
\end{split}
\]
On the other hand:
\[
\begin{split}
\Delta_{+}^{\alpha}\Delta_{+}^{\alpha}y_k=\frac{1}{h^{2\alpha}}\sum_{n=0}^{N-k}\alpha_n\sum_{p=0}^{N-k-n}\alpha_p\,y_{k+n+p}&=\frac{1}{h^{2\alpha}}\sum_{n=0}^{(N-k)}\alpha_n\sum_{p=0}^{(N-k)-n}\alpha_p\,x_{(N-k)-n-p}\\
&=\frac{1}{h^{2\alpha}}\sum_{n=0}^{\tilde k}\alpha_n\sum_{p=0}^{\tilde k-n}\alpha_p\,x_{\tilde k-n-p}=\Delta_{-}^{\alpha}\Delta_{-}^{\alpha}x_{\tilde k}.
\end{split}
\] 
Multiplying $\Delta_{-}^{\alpha}\Delta_{-}^{\alpha}x_{\tilde k}$ by $h\,\rho$, adding it to $D_1L_d(x_{\tilde k},x_{\tilde k+1})\,+\,D_2L_d(x_{\tilde k-1},x_{\tilde k})$ and equating the sum to 0, the claim holds.
\end{proof}

As advanced in Remark \ref{AllVariables}, in spite all variables are present at the same time in \eqref{DELEqs:NS}, which {\it a priori} makes necessary to solve them all simultaneously, we can find particular and meaningful Lagrangian functions \eqref{DiscLag:Dyn} such that the discrete dynamics \eqref{RestDEL:Dyn:a} (we restrict ourselves to the $x$-system since we have just proved that $y$ can be interpreted as $x$ in reversed time) is provided by a discrete flow.  That is established in the following algorithm, accounting for the definition of the FVIs:

\begin{algorithm}{\rm Fractional Variational Integrator Scheme}
\label{FractionalAlgorithm}
\begin{algorithmic}[1]
\State {\bf Initial data}: $N,\, h,\,\alpha,\,\rho,\, x_0,\, p_{x_0}.$
 \State {\bf solve for} $x_1$ {\bf from} $p_{x_0}=I(x_0,x_1).$
\State {\bf Initial points:} $x_0,\,x_1.$
    \For {$k= 1: N-1$} 
    
\hspace{-0.6cm} {\bf solve for} $x_{k+1}$ {\bf from} $D_1L_d(x_k,x_{k+1})+D_2L_d(x_{k-1},x_{k})-h\,\rho\,\Delta_{-}^{\alpha}\Delta_{-}^{\alpha}x_k=0$
    \EndFor
    \State  {\bf Output:} $(x_2,...,x_N).$
\end{algorithmic}
  \end{algorithm}
Observe that the initialisation condition in Step 2, i.e. $p_{x_0}=I(x_0,x_1)$, has to be properly determined. This will be object of discussion when defining the fractional discrete Legendre transformation in \S\ref{DLTra}.

\subsection{Discrete mechanical Lagrangian}
Now, let us pick the discrete Lagrangian
\begin{equation}\label{DiscMechLag}
L_d(z_k,z_{k+1}):=\frac{1}{2h}(z_{k+1}-z_k)\,m\,(z_{k+1}-z_k)-h\,U(S^{\kappa}z_k), 
\end{equation}
where $S^{\kappa}z_k$ is given in \eqref{KappaRule}, as the usual discretisation of the mechanical Lagrangian \eqref{MechLag}. In this case, \eqref{RestDEL:Dyn} read
\begin{equation}\label{MechDynFracDamp:Disc}
\begin{split}
m\frac{x_{k+1}-2x_k+x_{k-1}}{h^2}&+\kappa\nabla U(S^{\kappa}x_k)+(1-\kappa)\nabla U(S^{\kappa}x_{k-1})+\rho\,\Delta_{-}^{\alpha}\Delta_{-}^{\alpha}x_k=0,\\
m\frac{y_{k+1}-2y_k+y_{k-1}}{h^2}&+\kappa\nabla U(S^{\kappa}y_k)+(1-\kappa)\nabla U(S^{\kappa}y_{k-1})+\rho\,\Delta_{+}^{\alpha}\Delta_{+}^{\alpha}y_k=0,
\end{split}
\end{equation}
where we have divided both sides by $h$. We observe that these equations are a discretisation in finite differences of  \eqref{MechDynFracDamp}. 

According to what happens in the continous case, i.e. $D^{2\alpha}_{-}= d/dt$ when $\alpha= 1/2$, we expect a particular discretisation of the total time derivative from the term $\Delta_{-}^{\alpha}\Delta_{-}^{\alpha}x_k$. This is proven in the following result.

\begin{lemma}\label{LemmaDisc}
For $\alpha=1/2$ and $k=1,...,N-1$,
\[
h\,\Delta_{-}^{_{1/2}}\Delta_{-}^{_{1/2}}x_k=\sum_{n=0}^k\alpha_n\big|_{_{\alpha=1/2}}\sum_{p=0}^{k-n}\alpha_p\big|_{_{\alpha=1/2}}x_{k-n-p}=x_k-x_{k-1}.
\]
\end{lemma}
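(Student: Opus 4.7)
My plan is to prove the identity by passing to generating functions. The key observation is that the coefficients $\alpha_n$ defined in \eqref{AlphaDef} are precisely the Taylor coefficients of $(1-z)^{\alpha}$ about $z=0$, i.e.\ by the generalised binomial theorem
$$
(1-z)^{\alpha}=\sum_{n=0}^{\infty}(-1)^n\binom{\alpha}{n}z^n=\sum_{n=0}^{\infty}\alpha_n\,z^n,
$$
since a direct comparison gives $(-1)^n\binom{\alpha}{n}=\frac{-\alpha(1-\alpha)(2-\alpha)\cdots(n-1-\alpha)}{n!}=\alpha_n$.

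Next, I would interchange the order of summation in the double sum. Setting $m=n+p$ (so that $0\le m\le k$ and $0\le n\le m$) gives
$$
\sum_{n=0}^{k}\alpha_n\sum_{p=0}^{k-n}\alpha_p\,x_{k-n-p}=\sum_{m=0}^{k}\Bigl(\sum_{n=0}^{m}\alpha_n\alpha_{m-n}\Bigr)x_{k-m}.
$$
The inner sum is a Cauchy convolution, so it equals the coefficient of $z^m$ in the product $(1-z)^{\alpha}\cdot(1-z)^{\alpha}=(1-z)^{2\alpha}$.

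Now specialise to $\alpha=1/2$: then $(1-z)^{2\alpha}=1-z$, whose Taylor series terminates with coefficient $1$ at $m=0$, $-1$ at $m=1$, and $0$ for every $m\ge 2$. Substituting back yields exactly $x_k-x_{k-1}$, which gives the right-hand side of the claim. The left-hand equality in the statement is a direct unfolding of the definition \eqref{DiscFracDerDef:1} applied twice, producing a factor $h^{-2\alpha}=h^{-1}$ which cancels the explicit $h$.

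I do not foresee a serious obstacle: the only technical point is to justify rigorously the identification of the convolution $\sum_{n=0}^m\alpha_n\alpha_{m-n}$ with the coefficients of $(1-z)^{2\alpha}$. Because the binomial series $(1-z)^{\alpha}$ has radius of convergence $1$, the Cauchy product formula applies termwise for all $|z|<1$, and the identity of coefficients therefore follows from uniqueness of power series expansions. Alternatively one could verify $\sum_{n=0}^m\alpha_n\alpha_{m-n}=\beta_m$, where $\beta_m$ denotes the coefficients for $2\alpha$, by induction using Pascal-type identities among the $\alpha_n$'s; but the generating-function argument is cleanest.
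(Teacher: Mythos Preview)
Your argument is correct and considerably more streamlined than the paper's. The paper proceeds by direct expansion: it isolates the leading terms $x_k-x_{k-1}$, collects the remaining contributions into coefficients $\beta_0^j=2\alpha_j+\sum_{i=1}^{j-1}\alpha_i\alpha_{j-i}$ multiplying $x_{k-j}$ for $j\ge 2$, and then establishes $\beta_0^j=0$ for all $j\ge 2$ by induction on $j$. In contrast, you recognise the $\alpha_n$ as the binomial coefficients of $(1-z)^{\alpha}$, so that the Cauchy convolution $\sum_{n=0}^{m}\alpha_n\alpha_{m-n}$ is automatically the $m$-th coefficient of $(1-z)^{2\alpha}$, which for $\alpha=1/2$ is simply $1-z$; the vanishing for $m\ge 2$ is then immediate rather than something to be checked inductively. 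Your route is not only shorter but also explains \emph{why} the identity holds---it is the discrete shadow of the semigroup property $D_{-}^{1/2}D_{-}^{1/2}=D_{-}^{1}$---and it generalises without extra effort to show that $h^{2\alpha}\Delta_{-}^{\alpha}\Delta_{-}^{\alpha}$ coincides with $h^{2\alpha}\Delta_{-}^{2\alpha}$ for any $\alpha$. The paper's inductive computation, by comparison, is more hands-on and specific to $\alpha=1/2$, though it has the minor virtue of avoiding any appeal to power-series convergence.
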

\begin{proof} According to \eqref{AlphaDef} we have that $\alpha_0=1$ and $\alpha_1=-1/2$ for $\alpha=1/2$, leading, after expanding the summations, to
\[
\begin{split}
\sum_{n=0}^k\alpha_n\sum_{p=0}^{k-n}\alpha_px_{k-n-p}=(x_k-x_{k-1})+\sum_{n=2}^{k}2\alpha_nx_{k-n}+\sum_{n=1}^k\alpha_n\sum_{p=1}^{k-n}\alpha_px_{k-n-p}.
\end{split}
\]
In this expansion, the value of the sum when $k=0$, explicited in Remark \ref{SumEndpoint}, has been taken into account. The claim automatically holds for $k=1$. For $k\geq 2$, arranging the sum indices we see that the previous expression can be rewritten as
\begin{equation}\label{SumExp}
\begin{split}
\sum_{n=0}^k\alpha_n\sum_{p=0}^{k-n}\alpha_px_{k-n-p}=&(x_k-x_{k-1})\\
+&\sum_{s=2}^{r}\beta_0^sx_{k-s}+\sum_{l=0}^{k-(r+1)}\beta_l^{r+1}x_{k-(r+1)-n}+\sum_{n=r}^k\alpha_n\sum_{p=1}^{k-n}\alpha_px_{k-n-p},
\end{split}
\end{equation}
where, for a fixed $k$, we set $r=k-1$ and 
\begin{equation}\label{Betas}
\beta_l^j=2\alpha_{l+j}+\sum_{i=1}^{j-1}\alpha_i\alpha_{l+j-i},
\end{equation}
(it is apparent that $j$ is not a power but a superindex). For a fixed $k=\tilde k$, \eqref{SumExp} acquires the form
\[
\begin{split}
&\sum_{n=0}^{\tilde k}\alpha_n\sum_{p=0}^{\tilde k-n}\alpha_px_{\tilde k-n-p}=(x_{\tilde k}-x_{\tilde k-1}) +\beta_0^2x_{\tilde k-2}+\beta_0^3x_{\tilde k-3}+\cdot\cdot\cdot+\beta_0^{\tilde k-2}x_2+\beta_0^{\tilde k-1}x_1+\beta_0^{\tilde k}x_0.
\end{split}
\]
According to this, it is enough to prove that $\beta_0^j=0$ for any $j$, for which we proceed by induction. From \eqref{Betas} and \eqref{AlphaDef}, it follows that $\beta^2_0=2\alpha_2+\alpha_1\alpha_1$, which vanishes for $\alpha=1/2$. Taking this as the first induction step, it is enough to prove that $\beta^{j+1}_0=0$ assuming that $\beta^j_0=0$. This is shown next:
\[
\begin{split}
&\beta^{j+1}_0=2\alpha_{j+1}+\sum_{i=1}^{j}\alpha_i\alpha_{j+1-i}=2\alpha_{j+1}+\sum_{i=1}^{r-1}\alpha_i\alpha_{r-i}\\
&\hspace{3cm} \quad\quad=2\alpha_{j+1}-2\alpha_r+2\alpha_r+\sum_{i=1}^{r-1}\alpha_i\alpha_{r-i}=2\alpha_{j+1}-2\alpha_r=0,
\end{split}
\]
where we have set $r=j+1.$ Hence the claim follows.
\end{proof}

Using similar arguments, one can prove that
\[
h\,\Delta_{+}^{_{1/2}}\Delta_{+}^{_{1/2}}y_k=\sum_{n=0}^{N-k}\alpha_n\big|_{_{\alpha=1/2}}\sum_{p=0}^{N-k-n}\alpha_p\big|_{_{\alpha=1/2}}y_{k+n+p}=-(y_{k+1}-y_k).
\]
It follows straightforwardly that
\begin{equation}\label{Ffrac:OneHalf}
\Delta_{-}^{_{1/2}}\Delta_{-}^{_{1/2}}x_k=\frac{x_k-x_{k-1}}{h},\,\quad \Delta_{+}^{_{1/2}}\Delta_{+}^{_{1/2}}y_k=-\frac{y_{k+1}-y_{k}}{h},
\end{equation}
showing that $\Delta_{-}^{_{1/2}}\Delta_{-}^{_{1/2}}x_k$, $\Delta_{+}^{_{1/2}}\Delta_{+}^{_{1/2}}y_k$ are the backward and forward (up to a minus sign) difference operators, respectively; thus order one approximations of the velocity. This leads to the following corollary of Theorem \ref{Theo:DiscEqs}.

\begin{corollary}\label{Corolario}
{\it Given $\alpha=1/2$, then \eqref{RestDEL:Dyn:a} is equivalent to the forced discrete Euler-Lagrange equations \eqref{ForcedDEL} when $f^{-}_{L_d}(x_k,x_{k+1})=0$, $f^+_{L_d}(x_k,x_{k+1})=-\rho\,(x_{k+1}-x_k).$ }
\end{corollary}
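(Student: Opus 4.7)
The plan is a direct computation: substitute $\alpha = 1/2$ into the restricted discrete equation \eqref{RestDEL:Dyn:a}, use Lemma \ref{LemmaDisc} to replace the double discrete fractional derivative by a backward finite difference, and then match the resulting expression term by term with \eqref{ForcedDEL} under the proposed choice of discrete forces.

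More concretely, by Lemma \ref{LemmaDisc} we have $h\,\Delta^{1/2}_{-}\Delta^{1/2}_{-}x_k = x_k - x_{k-1}$, so that \eqref{RestDEL:Dyn:a} at $\alpha = 1/2$ becomes
\[
D_1L_d(x_k,x_{k+1})+D_2L_d(x_{k-1},x_{k}) - \rho\,(x_k - x_{k-1}) = 0,
\]
after dividing the fractional term through by $h$ absorbed already in the statement of the lemma. On the other hand, inserting $f^{-}_{L_d}(x_k,x_{k+1}) = 0$ and $f^{+}_{L_d}(x_k,x_{k+1}) = -\rho\,(x_{k+1}-x_k)$ into \eqref{ForcedDEL} gives $f^{+}_{L_d}(x_{k-1},x_k) = -\rho\,(x_k - x_{k-1})$, so that \eqref{ForcedDEL} reads
\[
D_1L_d(x_k,x_{k+1})+D_2L_d(x_{k-1},x_{k}) - \rho\,(x_k - x_{k-1}) = 0,
\]
which coincides with the expression above.

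I do not expect any serious obstacle here: the heavy lifting has already been carried out in Lemma \ref{LemmaDisc}, which reduces the telescopic double sum to a one-step backward difference. The only care needed is in checking the shift of indices in the external force term, namely that the second force contribution in \eqref{ForcedDEL} is evaluated at $(x_{k-1}, x_k)$ rather than at $(x_k, x_{k+1})$; this is precisely what produces the combination $x_k - x_{k-1}$ and matches the backward character of the retarded fractional derivative. The equivalence is therefore an equality of the two sides, not merely a one-sided implication, so the corollary follows.
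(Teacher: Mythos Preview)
Your proposal is correct and follows essentially the same approach as the paper: substitute the identity from Lemma \ref{LemmaDisc} (equivalently, the first relation in \eqref{Ffrac:OneHalf}) into \eqref{RestDEL:Dyn:a} at $\alpha=1/2$, then compare term by term with \eqref{ForcedDEL} for the stated choice of discrete forces, noting in particular that the $f^{+}_{L_d}$ contribution enters at the shifted argument $(x_{k-1},x_k)$.
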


\begin{proof}
The result follows straightforwardly by replacing \eqref{Ffrac:OneHalf} (first relationship) in \eqref{RestDEL:Dyn:a} and comparing with \eqref{ForcedDEL} for $f^{-}_{L_d}(x_k,x_{k+1})=0$, $f^+_{L_d}(x_k,x_{k+1})=-\rho\,(x_{k+1}-x_k).$
\end{proof}
An equivalent result can be obtained for the $y$-mirror system setting $f_{L_d}^-(y_k,y_{k+1})=\rho(y_{k+1}-y_k)$ and $f_{L_d}^{+}(y_k,y_{k+1})=0$. This discussion makes explicit the relationship between the discretisation of the Lagrange-d'Alembert principle (\S\ref{DiscLdAlem}, \eqref{ForcedDEL}) and the restricted Hamilton's principle developed in this work, that we display in Figure \ref{Diagram1} (where we omit the $y$-system for sake of simplicity).


\begin{figure}[!htb]
\includegraphics[scale=0.42]{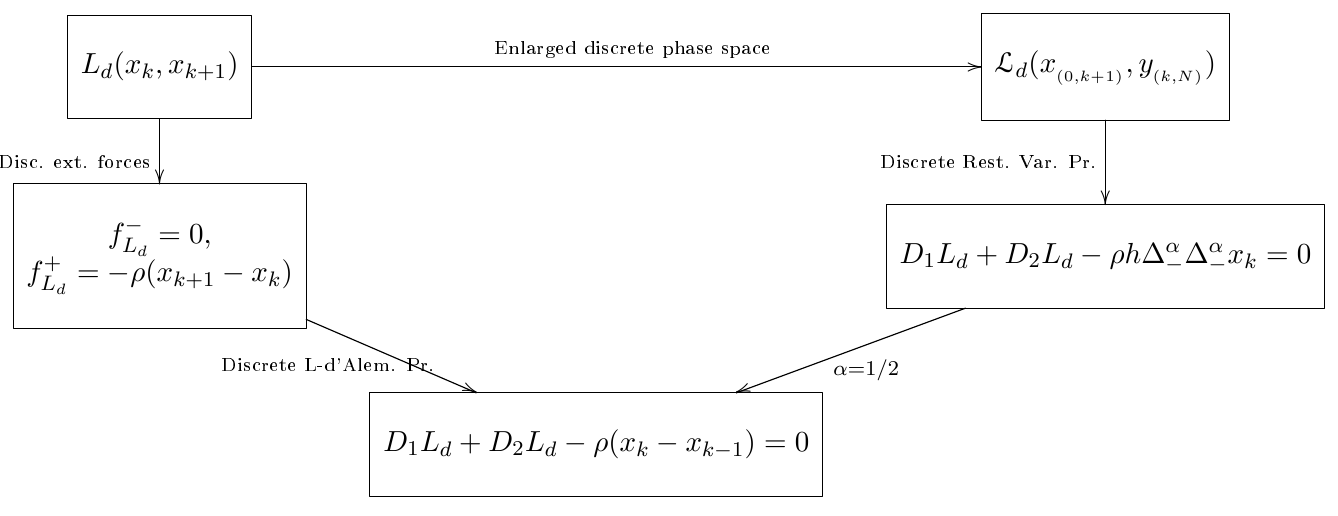}
\caption{In the diagram it is reflected that the Lagrange-d'Alembert principle for a general discrete Lagrangian and a particular set of discrete forces, produce the same discrete forced equations as enlarging the discrete phase space, then applying the discrete restricted Hamilton's principle for $\mathcal{L}_d^k$ \eqref{DiscLag:Dyn}, and then setting $\alpha=1/2$.}
\label{Diagram1}
\end{figure}

\subsection{Discrete Legendre transformation}\label{DLTra}

The main guidelines to construct the discrete Legendre transformation in the fractional case are the following:
\begin{itemize}
\item[1.] As in the usual discrete mechanics \S\ref{DiscreteFramework}, we want to reproduce the discrete Lagrangian dynamics \eqref{RestDEL:Dyn} through {\it momentum matching}.
\item[2.] We seek for a fair discretisation of the fractional Hamilton equations in the case of mechanical Hamiltonians \eqref{FracPartiHam}. 
\item[3.] We intend to obtain initialisation condition (Step 2) for Algorithm \ref{FractionalAlgorithm}.
\end{itemize}
According to this, we provide the following definition of the discrete Legendre transformation. Previously, we introduce the intermediate variables:
\[
x_k^{\alpha}:=\Delta^{\alpha}_{-}x_k,\quad\quad y_k^{\alpha}:=\Delta^{\alpha}_{+}y_k,
\]
and replace them into the Lagrangian \eqref{DiscLag:Dyn} in order to establish the following definition
\begin{equation}\label{DiscLagInterm}
\begin{split}
\mathcal{L}_d^k(x_{(0,k+1)},y_{(k,N)})&:=\tilde{\mathcal{L}}_d^k(x_k,x_{k+1},y_k,y_{k+1},x_k^{\alpha},y_k^{\alpha},x_{k+1}^{\alpha},y_{k+1}^{\alpha}).
\end{split}
\end{equation}
For the sake of generality, we allow the presence of $x_{k+1}^{\alpha}=\Delta_{-}^{\alpha}x_{k+1}$ and $y_{k+1}^{\alpha}=\Delta_{+}^{\alpha}y_{k+1}$, which are admissible as discussed in Remark \ref{Alphak+1}.

\begin{definition}\label{Def:DLT}
{\it Given the discrete Lagrangian \eqref{DiscLagInterm}, we define the} discrete Legendre transformation {\it by:}
\begin{subequations}\label{LegTrans}
\begin{align}
\begin{bmatrix}
p_{x_k}^{-}\\
p_{y_k}^{-}
\end{bmatrix}&=
-\begin{bmatrix}
\,\,D_{x_k}\tilde{\mathcal{L}}_d^k\,\,\\
\,\,D_{y_k}\tilde{\mathcal{L}}_d^k\,\,
\end{bmatrix}
\,\,\,\,-\,\begin{bmatrix}
\Delta_{-}^{\alpha} & \Delta_{+}^{\alpha}
\end{bmatrix}
\begin{bmatrix}
0&1\\
1&0
\end{bmatrix}
\begin{bmatrix}
\,\,D_{x_k^{\alpha}}\tilde{\mathcal{L}}_d^k\,\,\\
\,\,D_{y_k^{\alpha}}\tilde{\mathcal{L}}_d^k\,\,
\end{bmatrix}, \label{LegTrans:a}\\
\begin{bmatrix}
p_{x_{k+1}}^{+}\\
p_{y_{k+1}}^{+}
\end{bmatrix}&=\,\,\,\,\,
\begin{bmatrix}
\,\,D_{x_{k+1}}\tilde{\mathcal{L}}_d^k\,\,\\
\,\,D_{y_{k+1}}\tilde{\mathcal{L}}_d^k\,\,
\end{bmatrix}+\begin{bmatrix}
\Delta_{-}^{\alpha} & \Delta_{+}^{\alpha}
\end{bmatrix}
\begin{bmatrix}
0&1\\
1&0
\end{bmatrix}
\begin{bmatrix}
\,\,D_{x_{k+1}^{\alpha}}\tilde{\mathcal{L}}_d^k\,\,\\
\,\,D_{y_{k+1}^{\alpha}}\tilde{\mathcal{L}}_d^k\,\,
\end{bmatrix}
,\label{LegTrans:b}\\
\begin{bmatrix}
p_{x_{k}}^{\alpha}\\
p_{y_{k}}^{\alpha}
\end{bmatrix}&=\,\,\,\,\,
\begin{bmatrix}
\,\,D_{x_{k}^{\alpha}}\tilde{\mathcal{L}}_d^k\,\,\\
\,\,D_{y_{k}^{\alpha}}\tilde{\mathcal{L}}_d^k\,\,
\end{bmatrix} \label{LegTrans:c}
\end{align}
\end{subequations}
{\it where we consider the row matrix $[\Delta_{-}^{\alpha}\,\,  \Delta_{+}^{\alpha}]$ in the sense of operators.}
\end{definition}
\begin{proposition}\label{MMPropo}
Given the discrete Legendre transformation in Definition \eqref{Def:DLT}, the particular discrete Lagrangian $\tilde{\mathcal{L}}_d^k=L_d(x_k,x_{k+1})+L_d(y_k,y_{k+1})-h\,x_{k+1}^{\alpha}\,\,\rho\,\,y_{k+1}^{\alpha}$ and the intermediate variables $x_k^{\alpha}:=\Delta^{\alpha}_{-}x_k,\,\, y_k^{\alpha}:=\Delta^{\alpha}_{+}y_k,$ the following statements are true:
\begin{itemize}
\item[1.] The momentum matching condition $p_{x_k}^{-}=p_{x_k}^{+}$, $p_{y_k}^{-}=p_{y_k}^{+}$, is equivalent to the discrete Lagrangian dynamics \eqref{RestDEL:Dyn}.
\item[2.] Given the mechanical Hamiltonian \eqref{MechHamilto} and $L_d(z_k,z_{k+1})=\frac{1}{2h}(z_{k+1}-z_k)\,m\,(z_{k+1}-z_k)-h\,U(z_{k+1})$, i.e. we pick $\kappa=0$ in \eqref{KappaRule},\eqref{DiscMechLag} , then \eqref{LegTrans} provides a discretisation of \eqref{FracPartiHam}.
\item[3.] Under the hypotheses of Statement 2, when $\alpha= 1/2$ \eqref{LegTrans} provides a discretisation of \eqref{HamLinearDamp}, i.e. the Hamiltonian dynamics with linear damping.
\end{itemize}
\end{proposition}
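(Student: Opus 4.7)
The proof is a direct verification in three stages, one per statement. I would first tabulate the partial derivatives of $\tilde{\mathcal{L}}_d^k = L_d(x_k,x_{k+1}) + L_d(y_k,y_{k+1}) - h\,x_{k+1}^{\alpha}\,\rho\,y_{k+1}^{\alpha}$: the $L_d$ pieces only feed $D_{x_j}, D_{y_j}$ for $j\in\{k,k+1\}$, while the coupling term gives $D_{x_{k+1}^\alpha}\tilde{\mathcal{L}}_d^k = -h\rho\,y_{k+1}^\alpha$ and $D_{y_{k+1}^\alpha}\tilde{\mathcal{L}}_d^k = -h\rho\,x_{k+1}^\alpha$ (using that $\rho$ is diagonal), whereas $D_{x_k^\alpha}\tilde{\mathcal{L}}_d^k = D_{y_k^\alpha}\tilde{\mathcal{L}}_d^k = 0$. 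The one subtle point is that the operator-valued prefactor in \eqref{LegTrans} must be read component-wise: $\Delta_-^\alpha$ acts on the $x$-row and $\Delta_+^\alpha$ on the $y$-row after the swap performed by $\bigl[\begin{smallmatrix}0&1\\1&0\end{smallmatrix}\bigr]$.

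For Statement 1, plugging these derivatives into \eqref{LegTrans:a}--\eqref{LegTrans:b} and shifting the index by one so that $p_{x_k}^+$ comes from $\tilde{\mathcal{L}}_d^{k-1}$, I obtain
\[
p_{x_k}^- = -D_1L_d(x_k,x_{k+1}), \qquad p_{x_k}^+ = D_2L_d(x_{k-1},x_k) - h\rho\,\Delta_-^\alpha\Delta_-^\alpha x_k,
\]
where I used $\Delta_-^\alpha x_k^\alpha = \Delta_-^\alpha\Delta_-^\alpha x_k$. The matching $p_{x_k}^- = p_{x_k}^+$ is then precisely \eqref{RestDEL:Dyn:a}, and the parallel computation with $\Delta_+^\alpha$ in the $y$-row produces \eqref{RestDEL:Dyn:b}.

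For Statement 2, specializing to the mechanical $L_d$ with $\kappa=0$ gives $D_1L_d(z_k,z_{k+1}) = -m(z_{k+1}-z_k)/h$ and $D_2L_d(z_k,z_{k+1}) = m(z_{k+1}-z_k)/h - h\nabla U(z_{k+1})$. The LT then yields three ingredients: (i) a velocity relation $p_{x_k} = m(x_{k+1}-x_k)/h$, discretizing $\dot x = \partial H/\partial p_x$; (ii) using the equivalent coupling form $-hx_k^\alpha\rho y_k^\alpha$ allowed by Remark \ref{Alphak+1}, the constraint \eqref{LegTrans:c} becomes $p_{x_k}^\alpha = -h\rho\,y_k^\alpha$, discretizing $D_+^\alpha y = -\rho^{-1}p_x^\alpha$; (iii) combining the $p^+$-formula with momentum matching yields $(p_{x_{k+1}}-p_{x_k})/h = -\nabla U(x_{k+1}) - \rho\,\Delta_-^\alpha\Delta_-^\alpha x_{k+1}$, which after substituting (ii) is a discretization of $\dot p_x = -\nabla U(x) + D_-^\alpha p_y^\alpha$. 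The $y$-equations come by symmetry, and together these discretize the six components of \eqref{FracPartiHam}.

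For Statement 3, I invoke Lemma \ref{LemmaDisc} at $\alpha = 1/2$ to rewrite $h\,\Delta_-^{1/2}\Delta_-^{1/2}x_{k+1} = x_{k+1}-x_k$; using the velocity relation from (i), the damping term collapses to $\rho\,\Delta_-^{1/2}\Delta_-^{1/2}x_{k+1} = \rho\,m^{-1}p_{x_k}$, so the momentum update becomes $(p_{x_{k+1}}-p_{x_k})/h = -\nabla U(x_{k+1}) - \rho\,m^{-1}p_{x_k}$, a direct discretization of \eqref{HamLinearDamp}. The only non-routine aspect I expect is the middle statement: namely, justifying that the two admissible coupling forms ($-hx_k^\alpha\rho y_k^\alpha$ versus $-hx_{k+1}^\alpha\rho y_{k+1}^\alpha$) can be used interchangeably in the LT so that \eqref{LegTrans:c} reads off the continuous fractional Hamilton equation while still producing \eqref{RestDEL:Dyn} via momentum matching; everything else reduces to bookkeeping and substitution.
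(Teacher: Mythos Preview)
Your proposal is correct and follows essentially the same route as the paper: direct evaluation of the partial derivatives of $\tilde{\mathcal{L}}_d^k$, insertion into \eqref{LegTrans:a}--\eqref{LegTrans:c}, momentum matching for Statement~1, specialization to the mechanical $L_d$ for Statement~2, and Lemma~\ref{LemmaDisc} together with the velocity relation for Statement~3. The only cosmetic difference is in Statement~2(ii): you switch to the $k$-indexed coupling $-h\,x_k^\alpha\,\rho\,y_k^\alpha$ (invoking Remark~\ref{Alphak+1}) so that \eqref{LegTrans:c} is non-trivial, whereas the paper keeps the $(k{+}1)$-indexed coupling and instead reads \eqref{LegTrans:c} ``one step forward'', i.e.\ sets $p_{x_{k+1}}^\alpha = D_{x_{k+1}^\alpha}\tilde{\mathcal{L}}_d^k$; both maneuvers yield the same discretization of the fractional constraint, and you correctly identify this as the one point requiring justification.
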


\begin{proof}
Statement 1: from \eqref{LegTrans:a}, \eqref{LegTrans:b} and $\tilde{\mathcal{L}}_d^k=L_d(x_k,x_{k+1})+L_d(y_k,y_{k+1})-h\,x_{k+1}^{\alpha}\rho y_{k+1}^{\alpha}$ it follows:
\[
\begin{split}
\begin{bmatrix}
p_{x_k}^{-}\\
p_{y_k}^{-}
\end{bmatrix}&=
-\begin{bmatrix}
D_{1}L_d(x_k,x_{k+1})\\
D_{1}L_d(y_k,y_{k+1})
\end{bmatrix},\\
\begin{bmatrix}
p_{x_{k+1}}^{+}\\
p_{y_{k+1}}^{+}
\end{bmatrix}&=\,\,\,\,\,
\begin{bmatrix}
D_{2}L_d(x_k,x_{k+1})\\
D_{2}L_d(y_k,y_{k+1})
\end{bmatrix}-h\,\rho\,\begin{bmatrix}
\Delta^{\alpha}_{-}\,x_{k+1}^{\alpha}\\
\Delta^{\alpha}_{+}\,y_{k+1}^{\alpha}
\end{bmatrix}=\begin{bmatrix}
D_{2}L_d(x_k,x_{k+1})\\
D_{2}L_d(y_k,y_{k+1})
\end{bmatrix}
-h\,\rho\,\begin{bmatrix}
\Delta^{\alpha}_{-}\Delta^{\alpha}_{-}\,x_{k+1}\\
\Delta^{\alpha}_{+}\Delta^{\alpha}_{+}\,y_{k+1}
\end{bmatrix}.
\end{split}
\]
Now setting the momentum matching condition $p_{x_k}^{-}=p_{x_k}^{+}$, $p_{y_k}^{-}=p_{y_k}^{+}$ it is straightforward to obtain \eqref{RestDEL:Dyn}.

Statement 2: under the hypotheses, we have that \eqref{FracPartiHam} read

\begin{equation}\label{HamFinal}
\begin{split}
\dot x&=m^{-1}p_x,\quad\quad D^{\alpha}_{-}x=-\rho^{-1}\,p_y^{\alpha}, \quad\quad \dot p_x=-\nabla U(x)+D^{\alpha}_{-}\,p^{\alpha}_y,\\
\dot y&=m^{-1}p_y, \quad\quad
 D^{\alpha}_{+}y=-\rho^{-1}\,p_x^{\alpha}, \quad\quad \dot p_y=-\nabla U(y)+D^{\alpha}_{+}\,p^{\alpha}_x.
\end{split}
\end{equation}
On the other hand, from \eqref{LegTrans} we get

\[
\begin{split}
\begin{bmatrix}
p_{x_k}\\
p_{y_k}
\end{bmatrix}&=
\,\,\,\,\,\begin{bmatrix}
\,\,m\frac{x_{k+1}-x_k}{h}\,\,\\
\,\,m\frac{y_{k+1}-y_k}{h}\,\,
\end{bmatrix}, \\
\begin{bmatrix}
p_{x_{k+1}}\\
p_{y_{k+1}}
\end{bmatrix}&=\,\,\,\,\,\,
\begin{bmatrix}
\,\,m\frac{x_{k+1}-x_k}{h}-h\nabla U(x_{k+1})\,\,\\
\,\,m\frac{y_{k+1}-y_k}{h}-h\nabla U(y_{k+1})\,\,
\end{bmatrix}-\begin{bmatrix}
h\,\rho\,\Delta^{\alpha}_{-}\,x_{k+1}^{\alpha}\\
h\,\rho\,\Delta^{\alpha}_{+}\,y_{k+1}^{\alpha}
\end{bmatrix},\\
\begin{bmatrix}
p_{x_{k+1}}^{\alpha}\\
p_{y_{k+1}}^{\alpha}
\end{bmatrix}&=
-\begin{bmatrix}
\,\,h\,\rho\,y^{\alpha}_{k+1}\,\,\\
\,\,h\,\rho\,x^{\alpha}_{k+1}\,\,
\end{bmatrix} = -\begin{bmatrix}
\,\,h\,\rho\,\Delta^{\alpha}_{+}\,y_{k+1}\,\,\\
\,\,h\,\rho\,\Delta^{\alpha}_{-}\,x_{k+1}\,\,
\end{bmatrix}.
\end{split}
\]
(Observe that, in the last equation, we have taken a $k$-step forward in \eqref{LegTrans:c}). From this, splitting $x$ and $y$ sides and rearranging terms, we obtain
\begin{equation}\label{DiscHamFinal}
\begin{split}
x_{k+1}&=x_k+h\,m^{-1}p_{x_{k}},\,\,\,\,\Delta^{\alpha}_{-}\,x_{k+1}=-h^{-1}\rho^{-1}p_{y_{k+1}}^{\alpha},\,\,\,\,p_{x_{k+1}}=p_{x_k}-h\nabla U(x_{k+1})+\Delta^{\alpha}_{-}p^{\alpha}_{y_{k+1}},\\
y_{k+1}&=y_k+h\,m^{-1}p_{y_{k}},\,\,\,\, \Delta^{\alpha}_{+}\,y_{k+1}=-h^{-1}\rho^{-1}p_{x_{k+1}}^{\alpha},\,\,\,\,\,p_{y_{k+1}}=p_{y_k}-h\nabla U(y_{k+1})+\Delta^{\alpha}_{+}p^{\alpha}_{x_{k+1}},
\end{split}
\end{equation}
which are a natural discretisation of \eqref{HamFinal}.

Statement 3: we focus on the $x$-system in \eqref{DiscHamFinal}. Replacing the fractional equation into the third one, we obtain
\[
\begin{split}
x_{k+1}=x_k+h\,m^{-1}p_{x_{k}},\quad\quad p_{x_{k+1}}&=\,\,\,p_{x_k}-h\nabla U(x_{k+1})-h\,\rho\,\Delta^{_{1/2}}_{-}\Delta^{_{1/2}}_{-}x_{k+1}\\
&=^{_1}p_{x_k}-h\nabla U(x_{k+1})-\rho\,(x_{k+1}-x_{k})\\
&=^{_2}p_{x_k}-h\nabla U(x_{k+1})-h\,\rho\,m^{-1}\,p_{x_k},
\end{split}
\]
where, in $=^{_1}$ we have employed Lemma \ref{LemmaDisc} and in $=^{_2}$ we have employed the discrete $x$-dynamics, i.e. $x_{k+1}=x_{k}+h\,m^{-1}p_{x_{k}}$. Naturally, the previous equations are a discretisation of \eqref{HamLinearDamp}.
\end{proof}

Furthermore, the discrete Legendre transformation in Definition \ref{Def:DLT} provides a initialisation step $p_{x_0}=I(x_0,x_1)$ (Step 2) for Algorithm \ref{FractionalAlgorithm}. Namely, the $x$-part of \eqref{LegTrans:a} reads
\begin{equation}\label{InitialCond}
p_{x_k}=-D_{x_k}\tilde{\mathcal{L}}_d^k-\Delta_{-}^{\alpha} D_{x_k}^{\alpha}\tilde{\mathcal{L}}_d^k,
\end{equation}
which, for $k=0$, only involves $p_{x_0},x_0$ and $x_1$. For the particular $\tilde{\mathcal{L}}_d^k$ in the theorem above, the initial condition reads  $p_{x_0}=-D_{1}L_d(x_0,x_{1})$.

\begin{remark}
The matrix $\begin{bmatrix}0&1\\1&0\end{bmatrix}$ in second term of the right hand side of \eqref{LegTrans:a} and \eqref{LegTrans:b} obeys to the necessity of decoupling $x$ and $y$ dynamics at the discrete level, which we achieved by restricting the variations and setting the critical conditions as (only) sufficient in the Lagrangian side, as shown in Proposition \ref{FullLagProposition} and Remark \ref{RemarkLagrangianMixing}. In other words, it can be considered as a discrete Hamiltonian consequence of the restricted Hamilton's principle. \hfill $\diamond$
\end{remark}

\begin{remark}
It is interesting to note that the result remains the same for any $\tilde{\mathcal{L}}_d^k=L_d(x_k,x_{k+1})+L_d(y_k,y_{k+1})-\tilde\kappa\, h\,x_{k}^{\alpha}\,\,\rho\,\,y_{k}^{\alpha}-(1-\tilde\kappa)\, h\,x_{k+1}^{\alpha}\,\,\rho\,\,y_{k+1}^{\alpha}$,
with $\tilde\kappa\in[0,1]$, which is a way of rephrasing Remark \ref{Alphak+1}. However, the presence of $x^{\alpha}_k$ turns the initial condition \eqref{InitialCond} meaningless from a physical point of view, which makes convenient setting $\tilde\kappa=0$. In that case, the pick of $L_d(x_k,x_{k+1})$, which implies a particular choice of  $\kappa$ in \eqref{KappaRule} and 	\eqref{DiscMechLag}, leads to different discretisations of \eqref{HamFinal} and \eqref{HamLinearDamp}. We remark that the chosen one ($\kappa=0$) preserves the semi-implicitness of variables $x,p_x$ of the symplectic-Euler methods \cite{SS} for \eqref{HamLinearDamp}; say: the final integrator is explicit in the variable $p_x$ and implicit in the variable $x$.
\hfill $\diamond$
\end{remark}

\section{Numerical simulations}\label{Simu}

As a first test example, we employ the linearly damped harmonic oscillator with  potential function $U(x)=\mathfrak{c}x^2/2$ and dynamical equation (exactly solvable):
\begin{equation}\label{LinearDamping}
m\ddot x+\mathfrak{c}x+\rho\dot x=0,
\end{equation}
with $m=1$, $\mathfrak{c}=1$, $\rho=0.2$, $x(0)=1$ and $p_x(0)=\dot x(0)=0.5$ in the simulations.

\begin{figure}[!htb]
\includegraphics[scale=0.7]{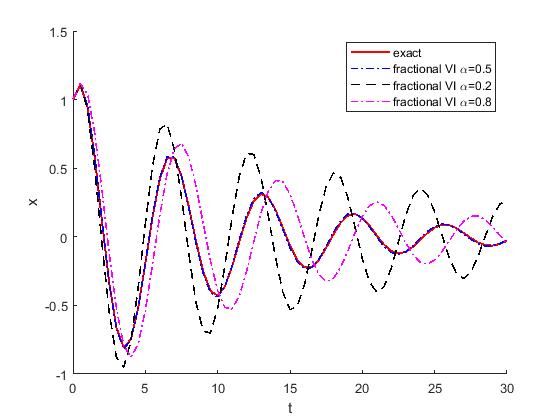}
\caption{Fractional Variational Integrators (FVIs), determined by  Algorithm \ref{FractionalAlgorithm}, for several values of $\alpha$, with $h=0.5$ and $N=30$. The Exact (red) line corresponds to the exact solution of \eqref{LinearDamping} for the given set of parameters and initial conditions.}
\label{PlotAlpha}
\end{figure}
\vspace{2cm}

In Figure \ref{PlotAlpha}, we show the outcome of Algorithm \ref{FractionalAlgorithm} with initial condition $p_{x_0}=-D_{1}L_d(x_0,x_{1})$ according to \eqref{InitialCond}, for several $\alpha$'s, where we choose $\kappa=1/2$ in \eqref{DiscMechLag} since it provides the midpoint rule for the potential and it is where the maximum local truncation order (namely 2) is achieved in usual low-order variational integrators \cite{MaWe}. We observe that the FVI approximates properly the solution of \eqref{LinearDamping} when $\alpha = 1/2$, which is natural since that is the case when \eqref{MechDynFracDamp:a} $\Flder$ \eqref{LinearDamping} (in other words,  $D^{_{1/2}}_{-}D^{_{1/2}}_{-}= d/dt$). Moreover, according to Corollary \ref{Corolario}, that is also the case when the FVI is equivalent to the Forced Variational Integrator (coming out of the discrete Lagrange-d'Alembert principle), Algorithm \ref{ForcedAlgorithm}, when $f_{L_d}^+(x_k,x_{k+1})=-\rho\,(x_{k+1}-x_k).$  This theoretical agreement is numerically tested (and shown) up to machine rounding error in Figure \ref{PositionFigure}  (Lower-Left plot), for the different implementations of both Algorithms \ref{FractionalAlgorithm} and \ref{ForcedAlgorithm}. We also show the comparison of the FVI to implicit and explicit Euler integrators for \eqref{HamLinearDamp}, choosing a smaller $h$ for the latter, which is necessary to obtain stable simulations for the explicit Euler scheme. In particular, we compare the $x$-trajectories in Figure~\ref{PositionFigure} (Upper plots) and the energy in Figure~\ref{EnergyFigure} (Left and Middle plots), where naturally, we define the continuous and discrete energies by $E(t)=p(t)^2/2m+\mathfrak{c}\,x(t)^2/2$ and $E_k=p_k^2/2m+\mathfrak{c}\,x_k^2/2$ for $k=0,...,N$; respectively. While implicit and explicit Euler artificially gains respectively looses energy, the FVI respects the energy decay due to the dissipation much better and is very close to the exact solution. 

We finally do a numerical convergence study by investigating the global error in both $x$ and $p_x$ variables (Lower-Right plot in Figure~\ref{PositionFigure}) as well as for the energy (Right plot in Figure~\ref{EnergyFigure}). Here, the global error is defined as
\[
\mbox{max}\,|x(t_k)-x_k|,\,\,\,\forall\,\, k,
\]
and equivalently for any other quantity.
For all quantities the convergence is of order $O(h^{0.94})$, i.e.~we obtain a convergence rate of approximately $1$.

\begin{figure}[!htb]
\centering
\begin{tabular}{cc}
\includegraphics[width=7cm]{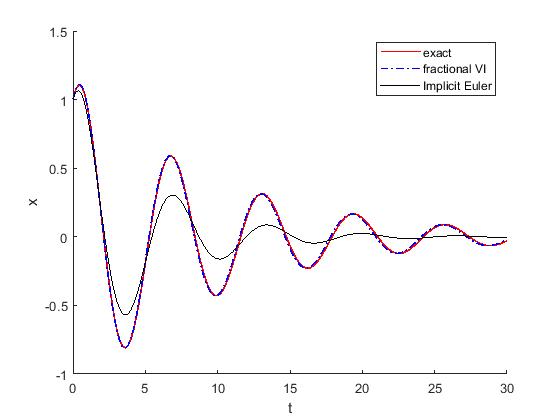}&\includegraphics[width=7cm]{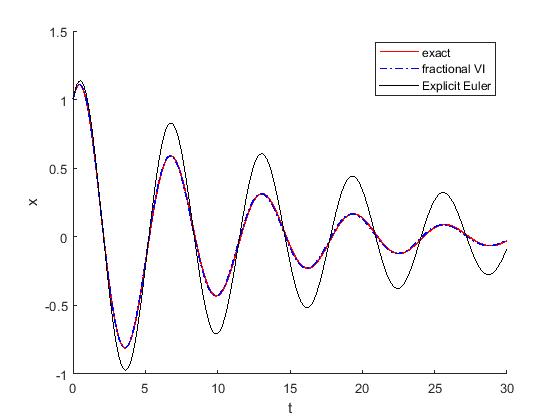}\\
\includegraphics[width=7cm]{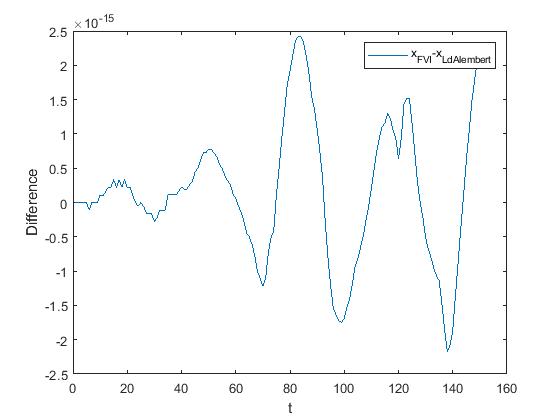} &\includegraphics[width=7cm]{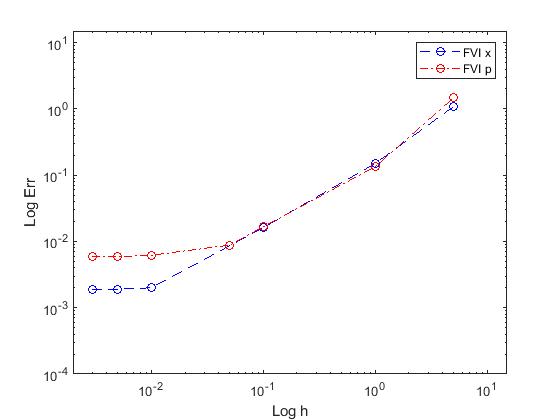}
\end{tabular}
\caption{Upper-Left: FVI vs. implicit Euler for $h=0.2$. Upper-Right: FVI vs. explicit Euler for $h=0.1$. Lower-Left: Difference FVI-Lagrange-d'Alembert integrators: $x_k^{_{FVI}}-x_k^{_{LdA}}$ for $h=0.2$. Lower-Right: Log-Log plot of the global error of $x$ and $p_x$ vs. the time step $h$ for FVI.}
\label{PositionFigure}
\end{figure}
\vspace{10cm}

\begin{figure}[!htb]
\centering
\begin{tabular}{ccc}
\includegraphics[width=5.5cm]{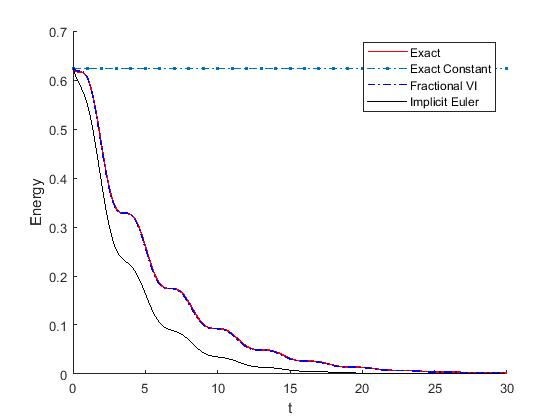}&\includegraphics[width=5.5cm]{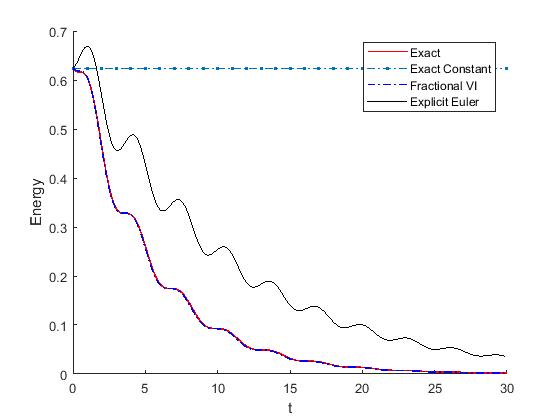}&\includegraphics[width=5.5cm]{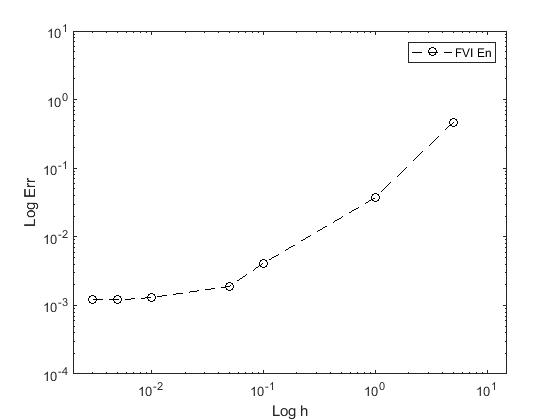}
\end{tabular}
\caption{Left: FVI vs. implicit Euler for $h=0.2$. Middle: FVI vs. explicit Euler for $h=0.1$. Right: Log-Log plot of the global error of the energy vs. the time step $h$ for FVI.}
\label{EnergyFigure}
\end{figure}

As for $\alpha\neq1/2$, we employ the following test example:
\begin{equation}\label{BenchExample}
\ddot x+x+D^{_{3/2}}_{-}x=F(t),\quad F(t)=8\,\,\mbox{for}\,\,0\leq t\leq 1,\,\,F(t)=0\,\,\mbox{for}\,\,t>1,
\end{equation}
and initial conditions $x(0)=\dot x(0)=0$. This equation corresponds to \eqref{MechDynFracDamp:a} when $U(x)=x^2/2$, $m=\rho=1$, $\alpha=3/4$ and we add the inhomogeneous external force $F(t)$ in the right hand side (which can be easily carried out by equating the virtual work of such a force and the variation of the actions \eqref{ContAction} and \eqref{DiscAct}, in continuous and discrete scenarios, respectively, under restricted calculus of variations). Equation \eqref{BenchExample} has an exact solution\footnote{Particularly, it is a so-called {\it inhomogeneous sequential fractional differential equation}, i.e.
\[
\lp D^{n/q}_{-}+a_1D^{(n-1)/q}_{-}+\cdots+a_nD^0_{-}\rp\,x(t)=F(t),
\]
with $(n,q)=(4,2)$, $a_1=a_4=1$, $a_2=a_3=0$ \cite{MiRo}.}, but it is of difficult implementation. For that reason, we employ the benchmark numerical solution designed for MatLab in \cite{Vinagre}. We display the performance of the FVI versus the benchmark solution (with a much smaller time step) (Left plot in Figure~\ref{BenchFigure}) and observe a global convergence of order $O(h^{0.97})$ (Right plot in Figure~\ref{BenchFigure}), i.e.~a convergence rate of approximately $1$ as for the $\alpha=1/2$ case.
\begin{figure}[!htb]
\centering
\begin{tabular}{cc}
\includegraphics[width=7.5cm]{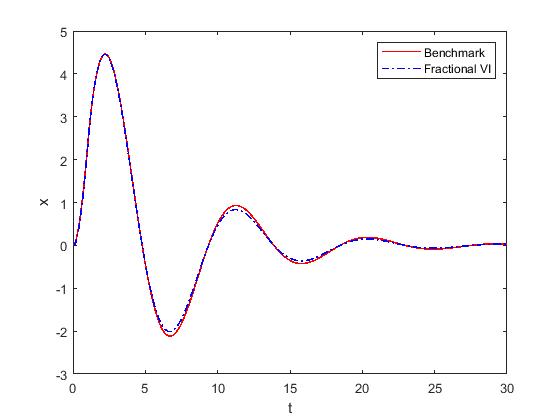}&\includegraphics[width=7.5cm]{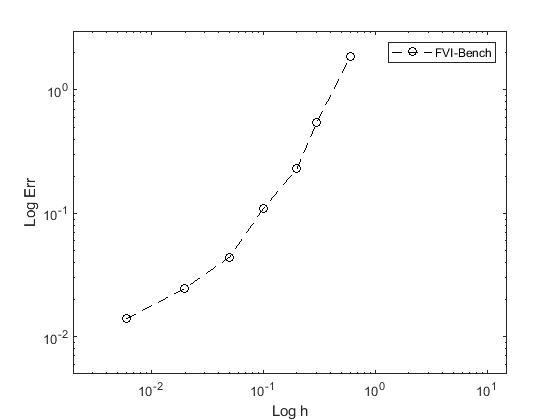}
\end{tabular}
\caption{Left: Benchmark solution ($h=5\times 10^{-3}$) vs. FVI ($h=10^{-1}$). Right: Log-Log plot of the global error of $x$ vs. the time step $h$ (with the benchmark solution as reference).}
\label{BenchFigure}
\end{figure}

\section{Conclusions}

We have developed a restricted Hamilton's principle  providing the dynamics of Lagrangian systems subject to  fractional damping \eqref{IntEqCont} (continuous setting), \eqref{IntEqDisc} (discrete setting),   as sufficient conditions for the extremals of the action. The discrete dynamics \eqref{IntEqDisc} is the result of the discretisation of the mentioned restricted Hamilton's principle (instead of the discretisation of the equations \eqref{IntEqCont} themselves), following the spirit of discrete mechanics and variational integrators \cite{MaWe}. As it is well-know, the variational principles and preservation 
properties (symplecticity, Noether's theorem \cite{AbMa}) of the generated dynamics are closely related; we find a particular example in our approach, say the preservation of the presymplectic form $\Omega$ \eqref{TwoForm}, as proven in Proposition \ref{GeomPreservation}. Nevertheless, the dynamical importance of $\Omega$ has to be clarified in future work. This two-form is defined on the fractional state space, Definition \ref{FracPhSp}, which is designed to accomodate the fractional derivatives. It is a vector bundle over the real space, particularly $\R^d\times\R^d$, which is necessary due to the unclear unique definition of fractional derivatives \eqref{FracDer} on a general smooth manifold $Q$. From the geometrical perspective, an interesting challenge for future work is to carry out this generalisation.

The discretisation of the restricted fractional principle leads to the discrete equations \eqref{IntEqDisc}, and to what we denote Fractional Variational Integrators FVIs, via the Algorithm \ref{FractionalAlgorithm}. When $\alpha=1/2$, the theoretical local truncation order of FVIs is 1 \cite{JiObNum}, which is consistent with the observed global convergence order, i.e. $O(h^{0.94})$, in Figure \ref{PositionFigure}, Lower-Right plot. However, this global convergence and the local truncation order 1 represents an improvement from what is expected from the order theorems in \cite{MaWe,PaCu}, since, as proven in \cite{FracOrder}, $\Delta^{\alpha}_{-}x_k$ \eqref{DiscFracDerDef:1}  is only a consistent approximation of $D^{\alpha}_{-}x(t_k)$ \eqref{FracDer:RL}, and thus one would expect a slower global convergence. This is an interesting phenomenom to explore in future works. Moreover, the performance of FVI is proven to be superior to other order 1 methods, such as both Euler's. This is particularly evident with respect to the energy, as shown in Figure \ref{EnergyFigure}, accounting for another example of the superior performance of integrators with variational origin in this aspect. In the context of this work, this is furthermore explained thanks to the relationship between the FVI and the Forced Variational Integrators, Algorithm \ref{ForcedAlgorithm}, obtained from the discrete Lagrange-d'Alembert principle, which is established in Corollary \ref{Corolario}. Naturally, we also test the FVI for $\alpha\neq 1/2$.

We notice that the chosen discretisations limit the local truncation order of the FVIs and their global convergence. Thus, as a natural extension of this work, we intend to carry out the application of higher-order techniques, as introduced in \cite{O14,SinaSaake}.

\section*{Appendix: Existence and uniqueness of solutions of fractional differential equations \eqref{ResFracELDamped} }

We study the existence and uniqueness of solutions of \eqref{ResFracELDamped}. It has been proven, Proposition \ref{InvTime}, that $x$ and $y$ systems are equivalent in reversed directions of time. Thus, we focus on the $x$-system and set $d=1$ for simplicity. We are considering $L:T\R\Flder\R$ a $C^2$ function; furthermore we shall consider $\lp\der^2L/\der\dot x\der\dot x\rp^{-1}$  smooth. All in all, \eqref{ResFracELDamped:a}, can be expressed as
\begin{equation}\label{System}
\begin{split}
\dot x&=v,\\
\dot v&=f(x,v)+\bar\rho(x,v)D^{\beta}_{-}x,
\end{split}
\end{equation}
with initial condition $x(t_0)=x_0,\,v(t_0)=v_0 $, $\beta=2\alpha$ and $f,\bar\rho:\R^2\Flder\R$  given by 
\[
f(x,v)=\lp\frac{\der^2L}{\der v\der v}\rp^{-1}\lp-\frac{\der^2L}{\der x\der v}v+\frac{\der L}{\der x}\rp,\quad\quad \bar\rho(x,v)=-\lp\frac{\der^2L}{\der v\der v}\rp^{-1}\rho,
\]
with $\rho\in\R_+$, adding up for the vector field $F:\R^2\Flder\R^2$
\begin{equation}\label{VectorField}
F(x,y):=\lp v\,\,,\,\, f(x,v)+\bar\rho(x,v)D^{\beta}_{-}x\rp.
\end{equation}
In proving the existence and uniqueness of solutions of \eqref{System}, we shall take a local approach; in particular we consider the set $\mathcal{B}=I_t\times I_x\times I_v\subset \R^3$, with $I_t=[t_0-\delta_t, t_0+\delta_t],\,I_x=[x_0-\delta_x, x_0+\delta_x],\, I_v=[v_0-\delta_v,v_0+\delta_v]$, $I_t\subset[a,b]$ and $\delta_t,\delta_x,\delta_v\in\R_+$. We consider $\R^d$ as a Banach space equipped with the norm 
\[
||\mathfrak{v}||=\mbox{max}\lc |\mathfrak{v}_1|,|\mathfrak{v}_2|,...,|\mathfrak{v}_d|\rc,\quad \mathfrak{v}\in\R^d.
\]
Given that, we define the cylinder $\mathcal{C}=I_t\times \mathcal{D}$, with $\mathcal{D}=\lc(x,v)\in\R^2\,|\,||(x,v)-(x_0,v_0)||\leq \mathfrak{b}\rc$, $\mathfrak{b}=||(\delta_x,\delta_v)||.$ We establish the following hypotheses:
\begin{itemize}
\item[H1.] $||f(x,v)-f(\tilde x,\tilde v)||\leq K||(x,v)-(\tilde x,\tilde v)||$ in $\mathcal{D}$ for $K\in\R_+$.
\item[H2.] $\bar\rho$ is continuous in $\mathcal{D}$.
\end{itemize}
Note that these two hypotheses follow directly from the assumptions over $L$ and $\lp\der^2L/\der\dot x\der\dot x\rp^{-1}$, say they are $C^2$ and smooth, respectively. Given this, our strategy is to prove that the vector field \eqref{VectorField} satisfy the required conditions to apply both Peano and Picard-Lindel\"of theorems  \cite{BookODEs}, ensuring the existence and uniqueness of solutions of \eqref{System} in $\tilde{\mathcal{C}}=\tilde I_t\times \mathcal{D}$, with $\tilde I_t=[t_0-\tilde\delta_t,t_0+\tilde\delta_t]$, $\tilde\delta_t=\,$min$\lc\delta_t\,,\,\mathfrak{b}/M\rc$, where $M\in\R_+$ is constructed in the proof below. Moreover, this solution $(x(t),v(t))$ can be set smooth by establishing the proper Banach space of functions when applying the aforementioned theorems.
\begin{proposition}
Given the hypotheses {\rm H1, H2}, the following is true: $F:\mathcal{D}\Flder\R^2$
\begin{enumerate}
\item   is bounded.
\item  is Lipschitz continuous.
\end{enumerate}
\end{proposition}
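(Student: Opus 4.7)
The plan is to handle the two statements in parallel, treating the non-local term $D^{\beta}_{-}x$ as a uniformly controlled quantity on the cylinder $\mathcal{C}=I_t\times\mathcal{D}$, once the candidate curve $x(\cdot)$ is restricted to take values in $I_x$ on $[a,t]$. The first components of $F$ are harmless: $|v|\le|v_0|+\delta_v$ on $\mathcal{D}$, and $v\mapsto v$ is $1$-Lipschitz, so the task reduces to the scalar quantity $g(x,v):=f(x,v)+\bar\rho(x,v)D^{\beta}_{-}x$.

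For boundedness, first observe that H1 makes $f$ continuous on the compact set $\mathcal{D}$, hence bounded by some $M_f$, and H2 directly gives $|\bar\rho|\le M_{\bar\rho}$ on $\mathcal{D}$. To bound the fractional term, I would invoke the Riemann–Liouville/Caputo relation \eqref{Relation}:
\[
D^{\beta}_{-}x(t)=-\frac{x(a)}{\Gamma(1-\beta)(t-a)^{\beta}}+\frac{1}{\Gamma(1-\beta)}\int_{a}^{t}(t-\tau)^{-\beta}\dot x(\tau)\,d\tau.
\]
Since $\dot x=v$ stays in $I_v$ along the candidate curve, $|\dot x(\tau)|\le|v_0|+\delta_v$, and since $|x(a)|$ is fixed initial data with $I_t\subset[a,b]$ bounded away from $a$ (after shrinking $\delta_t$ if needed), both terms are uniformly bounded on $I_t$ by some $M_D$. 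Combining gives $\|F\|\le M:=\max\{|v_0|+\delta_v,\,M_f+M_{\bar\rho}M_D\}$, which is exactly the constant needed to define $\tilde\delta_t=\min\{\delta_t,\mathfrak{b}/M\}$ for the Peano cylinder argument.

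For Lipschitz continuity, H1 gives the bound $|f(x,v)-f(\tilde x,\tilde v)|\le K\|(x,v)-(\tilde x,\tilde v)\|$ directly. The term $\bar\rho(x,v)D^{\beta}_{-}x$ I would split in the standard way:
\[
\bar\rho(x,v)D^{\beta}_{-}x-\bar\rho(\tilde x,\tilde v)D^{\beta}_{-}\tilde x=\bigl[\bar\rho(x,v)-\bar\rho(\tilde x,\tilde v)\bigr]D^{\beta}_{-}x+\bar\rho(\tilde x,\tilde v)\bigl[D^{\beta}_{-}x-D^{\beta}_{-}\tilde x\bigr].
\]
The first summand is controlled because, as remarked after H2, $\bar\rho$ is actually smooth (it is built from a $C^{2}$ Lagrangian and a smooth inverse Hessian), so on the compact $\mathcal{D}$ it admits a Lipschitz constant $K_{\bar\rho}$; together with the bound $M_D$ from step~1 this gives $K_{\bar\rho}M_D\|(x,v)-(\tilde x,\tilde v)\|$. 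The second summand uses the linearity of $D^{\beta}_{-}$ to write $D^{\beta}_{-}x-D^{\beta}_{-}\tilde x=D^{\beta}_{-}(x-\tilde x)$, bounded by the same Caputo-type estimate applied to the difference curve.

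The main obstacle is conceptual rather than technical: the operator $D^{\beta}_{-}$ is non-local in time, so literally speaking $F$ is not a function of $(x,v)$ alone but of the entire past trajectory $x|_{[a,t]}$. The Lipschitz estimate for the second summand above therefore yields a bound in terms of $\sup_{[a,t]}|x-\tilde x|$ rather than the pointwise value $|x(t)-\tilde x(t)|$; this is exactly the form required to run a Picard iteration in the Banach space of continuous curves equipped with the sup norm, and it is in that functional-analytic sense that the statement of the proposition has to be read. Once this reinterpretation is granted, Peano's theorem then delivers existence and Picard–Lindel\"of uniqueness on $\tilde{\mathcal{C}}$, and the smoothness of the solution follows by bootstrapping on the integral form once one fixes the Banach space of functions suitably.
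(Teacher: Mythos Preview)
Your argument is correct and follows the same overall strategy as the paper: bound $|v|$ and $|f|$ by compactness, bound $|\bar\rho|$ by H2, and control $D^{\beta}_{-}x$ through the Caputo representation with $\dot x=v\in I_v$; then repeat for the Lipschitz estimate. Two points of execution differ from the paper and are worth noting. First, the paper does not keep the boundary term $x(a)/(t-a)^{\beta}$: it invokes the freedom (discussed earlier in the paper) to set the initial position to zero, so only the Caputo integral survives and no shrinking of $\delta_t$ away from $a$ is needed; your version works too but introduces an extra hypothesis on $I_t$. Second, in the Lipschitz step the paper goes directly from $|\bar\rho(x,v)D^{\beta}_{-}x-\bar\rho(\tilde x,\tilde v)D^{\beta}_{-}\tilde x|$ to $M_{\bar\rho}|D^{\beta}_{-}x-D^{\beta}_{-}\tilde x|$, which is strictly justified only when $\bar\rho$ is constant (as it is for the mechanical Lagrangians of interest, where $\partial^{2}L/\partial v^{2}=m$); your full product splitting, using a Lipschitz constant $K_{\bar\rho}$ for $\bar\rho$ on the compact $\mathcal{D}$, is the correct argument under the stated generality of H1--H2. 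Your closing remark about the non-locality of $D^{\beta}_{-}$ and the need to read the Lipschitz estimate in the sup-norm on curves is exactly the right caveat; the paper leaves this implicit.
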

\begin{proof}

(1) Let $(x,v)\in \mathcal{D}$, then $||F(x,v)||=$max $\lc |v|, |f(x,v)+\bar\rho(x,v)D^{\beta}_{-}x|\rc$. On the one hand, $|v|\leq |v_0+\mathfrak{b}|<\infty$. On the other, by H1 $f$ is continuous in $\mathcal{D}$ and therefore  $|f(x,v)|\leq M_f<\infty$. By H2, $\bar\rho$ is also continuous; thus $|\bar\rho(x,v)|\leq M_{\bar\rho}<\infty$. Now, we shall use the Caputo definition \eqref{FracDer:Ca} of the fractional derivative in \eqref{VectorField}, just by using the relationship \eqref{Relation} and setting $x_0=0.$ With that, in $I_t\times\mathcal{D}$ we have
\[
\begin{split}
&|f(x,v)+\bar\rho(x,v)D^{\beta}_{-}x|\leq|f(x,v)|+|\bar\rho(x,v)||D^{\beta}_{-}x|\leq M_{f}+M_{\bar\rho}|D^{\beta}_{-}x|\\
=&M_{f}+\frac{M_{\bar\rho}}{\Gamma(1-\beta)}\Big|\int_{t_0}^t(t-\tau)^{-\beta}\dot x(\tau)d\tau\Big|=M_{f}+\frac{M_{\bar\rho}}{\Gamma(1-\beta)}\Big|\int_{t_0}^t(t-\tau)^{-\beta}v(\tau)d\tau\Big|\\
\leq & M_{f}+\frac{M_{\bar\rho}}{\Gamma(1-\beta)}\int_{t_0}^t|(t-\tau)^{-\beta}| |v(\tau)|d\tau\leq M_{f}+\frac{M_{\bar\rho}|v_0+\mathfrak{b}|}{\Gamma(1-\beta)}\int_{t_0}^t|(t-\tau)^{-\beta}|d\tau\\
\leq & M_{f}+\frac{M_{\bar\rho}|v_0+\mathfrak{b}|}{\Gamma(1-\beta)}\delta_t^{1-\beta}<\infty,
\end{split}
\]
where, according to $\beta=2\alpha$, we consider $\alpha\in[0,1/2]$, since $\beta$ must be less of equal to 1\footnote{Observe that this limits the original range of $\alpha$, i.e.~ $\alpha\in[0,1]$. Allowing that, the displayed proof of existence and uniqueness is not valid as long as we admit $\delta_t\Flder 0$, which does not mean that such existence and uniqueness of solutions may be proven in a different way.}. From this, it follows that $||F(x,v)||\leq M<\infty,$ where $M=\lc |v_0+\mathfrak{b}|\,\,\mbox{or}\,\,M_{f}+\frac{M_{\bar\rho}|v_0+\mathfrak{b}|}{\Gamma(1-\beta)}\delta_t^{1-\beta}\rc$, depending on whether the max $\lc |v|, |f(x,v)+\bar\rho(x,v)D^{\beta}_{-}x|\rc$ is achieved in the first or second entry.
\\

(2) We have that
\begin{equation}\label{Norm}
||F(x,v)-F(\tilde x,\tilde v)||=||\big( v-\tilde v,\,f(x,v)-f(\tilde x,\tilde v)+\bar\rho(x,v)D^{\beta}_{-}x-\bar\rho(\tilde x,\tilde v)D^{\beta}_{-}\tilde x\big)||,
\end{equation}
which is equal to the maximum of the absolute value of both entries. On the one hand 
\begin{equation}\label{ByConstr}
|v-\tilde v|\leq ||(x,y)-(\tilde x,\tilde v)||, 
\end{equation}
by construction. On the other
\[
\begin{split}
&|f(x,v)-f(\tilde x,\tilde v)+\bar\rho(x,v)D^{\beta}_{-}x-\bar\rho(\tilde x,\tilde v)D^{\beta}_{-}\tilde x|\leq |f(x,v)-f(\tilde x,\tilde v)|+|\bar\rho(x,v)D^{\beta}_{-}x-\bar\rho(\tilde x,\tilde v)D^{\beta}_{-}\tilde x|\\
\leq& K||(x,y)-(\tilde x,\tilde v)||+M_{\bar\rho}|D^{\beta}_{-}x-D^{\beta}_{-}\tilde x|=K||(x,y)-(\tilde x,\tilde v)||+\frac{M_{\bar\rho}}{\Gamma(1-\beta)}\big|\int_{t_0}^{t}(t-\tau)^{-\beta}(v(\tau)-\tilde v(\tau))d\tau\big|\\
\leq&K||(x,y)-(\tilde x,\tilde v)||+\frac{M_{\bar\rho}}{\Gamma(1-\beta)}||(x,y)-(\tilde x,\tilde v)||\int_{t_0}^t|(t-\tau)^{-\beta}|d\tau\\
\leq& K||(x,y)-(\tilde x,\tilde v)||+\frac{M_{\bar\rho}\delta_t^{1-\beta}}{\Gamma(1-\beta)}||(x,y)-(\tilde x,\tilde v)||=
\lp K+\frac{M_{\bar\rho}\delta_t^{1-\beta}}{\Gamma(1-\beta)}\rp ||(x,y)-(\tilde x,\tilde v)||,
\end{split}
\]
where we have used H1, H2 and \eqref{ByConstr}. Thus, it follows that
\[
||F(x,v)-F(\tilde x,\tilde v)||\leq\tilde K\,||(x,y)-(\tilde x,\tilde v)||,
\]
where $\tilde K=\lc 1\,\mbox{or}\,K+\frac{M_{\bar\rho}\delta_t^{1-\beta}}{\Gamma(1-\beta)}\rc$, depending on whether the maximum on the right hand side of \eqref{Norm} is achieved on the first or second entry. In both cases, $\tilde K\in\R_+.$
\end{proof}

\medskip\medskip

{\bf Acknowledgments}: This work has been funded by the EPSRC project: `'Fractional Variational Integration and Optimal Control''; ref: EP/P020402/1. FJ thanks Farhang Haddad Farshi for his assistance in the design of Figure \ref{Curves}.

\end{document}